\newcommand\addauthornote[1]{%
  \if@ACM@anonymous\else
    \g@addto@macro\addresses{\g@addto@macro\@currentauthors{{\small *}}}
  \fi}
\newcommand\@addauthornotemark[1]{\let\@tmpcnta\0
    \g@addto@macro\@currentauthors{\footnotemark\relax\let\0\@tmpcnta}}
\newcommand{\iflabeldefined}[3]{%
  \ifcsname r@#1\endcsname
    #2% If the label is defined
  \else
    #3% If the label is undefined
  \fi
}
\pgfplotsset{width=10cm,compat=1.9}
\DeclareFontFamily{U}{mathb}{\hyphenchar\font45}
\DeclareFontShape{U}{mathb}{m}{n}{
      <5> <6> <7> <8> <9> <10> gen * mathb
      <10.95> mathb10 <12> <14.4> <17.28> <20.74> <24.88> mathb12
      }{}
\DeclareSymbolFont{mathb}{U}{mathb}{m}{n}
\DeclareMathSymbol{\sqsubsetneq}    {3}{mathb}{"88}
\DeclareMathSymbol{\sqsupsetneq}    {3}{mathb}{"89}
\DeclareMathSymbol{\varsqsubsetneq} {3}{mathb}{"8A}
\DeclareMathSymbol{\varsqsupsetneq} {3}{mathb}{"8B}
\newcommand{\rone}{(\emph{i})~}
\newcommand{\rtwo}{(\emph{ii})~}
\newcommand{\rthree}{(\emph{iii})~}
\newcommand{\rfour}{(\emph{iv})~}
\newcommand{\grammar}{G}
\newcommand{\examples}{\mathcal{E}}
\newcommand{\proofspacing}{0.5em}
\newcommand{\dens}[1]{\den{#1}}
\newcommand{\den}[1]{\llbracket #1 \rrbracket}
\newcommand{\dennon}[2]{\llbracket #2 \rrbracket_{#1}}
\newcommand{\denanon}[2]{\llbracket #2 \rrbracket_{#1}^\sharp}
\newcommand{\aden}[1]{\llbracket #1 \rrbracket^\sharp}
\newcommand{\dena}[1]{{\den{#1}}^\sharp}
\newcommand{\totalbenchmarks}{430}
\newcommand{\LL}{\mathcal L}
\newcommand{\lowerb}{\textsc{l}}
\newcommand{\upper}{\textsc{u}}
\newcommand{\semgus}{SemGuS\xspace}
\newcommand{\sygus}{SyGuS\xspace}
\newcommand{\toolname}{\textsc{Moito}\xspace}
\newcommand{\baseline}{\textsc{baseline}\xspace}
\newcommand{\virtualbest}{\textsc{best}\xspace}
\newcommand{\toolnamegfa}{\textsc{Moito-n}\xspace}
\newcommand{\toolnamenogfa}{\textsc{Moito-h}\xspace}
\newcommand{\code}[1]{\texttt{#1}}
\newcommand{\rc}[1]{\texttt{#1}} % regex character
\newcommand{\mz}{\textbf{0}}
\newcommand{\mo}{\textbf{1}}
\newcommand{\mi}{\textbf{I}}
\newcommand{\hole}{\square}
\newcommand{\monup}{\uparrow}
\newcommand{\mondown}{\downarrow}
\newcommand{\monvec}{m}
\newcommand{\AlphaId}{{\textit{Alpha}}}
\newcommand{\NumId}{{\textit{Num}}}
\newcommand{\RowId}{{\textit{Row}}}
\newcommand{\IntId}{{\textit{Int}}}
\definecolor{forest}{RGB}{83,147,49}
\newcommand{\changed}[1]{{{#1}}}
\definecolor{bottlegreen}{rgb}{0.0,0.42,0.31}
\begin{document}

\title{Automating Pruning in Top-Down Enumeration for Program Synthesis Problems with Monotonic Semantics}

\author{Keith J.C. Johnson}
\authornote{Both authors contributed equally to the paper.}
\orcid{0000-0002-3766-5204}             %% \orcid is optional
\affiliation{
  \institution{University of Wisconsin-Madison}
  \country{USA}                    %% \country is recommended
}
\email{keithj@cs.wisc.edu}

\author{Rahul Krishnan}
\addauthornote{1}
\orcid{0000-0003-0230-5185}             %% \orcid is optional
\affiliation{
  \institution{University of Wisconsin-Madison}
  \country{USA}                    %% \country is recommended
}
\email{rahulk@cs.wisc.edu}

\author{Thomas Reps}
\orcid{0000-0002-5676-9949}             %% \orcid is optional
\affiliation{
  \institution{University of Wisconsin-Madison}
  \country{USA}                    %% \country is recommended
}
\email{reps@cs.wisc.edu}

\author{Loris D'Antoni}
\orcid{0000-0001-9625-4037}             %% \orcid is optional
\affiliation{
  \institution{University of California, San Diego}
  \country{USA}                    %% \country is recommended
}
\email{ldantoni@ucsd.edu}

\begin{abstract}
% %
In top-down enumeration for program synthesis, abstraction-based pruning uses an abstract domain to approximate the set of possible values that a partial program, when completed, can output on a given input.
If the set does not contain the desired output, the partial program and all its possible completions can be pruned.
In its general form, abstraction-based pruning requires manually designed, domain-specific abstract domains and semantics, and thus has only been used in domain-specific synthesizers.

This paper provides sufficient conditions under which a form of abstraction-based pruning can be automated for arbitrary synthesis problems in the general-purpose Semantics-Guided Synthesis (\semgus) framework without requiring manually-defined abstract domains.
We show that if the semantics of the language for which we are synthesizing programs exhibits some monotonicity properties, one can obtain an abstract interval-based semantics for free from the concrete semantics of the programming language, and use such semantics to effectively prune the search space.
We also identify a condition that ensures
% demonstrate that
such abstract semantics can be used to compute a precise abstraction of the set of values that a program derivable from a given hole in a partial program can produce.
These precise abstractions make abstraction-based pruning more effective.

We implement our approach in a tool, \toolname, which can tackle synthesis problems defined in the \semgus framework.
\toolname can automate interval-based pruning without any a-priori knowledge of the problem domain, and solve synthesis problems that previously required domain-specific, abstraction-based synthesizers---e.g., synthesis of regular expressions, CSV file schema, and imperative programs from examples.
\end{abstract}

\maketitle

\section{Introduction}
\label{sec:introduction}

The goal of program synthesis is to find a program that satisfies a given specification, typically given as a logical formula or a set of input-output examples.
A simple synthesis technique is \textit{top-down enumeration}, where one enumerates programs derivable from a regular-tree grammar---i.e., the search space of the synthesis problem---by iteratively expanding \textit{partial programs}---i.e., syntax trees that contain one or more \textit{hole} symbols to be filled by as-yet undetermined sub-trees---using productions in the grammar.
% 
% One of the limitations of top-down enumeration, when compared to its bottom-up counterpart, is that it has to enumerate both complete and partial programs, thus resulting in some extra programs being considered~\cite{mybook}.
%
Top-down enumeration is effective in practice when one can aggressively prune the search space of enumerated programs.
Pruning is done by discarding partial programs that cannot be completed to yield a program that satisfies the given specification.
% 
% When such programs are identified, they can be discarded/pruned, together with their infinitely many possible completions.

\subsubsection*{Domain-Specific Pruning Strategies}
% Designing pruning strategies for top-down enumeration has often been treated as an art form where skilled engineers identify clever domain-specific insights for proving when partial programs will not succeed.
% 
Clever pruning strategies have resulted in top-down enumeration approaches that can quickly synthesize imperative programs \cite{so2017synthesizing}, regular expressions \cite{lee2016alpharegex}, SQL queries \cite{wang2017scythe}, and Datalog programs \cite{si2018datalog}.
However, coming up with the domain-specific insights to allow enumeration to scale is a challenging task, to the extent that such insights are the main contribution of many papers.

Consider the problem of synthesizing a regular expression that accepts the (positive) example string $11$.
During enumeration, a partial regular expression $0\cdot \hole_R$ can be pruned because no matter what regular expression we replace $\hole_R$ with, the string $11$ will never be accepted.
This simple technique was the main contribution that allowed the tool AlphaRegex \cite{lee2016alpharegex} to synthesize regular expressions of non-trivial size.

Similarly, consider the problem of synthesizing an imperative program that is consistent with a set of input-output examples, and  suppose that we have the partial program \texttt{x\,:=\,0; y\,:=\,$\hole_E$}, where $\hole_E$ can be replaced with an expression over the variables $x,y$ and integer constants.
This program can be discarded if the synthesis goal requires mapping the input $x=5$ to output $x=7$: any instantiation of the assignment to $y$ still results in the input $x=5$ being mapped to $x=0$.

Despite the time and care that went into developing these techniques, they cannot be automatically tailored to potentially similar tasks outside of their rigid predefined domains.

\subsubsection*{Automating Pruning}
In this paper, we show that the aforementioned pruning strategies can be ``rationally reconstructed'' as instances of a common \textbf{automated domain-agnostic} framework for pruning partial programs.
Given a partial program with holes $P$, our framework uses abstract interpretation to abstract the semantics of all the possible program completions on a specific input, and obtain, in the form of an interval $[l,u]$, a superset of the possible values that a completed program might compute.
When the value we want the synthesized program to compute is not in the interval $[l,u]$, the  program $P$ can be pruned.
For example, the set of possible regular expressions that $0 \cdot \hole_R$ can take on can be overapproximated by the interval $[\emptyset, 0\cdot (0\mid 1)^*]$, where a regular expression $r$ is inside this interval if and only if $\emptyset \subseteq L(r) \subseteq L(0 \cdot (0\mid 1)^*)$.
However, the positive example $11$ is not accepted by any regular expression in this interval, because $11 \notin L(0 \cdot (0\mid 1)^*)$, meaning we can prune out $0 \cdot \hole_R$.
Similarly, for the imperative-program pruning example, the output value of $x$ produced by \texttt{x\,:=\,0; y\,:=\,$\hole_E$} on input $x=5$ can be expressed as (and hence ``overapproximated by'') the interval $[0,0]$.
Because the desired output $7$ is not in this interval, we can prune this partial program.

Our work also identifies sufficient conditions on the semantics of the programs appearing in the search space that allow one to generate the operations needed to compute such abstractions \textbf{automatically}.
If the semantics of the programs in the search space satisfies monotonicity conditions (that often can be automatically checked), our framework provides for free a \emph{precise} interval-based abstraction.
The key insight is that given a monotonically increasing function $f$ and an interval of possible inputs $[l,u]$, the tightest interval that encloses the result of evaluating $f$ on values in $[l,u]$ is $[f(l),f(u)]$, which can be computed automatically by simply evaluating $f$.
 
For the regular-expression pruning strategy presented earlier, because concatenation is monotonic (in a sense described later in the paper)
given the partial program $0 \cdot \hole_R$, and intervals $[0, 0]$ and $[\emptyset, (0\mid 1)^*]$, representing sets of possible completions of the (partial) regular expressions $0$ and $\hole_R$, the set of completions of $0 \cdot \hole_R$ can be exactly captured by the interval $[0\cdot \emptyset, 0\cdot(0\mid 1)^*]$ $=[\emptyset, 0\cdot(0\mid 1)^*]$---i.e., what AlphaRegex would compute using a specialized algorithm.

A similar argument explains the pruning technique for \texttt{x\,:=\,0}; y\,:=\,$\hole_E$---i.e., the semantics of assignments is also monotonic with respect to its argument (in sense described later).
% %
% Additionally, if we are given a monotonically decreasing function $f$ and an interval of possible inputs $[l,u]$, the tightest interval that encloses the result of evaluating $f$ on values in $[l,u]$ is $[f(u),f(l)]$.
% %
% \twrchanged{
% For instance, for the operator $\neg r$ for complementing a regular expression, the tightest result on interval $[a,ab^*]$ is $[\neg(ab^*), a]$.
% }

Phrasing the pruning approaches as interval-based abstract interpretation unlocks a new opportunity for pruning that had not been identified in prior work.
Existing approaches assume that holes can produce arbitrary programs that yield arbitrary values in the interval $\top$---i.e., the interval that describes every possible value.
Instead, we show how the order under which the semantics is monotonic can be used to automatically compute tighter bounds on the values of such intervals.
For example, if a hole can only be completed with numbers in the grammar $N\rightarrow 0 \mid 1+N$, our approach can prove that any term derived from nonterminal $N$ must be non-negative.

\subsubsection*{The \semgus Framework}
In practice, it is challenging to build a domain-agnostic synthesizer that can accommodate the diversity of synthesis tasks described above.
To achieve such generality, we target problems in the \semgus format \cite{kim2021semantics}, a unifying domain-agnostic and solver-agnostic framework for specifying arbitrary synthesis problems.
\semgus is expressive enough to specify synthesis problems involving regular expressions, CSV schemas, bitvectors, and imperative programs.
However, to the best of our knowledge, prior \semgus solvers cannot tackle such synthesis problems with reasonable efficiency because of the challenges of generalizing domain-agnostic synthesis techniques beyond na\"{i}ve enumeration.

\subsubsection*{Contributions}
Our work makes the following contributions. 
\begin{itemize}
  \item
    We unify a number of domain-specific pruning approaches into a general \textit{framework for interval-based pruning} (\Cref{sec:overview,sec:synth-procedure}).
  \item
    We define a theory of \textit{semantic monotonicity} that provides sufficient conditions under which it is possible to automate precise interval-based pruning (\Cref{sec:monotonicity}). 
  \item 
    We combine our abstraction framework with a technique called grammar-flow analysis~\cite{gfa} to
    obtain
    \textit{precise hole abstractions} that can be used to further prune the explored search space (\Cref{sec:precise-abstractions}).
  \item
    We provide a technique for automatically \textit{synthesizing orders} under which a semantics is monotonic, thus enabling interval-based pruning (\Cref{sec:order-synth}).      
  \item
    We implemented our framework in the tool \toolname, which supports synthesis problems expressed in \semgus, a domain-agnostic framework for specifying synthesis problems.
   \toolname can automatically discover interval-based pruning approaches that were hard-coded in existing domain-specific solvers, and use them to outperform a vanilla enumeration on several \semgus benchmarks (\Cref{sec:evaluation}).
\end{itemize}

\section{Overview}
\label{sec:overview}

This section illustrates how our framework unifies the pruning techniques used in AlphaRegex~\cite{lee2016alpharegex} (for regular expressions) and SIMPL~\cite{so2017synthesizing} (for imperative programs).
We will further discuss in \Cref{sec:related-work} how our framework also unifies some prior approaches for synthesizing SQL queries~\cite{wang2017scythe}, Datalog programs~\cite{si2018datalog}, and data-processing tasks~\cite{video-trajectories}.
% \subsection{Synthesis from Examples}

% In the rest of this section, we assume that a synthesis problem is given as a regular tree grammar $\grammar$ of possible programs, a semantics $\den\cdot$ that allows us to evaluate the programs in $L(\grammar)$, and a set of input/output examples $\examples=\{(i_1,o_1), \ldots, (i_n,o_n)\}$.
% % 
% A solution is a program $p$ in $\lang(\grammar)$ that is consistent with the examples---i.e., $\forall j. \den{p}(i_j)=o_j$.

The tool AlphaRegex synthesizes regular expressions in the following fixed grammar $G_\alpha$:
\begin{equation*}
    R ::=\ \rc c \mid \epsilon \mid \emptyset \mid (R \mid R) \mid (R\cdot R) \mid R^*
\end{equation*}
Examples are given as pairs of strings and Boolean values, denoting whether a string should be accepted or rejected by the regular expression to be synthesized.
For example, given the examples $\examples^R_1=\{(1,\textit{true}),(10,\textit{true}),(111,\textit{true}),(0,\textit{false}),(00,\textit{false}),(100,\textit{false})\}$, AlphaRegex might synthesize the regular expression $(1\cdot(0\mid 1))^*\cdot 1$, which accepts all non-empty bitstrings where every odd position contains the digit 1.

The tool SIMPL synthesizes imperative programs.
For illustrative purposes, we do not consider programs containing loops, and assume that SIMPL targets programs in the following fixed grammar $\grammar_I$ where the only two variables are $x$ and $y$:
\begin{align*}
    S &::= x := E \mid y := E \mid S ; S \qquad  \qquad 
    E ::= 0 \mid 1 \mid x \mid y \mid E + E \mid E - E
\end{align*}
Examples are given as pairs of input and output states,
where a state is an assignment of values
to $x$ and $y$.
For example, given the examples $\examples^I_1=\{(\langle x=4, y=2\rangle, \langle x=2, y=4\rangle),(\langle x=3, y=3\rangle, \langle x=3, y=3 \rangle)\}$, SIMPL might synthesize the imperative program
\texttt{x\,:=\,x-y; y\,:=\,x+y; x\,:=\,y-x;}, which swaps the values of variables \texttt{x} and \texttt{y}.

\subsection{Top-down Enumeration and Pruning in AlphaRegex and Simpl}
Given a synthesis problem, \textit{top-down enumeration} exhaustively searches over the space of programs in the grammar $\grammar$, expanding \textit{partial programs} according to the productions of the grammar.
A partial program is a tree that can contain \textit{hole} symbols---i.e., unexpanded nonterminals---to be filled by as-yet-undetermined sub-trees.
For example, $0\cdot \hole_R$ and $(\hole_R \mid \hole_R)^*$ are partial programs (regular expressions) that could be further expanded using grammar $\grammar_\alpha$, and \texttt{x\,:=\,x+$\hole_E$} and \texttt{x\,:=\,$\hole_E$; y\,:=\,$\hole_E$} are partial (imperative) programs that could be further expanded using grammar $\grammar_I$.

Blindly enumerating all possible programs is impractical, but through clever pruning strategies, one can mitigate the combinatorial nature of exhaustive enumeration and reach more complex programs deeper in the search space.
Given a partial program $P$, if one can prove that there exists \emph{no} way to turn $P$ into a complete program (i.e., that contains no holes) that is consistent with the given examples $\examples$, the partial program $P$ can be pruned.
When a partial program $P$ is pruned, none of the potentially infinitely many completions of $P$ will ever be considered by the enumeration!
 
\subsubsection*{Pruning in AlphaRegex}
Given the examples $\examples^R_1 = \{ (1,\textit{true})$, $(10,\textit{true})$,  $(111,\textit{true})$, $(0,\textit{false})$, $(00,\textit{false})$,$(100,\textit{false}) \}$, AlphaRegex eventually enumerates the partial program $0 \cdot \hole_R$.
The key observation presented in the AlphaRegex paper is that no matter what program  is used to fill $\hole_R$, the resulting overall program will only accept strings that start with 0, and thus will never accept the string $1$---in particular, the program will be inconsistent with the example $(1,\textit{true})$.
AlphaRegex automates this check by observing that the semantics of regular expressions is such that when we replace $\hole_R$ with the regular expression $(0\mid 1)^*$, we obtain the ``most permissive'' possible regular expression---i.e., if $0 \cdot (0\mid 1)^*$ does not accept all the positive examples, no completion of $0 \cdot \hole_R$ will.
A similar check is performed for negative examples---i.e., if $0 \cdot \emptyset$ does not reject all the negative examples, no completion of $0 \cdot \hole_R$ will.
Therefore, when either of these two checks fails, the corresponding candidate partial program can be pruned away.

\subsubsection*{Pruning in SIMPL}
Given the examples $\examples^I_1=\{(\langle x=4, y=2\rangle, \langle x=2, y=4\rangle),(\langle x=3, y=3\rangle, \langle x=3, y=3 \rangle)\}$, SIMPL eventually enumerates the partial program \texttt{x\,:=\,0; y\,:=\,$\hole_E$}.
The key idea in SIMPL is that no matter what sub-tree is used to replace $\hole_E$, the resulting program is incorrect because it must produce a final state in which $x=0$.
SIMPL automates this check using interval-based abstract interpretation to overapproximate the set of values any program constructed from $\hole_E$ could return.
Intuitively, by applying appropriate interval semantics to the program \texttt{x\,:=\,0; y\,:=\,}$[-\infty, \infty]$, we know that the output state can be overapproximated by the abstract state $(x \in [0,0], y \in [-\infty, \infty]) = [0,0] \times [-\infty, \infty]$.
Because none of the desired output states---e.g., $\langle x=2, y=4\rangle$---fall in this set, this partial program can be pruned.
% 
% The idea of using abstract interpretation to overapproximate the possible values that some completion of a partial program can produce
% is used in other synthesizers~\cite{isil} \rahul{what paper were you thinking here?}, but here we are particularly interested in the fact that SIMPL implements an interval-based static analysis for the language $\grammar_I$---i.e., SIMPL contains a \textit{manually-defined} interval abstract transformer for every construct in the language.
For the interval-based static analysis used in SIMPL, a \textit{manually-defined} interval abstract transformer was created for every construct in $\grammar_I$.

\subsection{A Unifying Framework for Interval-Based Pruning}
The pruning approaches adopted by AlphaRegex and SIMPL are both extremely effective, but require domain-specific insights or manually-designed static analyses to compute precise abstractions.
While the methods for pruning in these two examples appear to be very different, once we describe their semantics in a logical way (in our case, as Constrained Horn Clauses in the  \semgus framework), it becomes possible to handle these pruning approaches in a unified way.
Specifically, both pruning approaches be explained and generalized as instances of \textit{interval-based abstract interpretation}.
Most importantly, we demonstrate that the semantics ascribed to the two grammars $\grammar_\alpha$ and $\grammar_I$ enjoy special properties that allow the appropriate abstract interval-transformers to be created automatically from the user-provided logical semantics.

% The key commonality between the aforementioned pruning approaches is that they both compute an interval-based abstraction of a partial program according to their semantics and, because of properties of the constructs in the grammar, such abstractions can be computed precisely without manually-designed abstract transformers.

\subsubsection*{Interval-Based Pruning in AlphaRegex}
Given the partial program $0 \cdot \hole_R$, our first observation is that one can think of AlphaRegex as computing a precise interval-based abstraction of the partial program $0 \cdot \hole_R$ by interpreting $\hole_R$ as the interval $[\emptyset, (0\mid 1)^*]$---i.e., with the range of all possible strings that a regular expression that fills $\hole_R$ could produce.
% by replacing the $\hole_R$ with the interval $[\emptyset, (0\mid 1)^*]$---i.e., with the range of all possible regular expressions one can produce.
% 
To reformulate this computation in abstract terms:
AlphaRegex is computing an abstract semantics as $\aden{0 \cdot \hole_R}(\hole_R=[\emptyset, (0\mid 1)^*])=[0,0] \cdot^\sharp [\emptyset, (0\mid 1)^*]$, where $\cdot^\sharp$ is the interval abstract transformer for concatenation.
The second key observation is that the abstract value $[0,0] \cdot^\sharp [\emptyset, (0\mid 1)^*]$ can be computed as $[0\cdot \emptyset, 0\cdot (0\mid 1)^*]$---i.e., the abstract transformer $\cdot^\sharp$ over an interval of regular expressions can be computed by applying the concrete operation $\cdot$ to the left and right bounds of its interval arguments.
This last step is actually true for all regular-expression operators!
Our final key observation is that this trivial computation of an interval-abstract transformer can always be performed as long as the semantics of the operator under consideration is \textit{monotonic} with respect to the order over which intervals are defined.
In this case, because intervals are ordered by language inclusion, we have that if $L(r_1)\subseteq L(r_1')$ and $L(r_2)\subseteq L(r_2')$ then $L(r_1\cdot r_2) \subseteq L(r_1'\cdot r_2')$---i.e., the semantics of concatenation is monotonically increasing in its arguments.
% ---i.e., because on a string $s$, we have $\den{r}(s)\in \{0,1\}$, the order is defined as $r_1\sqsubseteq_R r_1'$ if for every string $s$, we have that $\den{r_1}(s)\leq \den{r_1'}(s)$---and the concatenation operator is monotonic with respect to this order in both of its arguments---i.e., if $r_1\sqsubseteq_R r_1'$ and $r_2\sqsubseteq_R r_2'$ then $r_1\cdot r_2 \sqsubseteq r_1'\cdot r_2'$.
Consequently, the abstract transformer for $\cdot^\sharp$ is $[r_l,r_u] \cdot^\sharp [r_l',r_u']= [r_l \cdot r_l',r_u \cdot r_u']$.
The same argument applies to other regular-expression operators.
% ---i.e., $[r_l,r_u]^{*^\sharp}= [r_l^*,r_u^*]$ and $[r_l,r_u]\mid^\sharp [r_l',r_u']= [r_l\mid r_l',r_u\mid r_u']$.
 
The above argument gives us a systematic way to explain how AlphaRegex merely computes an interval abstraction of all possible strings that can be produced by some completion of a partial regular expression.
Thus, if we can determine that the operations in the user-defined semantics in the \semgus format exhibit the monotonicity property above, we automatically get abstract interval transformers for these operations for free.

\subsubsection*{Interval-Based Synthesis for Imperative Program}
It is now easy to see how AlphaRegex and SIMPL are similar.
% 
% Recall the example from earlier of synthesizing the ``swap'' function, and consider the partial program $x\,:=\,0; y\,:=\,\hole_E$.
%
In SIMPL, the abstract semantics of $\hole_E$ for a specific input example (let's say $\langle x=4, y=2\rangle$) is expressed as the interval $[-\infty, \infty]$, which intuitively states that on the given input, if we were to run any of the programs with which one can replace $\hole_E$ on the input example, we would get an output in the interval $[-\infty, \infty]$.
The key point is that although SIMPL provided manually crafted abstract transformers to evaluate the semantics of the partial program for this interval domain, in many cases, such a semantics can be computed automatically,
like we did above!
In particular, all the operators appearing in the partial program $x\,:=\,0; y\,:=\,\hole_E$ are monotonic (in a sense that we will formalize later in the paper), and therefore $\aden{x\,:=\,0; y\,:=\,\hole_E}(x=4, y=2,\hole_E=[-\infty,\infty])$ can be computed as $[ \den{x\,:=\,0; y\,:=\,\hole_E}(x=4, y=2,\hole_E=-\infty),\den{x\,:=\,0; y\,:=\,\hole_E}(x=4, y=2,\hole_E=\infty) ]$.

% This section showed how two existing pruning strategies can be unified as abstraction-based approaches based on the theory of intervals.
% % 
% Most importantly, the unified approach gives us conditions under which the approach can be automated---i.e., when the semantics is monotonic.
% % 
% In Section~\ref{sec:monotonicity}, we formalize the intuition described so far and present an algorithm for automating this form of reasoning.
% 

% % 
% In particular, we show how the recently proposed synthesis framework \semgus exposes the semantics of the programming language for which we are performing synthesis in a way that allows us to reason about its monotonicity.

% Firstly, the semantics for the term $x := E$ takes the input tuple $(x,y)$ and maps it to $(e,y)$, where $e$ corresponds to the value of the sub-AST for the nonterminal $E$.
% %
% Thus, the abstraction representing the program $x := 0$ would be an integer over tuple states $[(0, -\infty), (0, \infty)]$.
% %
% Similarly, the interval representing $y := \hole_E$ would be $[(-\infty, -\infty), (\infty, \infty)]$. 
% %
% Finally, the semantics corresponding to the sequential operator is to first execute the first statement, and then execute the second one.
% %
% This gives the final interval representation $[(0, -\infty), (0, \infty)] = [0,0] \times [-\infty, \infty]$, which is the same result we got earlier.
% %
% Since we expected the output $(x=2, y=4)$ on input $(x=4, y=2)$, and the output is outside of our abstract interval, we can discard this partial program and thus any of the infinite programs that are derived from it.

\subsection{Computing Precise Hole Abstractions via Grammar Flow Analysis}
\label{sec:csv:precise}
% \loris{if space is eventually needed, we can remove this subsection since in practice it doesn't help and we describe it later anyway. Alternatively can become a shorter paragraph with the same example we show in sec 4.5 }
We now illustrate how the interval-based framework unlocks another pruning opportunity.
% 
% The approach we described earlier generates intervals by replacing the holes in the partial programs with the smallest and largest elements of their respective domains---i.e., $[\emptyset, (0\mid 1)^*]$ for regular expressions and $[-\infty, \infty]$ for the possible output values of an expression.\footnote{Note that smallest and largest are defined here with respect to the same order for which the semantics of the underlying language is monotonic.}
% %
% While these ``default'' abstractions are always sound overapproximations of how the set of programs derivable from a nonterminal may behave, sometimes they can be more imprecise than necessary.
% 
Consider the task of synthesizing a regular expression in the following grammar $\grammar_{CSV}$, which defines regular expressions for describing the format of rows in a CSV file---i.e., comma-separated entries that (for the sake of this example) can contain either alphabetic or numerical characters.
\begin{align*}    
    \RowId & ::= \AlphaId \mid \NumId \mid \AlphaId \cdot {,} \cdot \RowId\mid \NumId \cdot {,} \cdot \RowId \\    
    \AlphaId & ::= a \mid \cdots \mid z \mid (\AlphaId \cdot \AlphaId) \mid (\AlphaId \mid \AlphaId) \mid (\AlphaId)^* \\
    \NumId & ::= 0 \mid \cdots \mid 9 \mid (\NumId \cdot \NumId) \mid (\NumId \mid \NumId) \mid (\NumId)^*
\end{align*}

Suppose we are given the set of examples $\examples_{CSV}=\{(\texttt{"303, name"}, \textit{true})\}$ and eventually enumerate the partial program $\hole_{\AlphaId} \cdot {,} \cdot \hole_{\RowId}$.
If one tries to prune this partial program by replacing $\hole_{\AlphaId}$ and $\hole_{\RowId}$ with the ``default'' overapproximation 
$[\emptyset, .^*]$ (where $.$ denotes any character), we would get the following interval to represent the set of possible solutions: $[\emptyset, (.^*\cdot {,} \cdot .^*)]$, and conclude that the partial program $\hole_{\AlphaId}\cdot {,} \cdot \hole_{\RowId}$ cannot be pruned.
A careful analysis of the nonterminal $\AlphaId$ shows that any program derived from $\hole_{\AlphaId}$ can only match alphabetic strings.
Thus, the interval $[\emptyset, (a\mid \cdots \mid z)^*]$ would provide a better abstraction of the semantics of all programs derivable from $\AlphaId$ and allow us to prune the partial program $\hole_{\AlphaId}\cdot {,} \cdot \hole_{\RowId}$ (because no completion can start with ``$\texttt{303}$.'')

Our interval-based framework opens up a simple way to compute these more precise abstractions automatically. 
In particular, once we have a concrete representation of the interval ordering relation, we can use a technique called Grammar Flow Analysis (GFA)~\cite{gfa}.

Specifically, for every nonterminal $N$ we can construct a system of constraints based on the provided grammar and computed abstract semantics, for which the least solution is the tightest interval that overapproximates the semantics of the programs derivable from $N$.
For instance, for the nonterminal $\AlphaId$, the constraints contain as free variables the interval bounds $[l, u]$  we are looking for (in this case, denoted by regular expressions themselves) and take the following form:
\changed{\begin{equation}
\label{eq:gfa-reg-ex-ill}
\forall v. \big(a {\preceq} v {\preceq} a \lor \cdots \lor z {\preceq} v {\preceq} z \lor (l \cdot l) {\preceq} v {\preceq} (u \cdot u) \lor (l \mid l) {\preceq} v {\preceq} (u \mid u) \lor (l)^* {\preceq} v {\preceq} (u)^* \Rightarrow l {\preceq} v {\preceq} u \big)
\end{equation}}
Assuming that $\preceq$ denotes language inclusion, the assignment $l=\emptyset$ and $u=(a\mid \cdots \mid z)^*$ is a valid solution to \Cref{eq:gfa-reg-ex-ill}. We will show in \Cref{sec:precise-abstractions} how we solve such equations.
\section{Top-down Enumeration and Abstraction-based Pruning}
\label{sec:synth-procedure}

In this section, we formulate the synthesis problems that we tackle (\Cref{sec:synthesis-problem}), formalize a top-down enumeration algorithm (\Cref{sec:topdown-enumeration}), and show how interval-based abstractions can be used to prune the set of programs enumerated via top-down enumeration (\Cref{sec:pruning:using:abstractions}). 
We illustrate all our techniques using a simple imperative programming language (\Cref{fig:imp-example}).
\iflabeldefined{sec:app:regex}{\Cref{fig:regex-example}}{Figure 5 in Appendix A} contains another example for a language of regular expressions.

% \loris{you need to say somewhere in this sec that this is a reasonable fragment of semgus and what it misses. Say we don't provide actual semgus formalization as CHCs because it would make theory cumbersome. There is currently a comment before def 3.5, but I would add something here maybe} \rahul{fixed?}

\subsection{Synthesis Problem}
\label{sec:synthesis-problem}
In this section, we introduce the terminology used throughout the rest of the paper, and define the scope of our example-based synthesis problem.
% 
% We start by defining regular tree grammars, which are used to describe sets of programs.
\begin{definition}[Regular tree grammar]
    \label{def:RTG}
    A \emph{regular tree grammar} (RTG) is a tuple {$G = (\mathcal{N}, \Sigma, S, \delta)$},
    where
    $\mathcal{N}$ denotes a finite set of non-terminal symbols;
    $\Sigma$ is a ranked alphabet;
    $S \in \mathcal{N}$ is the start nonterminal;
    and $\delta$ is a finite set of productions $N_0 \to \rho(N_1, \ldots, N_n)$.
    % , where
    % if $a(\rho) = (\tau_1, \ldots, \tau_n)$ for $\rho \in \Sigma$, then for all $i \in [1..n]$, $a(N_i) = \tau_i$.
\end{definition}

% \lorischanged{
For each nonterminal $N$, we use $\hole_N$ to denote a node variable---i.e., a hole---associated with nonterminal $N$.
We define a partial program $P$ to be a tree that may contain holes.
Given a partial program $P$ with a leftmost hole occurrence $\hole_N$, we say that a program $P'$ can \textit{be derived in one step} from $P$ if there exists a production $N \to \rho(N_1, \ldots, N_n)$ such that replacing the leftmost hole occurrence $\hole_N$ with  $\rho(\hole_{N_1}, \ldots, \hole_{N_n})$ results in $P'$---denoted by $P \mapsto P'$.
We say that a partial program $P'$ is \textit{derived} from another partial program $P$ in zero or more steps, denoted by $P \mapsto^* P'$, if it is in the reflexive and transitive closure of the one-step derivation relation.

Given an RTG $G$, we say that a partial program $P$ is \emph{generated by a nonterminal $N$} if it can be generated from $\hole_n$---i.e., $\hole_n\mapsto^* P'$.
Finally, program $P$ is \emph{complete} if it does not contain holes.
We use $\LL(N)$ to denote the set of complete programs that can be generated by a nonterminal $N$.
We use $\LL(G)=\LL(S)$ to denote the set of complete programs accepted by the grammar $G$.
% }

\begin{example}[Imperative Grammar]
Consider the imperative language defined by grammar $G_I$ in \Cref{fig:imp-grammar}.
From the partial program $\texttt{x\,:=\,0; y\,:=\,} \hole_E$ we can derive in one step the complete program $\texttt{x\,:=\,0; y\,:=\,x}$ and the partial program $\texttt{x\,:=\,0; y\,:=\,} (\hole_E + \hole_E)$.
\end{example}

The following definition shows how to associate a semantics to programs defined by a grammar.
\begin{definition}[Semantics] 
\label{def:semantics}
    Let {$G = (\mathcal{N}, \Sigma, S, \delta)$} be a regular-tree grammar with set of non-terminals $\mathcal{N}=\{N_1, \ldots N_k\}$.
    A \emph{semantics} for $G$ is a pair $(\{\dennon{N_1}{\cdot}, \ldots, \dennon{N_k}{\cdot}\},\sigma)$, such that each $\dennon{N_i}{t}:t \in \LL(N_i) \rightarrow I_{N_i}\rightarrow O_{N_i}$ is a function symbol for which the interpretation is given 
    by $\sigma$, a function that maps each production $M_0 \to \rho(M_1, \ldots, M_n)$ to a set of well-typed first-order implication formulas---i.e., the rules that define the function $\dennon{M_0}{t}$---of the following form:
    \begin{equation}
    \label{Eq:CHCSchema}
    \mbox{
    \begin{minipage}{\textwidth}
    \begin{prooftree}
        \AxiomC{$\dennon{M_1}{t_1}({x_1}) = {y_1} \quad \cdots \quad \dennon{M_n}{t_n}({x_n}) = {y_n}$}
        \AxiomC{$\varphi_=(x, x_1, \ldots, x_n, y_1, \ldots, y_n) \land {y} = f({x}, {y_1}, \ldots, {y_n})$}
        \RightLabel{}
        \BinaryInfC{$\den{\rho(t_1, \ldots, t_n)}({x}) = {y}$}
    \end{prooftree}
    \end{minipage}
    }
    \end{equation}
    In the formulas, all variables $t_i, x_i, y_i$ are universally quantified.
    $\varphi_=$ is a conjunction of equalities over the variables $x$, $x_i$, and $y_i$; and $f$ is a function.
    %
    % \rahulchanged{The types of $\den{\cdot}$ have to match compositionally: for any nonterminal $M \to \rho(M_1, \ldots, M_N)$, if $\den{\cdot}_{M}$ has type $(\tau_1, \ldots, \tau_n)$, then $\den{\cdot}_{M_i}$ must have type $\tau_i$.}

    The interpretation of the semantic function symbols $\{\dennon{N_1}{\cdot}, \ldots, \dennon{N_k}{\cdot}\}$ is then the least fixed-point solution of the set of first-order formulas $\bigcup_{p\in \delta} \sigma(p)$.
\end{definition}
These first-order formulas used to define the semantics are a restricted form of \textit{Constrained Horn Clauses} (CHCs)~\cite{spacer} and we will refer to formulas of the above form as CHCs.
The above definition of semantics is a restricted fragment of the \semgus framework, which instead supports CHCs with arbitrary predicates.
Our restricted format cannot capture nondeterminism, but allows us to model executable semantics for deterministic programs, which are the focus of our work.
In the rest of the paper, to avoid clutter we drop the nonterminal subscripts (unless noted otherwise) and simply write $\den{\cdot}$ to denote all semantic relations. 
When writing $\den{\cdot}$, we also implicitly assume the corresponding mapping $\sigma$ is given to us, so we drop it from most definitions.

\begin{figure}
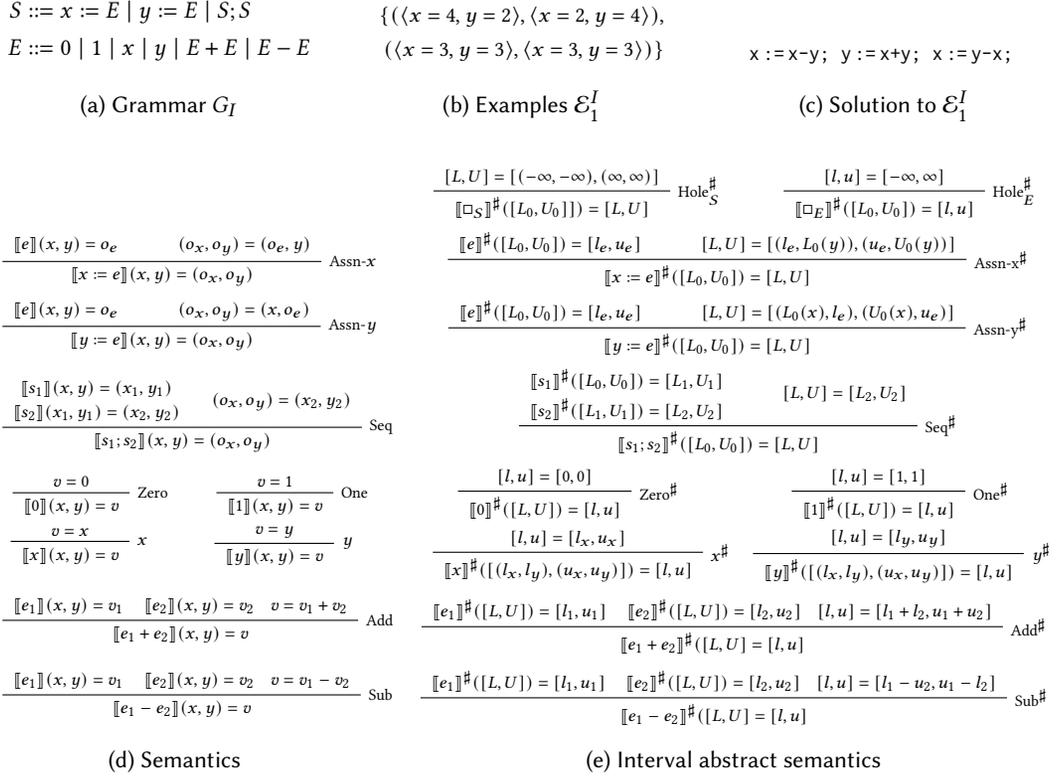

    % \label{fig:regex}
    \begin{subfigure}{0.3\textwidth}
        \small{
            \begin{align*}
                S &::= x := E \mid y := E \mid S ; S \\
                E &::= 0 \mid 1 \mid x \mid y \mid E + E \mid E - E
            \end{align*}
        \caption{Grammar $G_I$}
        \label{fig:imp-grammar}
        }
    \end{subfigure}
    \hfill
    \begin{subfigure}{0.3\textwidth}
        \footnotesize{
            \begin{align*}
                \{(\langle x=4, y=2\rangle, \langle x=2, y=4\rangle),\\
                (\langle x=3, y=3\rangle, \langle x=3, y=3 \rangle)\}
            \end{align*}
        \caption{Examples $\examples^I_1$}
        \label{fig:imp-constraint}
        }
    \end{subfigure}
    \hfill
    \begin{subfigure}{0.3\textwidth}
        \footnotesize{
            \centering
            \texttt{x\,:=\,x-y; y\,:=\,x+y; x\,:=\,y-x;}
            \vspace{4pt}
        \caption{Solution to $\examples^I_1$}
        \label{fig:imp-solution}
        }
    \end{subfigure}
    \vspace{14pt}

    \begin{subfigure}{0.33\textwidth}
        \tiny{
        % \begin{tikzpicture}
        %     \node[text width=7cm,draw,inner sep=0.6em](mybox){
                \centering
                \begin{prooftree}
                    \AxiomC{$\den{e}(x,y) = o_e$}
                    \AxiomC{$(o_x, o_y) = (o_e, y)$}
                    \RightLabel{Assn-$x$}
                    \BinaryInfC{$\den{x := e}(x,y) = (o_x, o_y)$}
                \end{prooftree}
                % \vspace{\proofspacing}
                \begin{prooftree}
                    \AxiomC{$\den{e}(x,y) = o_e$}
                    \AxiomC{$(o_x, o_y) = (x, o_e)$}
                    \RightLabel{Assn-$y$}
                    \BinaryInfC{$\den{y := e}(x,y) = (o_x, o_y)$}
                \end{prooftree}
                \vspace{\proofspacing}
                \begin{prooftree}
                    \AxiomC{\stackanchor{$\den{s_1}(x,y) = (x_1, y_1)$}{$\den{s_2}(x_1, y_1) = (x_2, y_2)$}\qquad$(o_x, o_y) = (x_2, y_2)$}
                    \RightLabel{Seq}
                    \UnaryInfC{$\den{s_1 ; s_2}(x,y) = (o_x, o_y)$}
                \end{prooftree}
                \vspace{\proofspacing}
                \centering
                \begin{minipage}{0.35\textwidth}
                \begin{prooftree}
                    \AxiomC{$v = 0$}
                    \RightLabel{Zero}
                    \UnaryInfC{$\den{0}(x,y) = v$}
                \end{prooftree}
                \end{minipage}
                \hspace{1cm}
                \centering
                \begin{minipage}{0.35\textwidth}
                \begin{prooftree}
                    \AxiomC{$v = 1$}
                    \RightLabel{One}
                    \UnaryInfC{$\den{1}(x,y) = v$}
                \end{prooftree}
                \end{minipage}
                \vspace{\proofspacing}
                % \vspace{\proofspacing}
                %
                \centering
                \begin{minipage}{0.35\textwidth}
                \begin{prooftree}
                    \AxiomC{$v = x$}
                    \RightLabel{$x$}
                    \UnaryInfC{$\den{x}(x,y) = v$}
                \end{prooftree}
                \end{minipage}
                \hspace{1cm}
                \centering
                \begin{minipage}{0.35\textwidth}
                \begin{prooftree}
                    \AxiomC{$v = y$}
                    \RightLabel{$y$}
                    \UnaryInfC{$\den{y}(x,y) = v$}
                \end{prooftree}
                \end{minipage}
                \vspace{\proofspacing}
                \begin{prooftree}
                    \AxiomC{$\den{e_1}(x,y) = v_1$ \quad $\den{e_2}(x,y) = v_2 \quad v = v_1 + v_2$}
                    \RightLabel{Add}
                    \UnaryInfC{$\den{e_1 + e_2}(x,y) = v$}
                \end{prooftree}
                \vspace{\proofspacing}
                \begin{prooftree}
                    \AxiomC{$\den{e_1}(x,y) = v_1$ \quad $\den{e_2}(x,y) = v_2 \quad v = v_1 - v_2$}
                    \RightLabel{Sub}
                    \UnaryInfC{$\den{e_1 - e_2}(x,y) = v$}
                \end{prooftree}
                \vspace{\proofspacing}
    %         };
    %         \node[text=gray,anchor=west,fill=white,xshift=0.5em] at (mybox.north west)
    %          {\textsc{Imperative Program Semantics}};
    % \end{tikzpicture}
        \caption{Semantics}
        \label{fig:imp-sem}
    }
    \end{subfigure}
    \hfill
    \begin{subfigure}{0.60\textwidth}
        \tiny{
            \centering            
            \begin{minipage}{0.45\textwidth}
                \begin{prooftree}
                    \AxiomC{$[L,U] = [(-\infty, -\infty), (\infty, \infty)]$}
                    \RightLabel{Hole$^\sharp_S$}
                    \UnaryInfC{$\dena{\hole_S}([L_0,U_0]]) = [L,U]$}
                \end{prooftree}
            \end{minipage}
            \hspace{0.8cm}
            \begin{minipage}{0.4\textwidth}
                \begin{prooftree}
                    \AxiomC{$[l,u] = [-\infty, \infty]$}
                    \RightLabel{Hole$^\sharp_E$}
                    \UnaryInfC{$\dena{\hole_E}([L_0,U_0]) = [l,u]$}
                \end{prooftree}
            \end{minipage}
                \begin{prooftree}
                    \AxiomC{$\dena{e}([L_0,U_0]) = [l_e, u_e]$}
                    \AxiomC{$[L,U] = [(l_e, L_0(y)), (u_e,U_0(y))]$}
                    \RightLabel{Assn-x$^\sharp$}
                    \BinaryInfC{$\dena{x := e}([L_0,U_0]) = [L,U]$}
                \end{prooftree}
                \begin{prooftree}
                    \AxiomC{$\dena{e}([L_0,U_0]) = [l_e, u_e]$}
                    \AxiomC{$[L,U] = [(L_0(x), l_e), (U_0(x), u_e)]$}
                    \RightLabel{Assn-y$^\sharp$}
                    \BinaryInfC{$\dena{y := e}([L_0,U_0]) =[L,U]$}
                \end{prooftree}
                \begin{prooftree}
                    \AxiomC{\stackanchor{$\dena{s_1}([L_0,U_0]) = [L_1,U_1]$}{$\dena{s_2}([L_1,U_1]) = [L_2,U_2]$}}
                    \AxiomC{$[L,U] = [L_2,U_2]$}
                    \RightLabel{Seq$^\sharp$}
                    \BinaryInfC{$\dena{s_1 ; s_2}([L_0,U_0]) = [L,U] $}
                \end{prooftree}
                \begin{minipage}{0.35\textwidth}
                \begin{prooftree}
                    \AxiomC{$[l,u] = [0,0]$}
                    \RightLabel{Zero$^\sharp$}
                    \UnaryInfC{$\dena{0}([L,U]) = [l,u]$}
                \end{prooftree}
                \end{minipage}
                \hspace{1.4cm}
                \centering
                \begin{minipage}{0.35\textwidth}
                \begin{prooftree}
                    \AxiomC{$[l,u] = [1,1]$}
                    \RightLabel{One$^\sharp$}
                    \UnaryInfC{$\dena{1}([L,U]) = [l,u]$}
                \end{prooftree}
                \end{minipage}
                
                \begin{minipage}{0.45\textwidth}
                \begin{prooftree}
                    \AxiomC{$[l,u] = [l_x,u_x]$}
                    \RightLabel{$x^\sharp$}
                    \UnaryInfC{$\dena{x}([(l_x,l_y), (u_x,u_y)]) = [l,u]$}
                \end{prooftree}
                \end{minipage}
                \hspace{0.4cm}
                \centering
                \begin{minipage}{0.45\textwidth}
                \begin{prooftree}
                    \AxiomC{$[l,u] = [l_y,u_y]$}
                    \RightLabel{$y^\sharp$}
                    \UnaryInfC{$\dena{y}([(l_x,l_y), (u_x,u_y)]) = [l,u]$}
                \end{prooftree}
                \end{minipage}
                \begin{prooftree}
                    \AxiomC{$\dena{e_1}([L,U] ) = [l_1, u_1]$ \quad $\dena{e_2}([L,U]) = [l_2, u_2] \quad [l,u] = [l_1 + l_2, u_1 + u_2]$}
                    \RightLabel{Add$^\sharp$}
                    \UnaryInfC{$\dena{e_1 + e_2}([L,U] = [l,u]$}
                \end{prooftree}
                \begin{prooftree}
                    \AxiomC{$\dena{e_1}([L,U] ) = [l_1, u_1]$ \quad $\dena{e_2}([L,U]) = [l_2, u_2] \quad [l,u] = [l_1 - u_2, u_1 - l_2]$}
                    \RightLabel{Sub$^\sharp$}
                    \UnaryInfC{$\dena{e_1 - e_2}([L,U] = [l,u]$}
                \end{prooftree}                
        \caption{Interval abstract semantics}
        \label{fig:imp-sem-abs}
        }
    \end{subfigure}
    
    \caption{An example-based \semgus problem for imperative programs (\Cref{fig:imp-grammar,fig:imp-sem,fig:imp-constraint}), and a sound abstract semantics for the grammar $G_I$ (\Cref{fig:imp-sem-abs}). We use lowercase $l,u$ variables to denote integers and uppercase variables $L,U$ to denote pairs of integers, where $L(x)$ and $U(x)$ correspond to the first element of the pair, and $L(y), U(y)$ the second element.}
    \label{fig:imp-example}
\end{figure}

\begin{example}[Imperative Semantics]
    \label{ex:imperative-sem}    
    Figure~\ref{fig:imp-sem} presents an (operational) semantics for programs in the grammar $G_I$ (Figure~\ref{fig:imp-grammar}).
    The semantics associated with nonterminal $E$, i.e., $\den{\cdot}_E$, has type $\IntId\times \IntId \rightarrow \IntId$, i.e., the programs derived from $E$ map pairs of integer variable values to integers.
    The semantics $\den{\cdot}_S$ has type $\IntId\times \IntId \rightarrow \IntId \times \IntId$, i.e., a program derived from $S$ maps a pair of values---one for each variable---to a pair of values.
    In each rule, instead of explicitly presenting the formula $\varphi_=$ (as in Eqn.~(\ref{Eq:CHCSchema})),
    we introduce unique names to denote all variables that are equal according to $\varphi_=$.
    In the rule Sub, the function appearing in the premise of the CHC would be $f_{\textit{Minus}}((x,y),v_1,v_2)=v_1-v_2$.
\end{example}

%In the rest of the paper, unless otherwise noted, we consider synthesis problems in which the specification is provided using a set of input-output examples (given a logical specification, one can always generate a set of examples as needed by applying counterexample-guided inductive synthesis (CEGIS)~\cite{lorisbook}).
In the remainder of the paper, we consider synthesis problems in which the program is only required to be correct on a finite set of examples.
% \twr{``... to meet a logical specification for a finite set of input examples'' is an awkward phrase.}
% 
We are now ready to define the synthesis problems we consider in this paper.
Our definition is an instance of the Semantics-Guided Synthesis (\semgus) framework for describing synthesis problem, where semantics are provided using CHCs; thus, we use the same name.
\begin{definition}[Example-based \semgus Problem]
    An \textit{example-based \semgus problem} is a triple $(G,\den{\cdot},\examples)$ where $G$ is a regular-tree grammar specifying the search space, $\den{\cdot}$ is a semantics that uses CHCs to ascribe meaning to the trees/programs generated by the grammar, and $\examples=\{(i_1, o_1), \ldots, (i_n, o_n)\}$ is a set of input-output examples.
 
    A program $P$ in the language $\LL(G)$ described by the grammar $G$ is a \textit{solution} to the synthesis problem if it is consistent with all the examples, i.e., for every $1\leq j\leq n$, we have $\den{P}(i_j)=o_j$, which we denote by $P \vdash \examples$.
\end{definition}

The grammar $G_I$, its semantics, and the set of examples $\examples^I$ shown in \Cref{fig:imp-example} form an example-based \semgus problem.

\subsection{Top-Down Enumeration}
\label{sec:topdown-enumeration}

In this section, we describe the standard top-down enumeration approach to synthesis that we will apply to example-based \semgus problems.
To find a solution to an example-based synthesis problem $(G, \den{\cdot}, \examples)$, top-down enumeration enumerates trees $t \in \LL(G)$ to find one that is consistent with all the examples, i.e., $t \vdash \examples$.

% \begin{wrapfigure}{l}{0.5\textwidth}
    % \centering
\begin{wrapfigure}{R}{0.5\textwidth}
\vspace{-7mm}
{\footnotesize
\begin{minipage}{0.5\textwidth}
\setlength{\intextsep}{1\baselineskip}
    \begin{algorithm}[H]
        \caption{Top-down enumeration w. pruning}
        \label{alg:top-down-synth}
        \begin{algorithmic}[1]
            \Function{Search}{\grammar, $\den{\cdot}$, $\examples$}
                \State $Q \gets \{\hole_{S}\}$ \label{alg:queueinit}
                \While{$Q \neq \emptyset$} \label{alg:loopiter}
                    \State $P \gets \code{Dequeue}(Q)$
                    \If{$\code{IsComplete}(P)\ \land\ P \vdash \examples$} \label{alg:progsuccess}
                        \Return{P}
                    \Else
                        \ForAll{$P'$ s.t. $P \mapsto P'$} \label{alg:allderiv}
                            \If{$\lnot$\Call{Prune}{$P',\examples$}}
                                $\code{Enqueue}(Q,P')$
                            \EndIf
                        \EndFor
                    \EndIf
                \EndWhile
            \EndFunction
            \vspace{4pt}
            \Function{Prune}{$P', \examples$} \label{alg:prune-def}
                \ForAll{$(i_k, o_k) \in \examples$} 
                    \If {$o_k \not \in \dena{P'}(i_k)$} \label{alg:validprune}
                        \Return \code{True}
                    \EndIf
                \EndFor
                \State \Return \code{False}
            \EndFunction \label{alg:prune-end}
        \end{algorithmic}
    \end{algorithm}
    \label{fig:enter-label}
\end{minipage}
}
\vspace{-7mm}
\end{wrapfigure}
% \end{wrapfigure}

% \paragraph{Enumerating by Expanding Partial Programs}
Given a synthesis problem, Algorithm~\ref{alg:top-down-synth} enumerates partial programs by exploring their one-step derivations until a complete program is found that is a solution.
The algorithm starts with the smallest partial program, a single hole $\hole_{\textit{S}}$, corresponding to start nonterminal $S$ (Line~\ref{alg:queueinit}), and iterates over all partial and total programs in the grammar using a priority queue $Q$ (Line~\ref{alg:loopiter}).
The priority for the queue controls the order in which programs are enumerated (e.g., by size, by depth, etc.).

If the program $P$ considered at a certain iteration is complete---i.e., it has no holes---the program is evaluated against the synthesis constraint, and returned as the answer if it satisfies all input-output examples in $\examples$ (Line~\ref{alg:progsuccess}).
If the considered program $P$ is partial---i.e., contains a sequence of holes---the leftmost hole is successively replaced with each possible production to obtain all the programs $P'$ derivable in one step from $P$ (Line~\ref{alg:allderiv}).

The algorithm then queries a \textsc{Prune} function (whose implementation will be explained in Section~\ref{sec:pruning:using:abstractions}) that determines if we can soundly discard partial program $P'$, which would prune away the potentially-infinite set of concrete programs derivable from $P'$.
The following definition states that a function \textsc{Prune} is sound if it only prunes partial programs that cannot derive valid complete solutions to the synthesis problem.
\begin{definition}
\label{def:prune:sound}
We say that the function $\textsc{Prune}$ is \textit{sound} if for every partial program $P'$, if $\textsc{Prune}(P', \examples)=\textsc{True}$ then there exists no complete program $P''$ such that \rone $P''$ is derivable from $P'$, and \rtwo $P''$ is consistent with the examples---i.e., $\neg \exists P''. P'\mapsto^* P'' \wedge P'' \vdash \examples$.
\end{definition}

If the pruning function is sound and the priority queue returns the smallest program with respect to a well-founded order over programs, the algorithm (under reasonable conditions) finds a solution if one exists~\cite{sketch}.
% \begin{theorem}
% \label{thm:completeness}
% \rahulchanged{
% Suppose a total order $\leq$ is given to the set of programs in $\LL(G)$.
% %
% If the priority queue is prioritized \textit{fairly}, i.e., it returns the smallest program with respect to $\leq$ with no infinite descending chains, and the function \textsc{Prune} is sound,} \Cref{alg:top-down-synth} always finds a solution to a given synthesis problem if a solution exists.
% \end{theorem}
% \loris{for proof of this you can point to my book I think}

\subsection{Pruning using Abstractions}
\label{sec:pruning:using:abstractions}

While there are many ways to design pruning functions, in this paper we are only interested in pruning by considering \textit{abstract representations} of certain sets of values.
For a candidate partial program $P'$, we wish to capture the sets of possible outputs of the set of possible completions of $P'$ when these complete programs are run on input $i_k$ from an input-output example $ (i_k, o_k)$ in $\examples$.

Later, we denote such a value-set by $\dena{P'}(i_k)$, which represents an overapproximation of the set of possible states that any program $P$ derived from candidate $P'$ can produce, given input $i_k$.
To define $\dena{P'}$ formally, we need to introduce some terminology from abstract interpretation.
In this paper, we focus on interval-based abstractions---i.e., a set of elements from a set $Y$ is abstracted using a pair/interval $[y_1,y_2]$ of elements to denote the boundaries of the set.

\begin{definition}[Interval abstract domain]
    An \textit{interval abstract domain} for a set $Y$ is a tuple $(Y\times Y,\preceq,\top,\bot)$, where  $\preceq$ is a partial order over the elements of $Y$.
    A concrete element $y \in Y$ is in abstract interval $[y_l, y_u]$ if $y_l \preceq y \preceq y_u$.
    We also define the order $\sqsubseteq$ on $Y\times Y$ as $[y_1,y_2] \sqsubseteq [y_1',y_2']$ iff $y_1'\preceq y_1$ and $y_2\preceq y_2'$. 
    By definition, $\bot \sqsubseteq [y_1,y_2] \sqsubseteq \top$ for any $[y_1,y_2]$.
\end{definition}
%
% We sometimes denote $\bot$ as the empty set $\emptyset$.

To define the value computed by $\dena{P'}$, we introduce a notion of abstract semantics that assigns meanings to both total and partial programs, and lifts the semantics $\den{\cdot}$ to operate over intervals.

We use {$G_{int} = (\mathcal{N}, \Sigma \cup\{\hole_{N_1}, \ldots, \hole_{N_k}\}, S, \delta\cup \{N_i\leftarrow \hole_{N_i} \mid 1\leq i\leq k \})$} to denote the same grammar as $G$ but where each nonterminal can derive a hole, with the intention of defining the semantics over an interval domain.
Note that $\LL(G_{int})$ contains all the partial programs derivable from $G$.
(We will show in Section~\ref{sec:auto-gen-sem} how such domains can be provided.)
\begin{definition}[Interval Abstract Semantics]
    \label{def:abs-sem}
    Let {$G = (\mathcal{N}, \Sigma, S, a, \delta)$} be a regular-tree grammar with a set of non-terminals $\mathcal{N}=\{N_1, \ldots N_k\}$, and let $(\{\dennon{N_1}{\cdot},\ldots,\dennon{N_k}{\cdot}\},\sigma)$ be a semantics as defined in Definition~\ref{def:semantics}.

    An {\textit{interval semantics}} $(\{\denanon{N_1}{\cdot},\ldots,\denanon{N_k}{\cdot}\},\sigma^\sharp)$
    for $G_{int}$---i.e., the grammar that augments $G$ with holes, representing the language of partial programs---is a semantics defined over interval types.
    Each type of $\dena{\cdot}_{N_i}$ is lifted to the corresponding interval type---i.e., if 
    $\den{\cdot}_{N_i} : I_{N_i} \to O_{N_i}$, then $\dena{\cdot}_{N_i} : I_{N_i}\times I_{N_i} \to O_{N_i}\times O_{N_i}$. We assume that each type $T$ has an associated order $\preceq_T$.
    
    The interval semantics for $G_{int}$ is a \emph{sound interval abstract semantics} for $G$ if every (potentially partial) program $P\in\LL(G_{int})$ maps an input interval to an output interval that overapproximates the set of possible outputs that any program $P'$ derivable from $P$ would produce on the same input:
    $$\forall P \in \LL(G_{int}), \forall [l,u], \forall l\preceq x\preceq u, \forall P' \in \LL(G), P \mapsto^* P' \Rightarrow \dennon{S}{P'}(x) \in \denanon{S}{P}([l,u])$$    

    % \loris{Ideally this should defined for every $N$, but we don't have a def of partial programs. }
\end{definition}

We will discuss in Section~\ref{sec:monotonicity} how an abstract semantics can be defined inductively in practice.
Again, to avoid notational clutter we typically write $\dena{\cdot}$ to denote an abstract semantics.
\begin{example}
The semantics defined in \Cref{fig:imp-sem-abs} is a sound interval abstract semantics for $G_I$.
The abstract semantics is defined over intervals of $(x,y)$-pairs: $[(x_l,y_l), (x_u, y_u)]$.
The abstract semantics for nonterminal $E$ has type $\dena{\cdot}_E : (\IntId \times \IntId) \times (\IntId \times \IntId) \to \IntId \times \IntId$.
The abstract semantics for $\dena{\cdot}_S$ has type $(\IntId \times \IntId) \times (\IntId \times \IntId) \to (\IntId \times \IntId) \times (\IntId \times \IntId)$.
The partial order on $(x,y)$-pairs is the pairwise order $(x_1, y_1) \preceq (x_2, y_2)$ iff $x_1 \leq x_2$ and $y_1 \leq y_2$.
An interval's type is a pair of the types of the interval's components.
The abstract semantics of a hole is the widest possible: $\dena{\hole_S}([L,U]) = [(-\infty, -\infty), (\infty, \infty)]$, and $\dena{\hole_E}([L,U]) = [-\infty, \infty]$.
We assume this instantiation of the abstract semantics of holes
for now, but we revisit it in Section~\ref{sec:precise-abstractions}.
\end{example}

To summarize, for a partial program $P$, if for any of the examples $(i_k,o_k)$ in $\examples$, the output $o_k$ is not in $\dena{P}(i_k)=[l_o,u_o]$---i.e., it is not the case that $l_o\preceq_O o_k \preceq_O u_o$---the definition of a sound abstract semantics tells us that $P$ can be \textit{pruned} from further consideration:
$\dena{P}(i_k)$ overapproximates the outputs of all possible completions of $P$, and there only needs to be one failure among all the input-output examples to determine that no completion of $P$ could be a correct answer.

\begin{restatable}[Sound Abstract Semantics and Pruning]{theorem}{pruning}
    \label{thm:approx}
    If $\dena{\cdot}$ is a \emph{sound interval abstract semantics} for $G$, the function \textsc{Prune} described in Algorithm~\ref{alg:top-down-synth} (lines~\ref{alg:prune-def}-\ref{alg:prune-end}) is sound.
\end{restatable}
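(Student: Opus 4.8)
The plan is to unfold the definitions of soundness for \textsc{Prune} (\Cref{def:prune:sound}) and of a sound interval abstract semantics (\Cref{def:abs-sem}), and show that the former follows from the latter by contraposition. Concretely, I would assume $\textsc{Prune}(P', \examples) = \textsc{True}$ and argue that no complete program $P''$ derivable from $P'$ can be consistent with $\examples$. Since $\textsc{Prune}$ returns \textsc{True}, by inspection of lines~\ref{alg:prune-def}--\ref{alg:prune-end} of \Cref{alg:top-down-synth} there must be some example $(i_k, o_k) \in \examples$ for which the test on line~\ref{alg:validprune} succeeds, i.e., $o_k \notin \dena{P'}(i_k)$.

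Next I would fix an arbitrary complete program $P''$ with $P' \mapsto^* P''$ and show it violates example $(i_k, o_k)$. The key step is to instantiate the soundness property of \Cref{def:abs-sem} with $P := P'$, with the degenerate ``point'' interval $[l, u] := [i_k, i_k]$ (so that $x := i_k$ is the unique element with $l \preceq x \preceq u$, using reflexivity of $\preceq$), and with $P' := P''$ in the notation of that definition. This yields $\dennon{S}{P''}(i_k) \in \denanon{S}{P'}([i_k, i_k])$. I should also note that $\dena{P'}(i_k)$ as used in the algorithm is exactly $\denanon{S}{P'}([i_k, i_k])$ — i.e., the ``$\dena{P'}(i_k)$'' notation is shorthand for feeding the point interval of $i_k$ into the abstract start-symbol semantics; this is the one place where I'd want to make the notational identification explicit. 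Combining, $\den{P''}(i_k) \in \dena{P'}(i_k)$, but we established $o_k \notin \dena{P'}(i_k)$, so $\den{P''}(i_k) \neq o_k$, hence $P'' \not\vdash \examples$. Since $P''$ was arbitrary, $\neg\exists P''.\, P' \mapsto^* P'' \wedge P'' \vdash \examples$, which is exactly the conclusion required by \Cref{def:prune:sound}.

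I expect this proof to be short and essentially bookkeeping; the only real subtlety — and the main obstacle — is the notational matching between the algorithm's $\dena{P'}(i_k)$ and the universally-quantified statement in \Cref{def:abs-sem}, together with justifying that a concrete input $i_k$ can be viewed as the point interval $[i_k, i_k]$ (which requires that $\preceq$ is reflexive, true since it is a partial order, and that the abstract semantics is indeed defined on such intervals — it is, since $G_{int}$'s semantics is total on all interval inputs). A secondary minor point is confirming that completeness of $P''$ (no holes) lets us use the concrete semantics $\den{\cdot}$ rather than the abstract one on the left-hand side; this is immediate since $P'' \in \LL(G)$. No induction is needed here — all the inductive work is hidden inside the hypothesis that $\dena{\cdot}$ is a sound abstract semantics.
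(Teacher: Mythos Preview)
Your proposal is correct and follows essentially the same approach as the paper's proof: assume \textsc{Prune} returns \textsc{True}, extract the witnessing example $(i_k,o_k)$, instantiate the soundness of $\dena{\cdot}$ at the point interval $[i_k,i_k]$ for an arbitrary complete derivation, and conclude $\den{P''}(i_k)\neq o_k$. If anything, you are slightly more careful than the paper in making explicit the identification $\dena{P'}(i_k)=\denanon{S}{P'}([i_k,i_k])$ and the reflexivity of $\preceq$ needed for the point interval.
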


The following example illustrates how the algorithm can prune using interval abstract semantics.
\begin{example}[Interval-Based Pruning]
    We now show how to compute the interval abstraction for $\dena{\texttt{x\,:=\,0; y\,:=\,} \hole_E}([(4,2),(4,2)])$ for the input-output example $(\langle x=4, y=2 \rangle, \langle x=2, y=4\rangle)$ from our abstract semantics in Figure~\ref{fig:imp-sem-abs}.
    We have $\dena{0}([(4,2),(4,2)]) = [0,0]$ and $\dena{\texttt{x\,:=\,0}}([(4,2)]) = [(0,2), (0,2)]$, 
    as defined in the $\textrm{Zero}^\sharp$ and $\textrm{Ass-x}^\sharp$ rules, respectively.
    The hole $\hole_E$ has abstract semantics $\dena{\hole_E}([(0,2)]) = [-\infty, \infty]$
    (rule $\textrm{Hole}_{E}^\sharp$), which allows us to define $\dena{\texttt{y\,:=\,}\hole_E}([(0,2)]) = [(0, -\infty), (0, \infty)]$ (rule $\textrm{Ass-y}^\sharp$).
    If we combine these results using the
    $\textrm{Seq}^\sharp$ rule,
    we get $\dena{\texttt{x\,:=\,0; y\, :=\,}\hole_E}([(4,2)]) = [(0, \infty), (0, -\infty)]$.
    Because the expected output $(2,4)$ is not in this interval, we can prune this candidate partial program.
\end{example}
\section{Automated Construction of Precise Interval Abstractions}
\label{sec:monotonicity}

To perform pruning in practice, we need to have an abstract semantics in hand to run the abstraction-based synthesis procedure.
Typically, designing precise abstract semantics (even for interval abstract domains) is the topic of complex research papers, and bugs are often found in tools that use interval abstract interpretation~\cite{amurth}.
In this section, we show our main result: if the concrete semantics is monotonic (in a sense that we define formally), a corresponding sound (and precise) interval abstract semantics can be generated automatically from the concrete semantics.
For simplicity, in the rest of this section we assume that every production has only one associated CHC, and describe the points in which modifications need to be made to accommodate multiple CHCs.

\subsection{Abstract Semantics of Monotonic Functions}

\iflabeldefined{sec:app:regex}{In our running examples in \Cref{fig:imp-example} and \Cref{fig:regex-example}}{In our running example in \Cref{fig:imp-example}}, the concrete and abstract semantics looked very similar to each other.
Consider, for example, the concrete and abstract semantics for Union in \iflabeldefined{sec:app:regex}{\Cref{fig:regex-example}}{Figure 5} in \iflabeldefined{sec:app:regex}{\Cref{sec:app:regex}}{Appendix A}.
On the input $s$, the concrete semantics computes the two matrices $A_1 = \den{r_1}(s)$ and $A_2 = \den{r_2}(s)$ for the two subexpressions $r_1$ and $r_2$, and outputs the matrix $A = A_1 + A_2$.
Similarly, the abstract semantics computes two intervals of matrices $[L_1,U_1]$ and $[L_2,U_2]$ for the two subexpressions $r_1$ and $r_2$, and outputs the interval of matrices $[L,U]$ computed as $[L_1+L_2,U_1+U_2]$.

In this example, the two elements of the interval of matrices can be computed by taking the pairwise operation that was applied to the concrete semantics---i.e., the sum of the semantics of the subexpressions.
In other words, the left and right bounds for the interval representing the abstract semantics for this particular production are precisely the concrete semantics applied to the left and right bounds of the input intervals, respectively.
This pattern applies to all the rules except Sub$^\sharp$ in \Cref{fig:imp-example}, where one can still do something analogous.
For the subtraction operator, given an interval of matrices $[l_i, u_i]$ for the abstract semantics of subexpression $e_i$, the abstract semantics of $e_1 - e_2$ can be defined as $[l_1 - u_2, u_1 - l_2]$

However, this recipe may not always result in a sound interval abstract semantics.
Consider a language involving a function $\den{f}(x) = x^2$ that outputs the square of its input: if we wrote $\dena{f}([l,u]) = [l^2, u^2]$, then we would have $\dena{f}([-2,2]) = [4,4]$, which is not a sound abstract semantics---i.e.,  $0$ is in the interval $[-2,2]$, but the corresponding output $f(0)=0$ is not in the interval $[4,4]$. 

Why does the endpoint recipe work for some functions, but not for others? 
The relevant property is the \emph{monotonicity} of the function $f$ appearing in the CHC that defines the concrete semantics (with respect to some order).
In this section, we define what it means for the semantics to exhibit monotonicity, and demonstrate how to automatically check monotonicity of the semantics.

\begin{definition}[Monotonicity]
\label{def:monotonicity}
    Consider a function {$f: Y_0 \times Y_1 \times \ldots \times  Y_n \to Y$, where $Y_0, Y_1, \cdots, Y_n$, and $Y$ are sets that are partially ordered under $\preceq_0, \preceq_1, \cdots, \preceq_n$, and $\preceq$, respectively.}
    For each argument $i$, we define the \textit{freezing function} {$f_i[\vec c]: Y_i \to Y$} produced by fixing a vector of constants $\vec c$ for the arguments aside from {$y_i$---i.e., $f_i[\vec c](y_i) = f(c_0, c_1,...c_{i-1},y_i,c_{i+1},...c_n)$}.
    
    We say that $f$ is \textit{monotonically increasing} ($\monup$) in its $i^{\textit{th}}$ argument if for every $\vec c$, the output of the freezing function $f_i$ \textit{increases} when the $i^{\textit{th}}$ input \textit{increases}:
    $\forall y_1,y_2 \in Y_i, y_1 \preceq_i y_2 \implies f_i[\vec c](y_1) \preceq f_i[\vec c](y_2)$.
    
    Likewise, $f$ is \textit{monotonically decreasing} ($\mondown$) in its $i^{\textit{th}}$ argument if for every $\vec c$, the output of the freezing function $f_i$ \textit{decreases} when the $i^{\textit{th}}$ input \textit{increases}:
        $\forall y_1, y_2 \in Y_i, y_1 \preceq_i y_2 \implies f_i[\vec c](y_2) \preceq f_i[\vec c](y_1)$.

    If for every argument $i$, $f$ is monotonically increasing or decreasing in the $i^{\textit{th}}$ argument, we say that $f$ is \textit{monotone}.
\end{definition}

Before continuing, we observe that if all the order relations $\preceq_i$ and functions in the semantics are expressed in a decidable theory, the problem of checking whether a function is monotone can be solved using a constraint solver---i.e., by checking if the constraints in \Cref{def:monotonicity} hold.

\begin{example}[Monotonic Functions]
\label{ex:monotonicity}
    Recall the semantics defined in \Cref{fig:imp-sem}, and the fact that CHCs as defined in \Cref{def:semantics} contain a 
    premise
    of the form $y=f(x, y_1,\ldots,y_n)$. 
    In this example, we assume that all integer variables are ordered with respect to the numerical order $\leq$, and pairs of integers are ordered with respect to their pairwise integer order---i.e., $(x,y)\preceq (x',y')$ iff $x\leq x'$ and $y\leq y'$.    
    We now discuss some of the rules from \Cref{fig:imp-sem}, and whether they have a premise that uses a monotonic function.

    The function $f_{\texttt{Add}}((x,y), v_1,v_2)=v_1+v_2$, which defines the semantics of the Add rule in~\Cref{fig:imp-sem}, is monotonically increasing in the last two arguments, whereas the function $f_{\textit{Sub}}((x,y), v_1,v_2)=v_1-v_2$ is monotonically increasing in its second argument, and monotonically decreasing in its third argument. 
    Both of these functions are monotonically increasing \textit{and} decreasing in their first input argument, $(x,y)$, which only happens when the function output does not depend on that argument.

    The function {$f_{\textit{ITE}}((x,y), b, s_1, s_2) = (\textrm{if}~b~\textrm{then}~s_1~\textrm{else}~s_2)$} that defines the semantics of an if-then-else operator is monotonically increasing with respect to its first, third, and fourth arguments, but not monotonic with respect to its second argument $b$ (assuming the order $\textit{false}\preceq \textit{true}$).
\end{example}

Monotonicity is the key property that enables our interval-based pruning approach, as formulated in the following theorem.
If the function appearing in the CHC defining the semantics of a production is monotonic with respect to an appropriate set of orders, the ``endpoint recipe'' provides a sound interval abstract semantics for intervals defined over the same orders under which the function is monotonic.
The following theorem tells us how to automatically abstract the semantics of a monotone function regardless of the direction of the monotonicity.

\begin{restatable}[Abstraction of Monotonic Functions]{theorem}{monotonicity}
\label{thm:mono-interval-abstraction}
    Let {$f: Y_0 \times Y_1 \times \ldots Y_n \to Y$} be a monotone function where $\preceq_i$, {$\preceq$} are orders associated with $Y_i$, $Y$, and {$\monvec\in\{\monup,\mondown\}^{n+1}$} is a vector such that $\monvec_k=\monup$ (resp. $\monvec_k = \mondown$) if $f$ is monotonically increasing (resp. decreasing) in its $k^{\textit{th}}$ argument.
    
    We denote the \emph{endpoint extension} of $f$ to be a function {$\hat f: (Y_0 \times Y_0) \times (Y_1 \times Y_1) \times \ldots (Y_n \times Y_n)\to Y\times Y$} defined as follows: $\hat{f}(\ldots, \emptyset, \ldots) = \emptyset$; if $\lowerb_\monup(l,u)=\upper_\mondown(l,u)=l$ and $\lowerb_\mondown(l,u)=\upper_\monup(l,u)=u$
    then:
    \[
    \hat{f}([l_0,u_0],\ldots,[l_n,u_n]) = [f(\lowerb_{\monvec_0}(l_0,u_0), \ldots, \lowerb_{\monvec_n}(l_n,u_n)), {f(\upper_{\monvec_0}(l_0,u_0)}, \ldots, \upper_{\monvec_n}(l_n,u_n))]
    \]
    % where $\lowerb_\monup(l,u)=\upper_\mondown(l,u)=l$ and $\lowerb_\mondown(l,u)=\upper_\monup(l,u)=u$.
    %
    Then $\hat{f}$ is a \textit{sound interval abstraction} of $f$ in the following sense: 
    {
    \[
       \forall [l_i,u_i], l_i\preceq_i x_i\preceq_i u_i,\qquad
       \hat{f}([l_0,u_0],\ldots, [l_n,u_n])=[l,u] \Rightarrow 
       l\preceq f(x_0, \ldots, x_n)\preceq  u
    \]
    }
    Furthermore, $\hat{f}$ is the \textit{most precise abstraction} for $f$ in the following sense: if {$\hat{f}([l_0,u_0],\ldots, [l_n,u_n])=[l,u]$}, there exist {$x_0^l, \ldots, x_n^l$ and $x_0^u, \ldots, x_n^u$,} such that for every $i$, {$l_i\preceq_i x_i^l\preceq_i u_i$ and $l_i\preceq_i x_i^u\preceq_i u_i$, and $f(x^l_0, \ldots, x^l_n)=l$ and $f(x^u_0, \ldots, x^u_n)=u$.}
\end{restatable}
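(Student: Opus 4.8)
The plan is to prove the two parts — soundness and most-precise — separately, with soundness being the engine and most-precision following almost immediately by picking the witnesses that realize the endpoints.

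For \emph{soundness}, fix intervals $[l_0,u_0],\ldots,[l_n,u_n]$ and points $x_i$ with $l_i \preceq_i x_i \preceq_i u_i$ for each $i$. Write $[l,u] = \hat f([l_0,u_0],\ldots,[l_n,u_n])$, so that by definition $l = f(a_0,\ldots,a_n)$ and $u = f(b_0,\ldots,b_n)$ where $a_k = \lowerb_{\monvec_k}(l_k,u_k)$ and $b_k = \upper_{\monvec_k}(l_k,u_k)$; concretely $a_k = l_k$ and $b_k = u_k$ when $\monvec_k = \monup$, while $a_k = u_k$ and $b_k = l_k$ when $\monvec_k = \mondown$. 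I would prove $l \preceq f(x_0,\ldots,x_n)$ by a ``one coordinate at a time'' hybrid argument: define the sequence of points $z^{(j)} = (x_0,\ldots,x_{j-1}, a_j, a_{j+1},\ldots,a_n)$, so $z^{(0)} = (a_0,\ldots,a_n)$ and $z^{(n+1)} = (x_0,\ldots,x_n)$, and show $f(z^{(j)}) \preceq f(z^{(j+1)})$ for each $j$. These two points differ only in coordinate $j$, moving from $a_j$ to $x_j$. Using Definition~\ref{def:monotonicity} with the freezing vector $\vec c$ given by the other (fixed) coordinates of $z^{(j)}$: if $\monvec_j = \monup$ then $a_j = l_j \preceq_j x_j$, and monotone increase gives $f(z^{(j)}) = f_j[\vec c](l_j) \preceq f_j[\vec c](x_j) = f(z^{(j+1)})$; if $\monvec_j = \mondown$ then $a_j = u_j$ and $x_j \preceq_j u_j$, and monotone decrease gives $f_j[\vec c](u_j) \preceq f_j[\vec c](x_j)$, i.e.\ again $f(z^{(j)}) \preceq f(z^{(j+1)})$. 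Chaining these inequalities via transitivity of $\preceq$ yields $l = f(z^{(0)}) \preceq f(z^{(n+1)}) = f(x_0,\ldots,x_n)$. The symmetric hybrid argument between $(x_0,\ldots,x_n)$ and $(b_0,\ldots,b_n)$ — now using the opposite endpoint of each interval — gives $f(x_0,\ldots,x_n) \preceq u$. (The degenerate case where some argument interval is $\emptyset$ is handled by the stipulation $\hat f(\ldots,\emptyset,\ldots) = \emptyset$, and since no $x_i$ can satisfy $l_i \preceq_i x_i \preceq_i u_i$ for an empty interval, the implication is vacuous there; I would note this and move on.)

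For the \emph{most precise} claim, I would simply exhibit the witnesses. Take $x_k^l = \lowerb_{\monvec_k}(l_k,u_k) = a_k$ for all $k$; these satisfy $l_k \preceq_k x_k^l \preceq_k u_k$ since $a_k$ is one of the two endpoints of $[l_k,u_k]$, and $f(x_0^l,\ldots,x_n^l) = f(a_0,\ldots,a_n) = l$ directly by the definition of $\hat f$. Symmetrically, take $x_k^u = \upper_{\monvec_k}(l_k,u_k) = b_k$, which lie in the respective intervals and satisfy $f(x_0^u,\ldots,x_n^u) = u$. So the endpoints of $\hat f$'s output are actually attained by concrete inputs drawn from the argument intervals, which is exactly the precision statement.

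The only real subtlety — the ``hard part,'' though it is more bookkeeping than depth — is being careful that Definition~\ref{def:monotonicity}'s freezing quantifier is ``for every $\vec c$,'' so the hybrid step is legitimate even though the frozen coordinates of $z^{(j)}$ are a mix of the $x$'s (already-processed coordinates) and the $a$'s (not-yet-processed ones); since monotonicity in argument $j$ holds for \emph{all} choices of the other coordinates, this causes no trouble. A second point to handle cleanly is that $f$ is only assumed \emph{monotone}, i.e.\ monotone-or-anti-monotone in each coordinate with the direction recorded by $\monvec$ — so the case split on $\monvec_j \in \{\monup,\mondown\}$ in the hybrid step must be done per-coordinate, exactly as above. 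With those two observations in place the argument is routine, and the whole proof is short.
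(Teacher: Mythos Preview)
Your proposal is correct and follows essentially the same approach as the paper's proof: both prove soundness by a coordinate-at-a-time hybrid chain (the paper phrases it as ``iteratively replac[ing] constants'' rather than naming a sequence $z^{(j)}$, but it is the same argument), and both prove precision by exhibiting the endpoint witnesses $x_k^l = a_k$, $x_k^u = b_k$. If anything, your write-up is a bit more careful than the paper's --- you make the per-coordinate case split on $\monvec_j \in \{\monup,\mondown\}$ explicit throughout and note the vacuous empty-interval case, whereas the paper's final displayed chain writes $l_i,u_i$ where it really means the direction-adjusted endpoints.
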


\begin{example}[Endpoint Extension]
\label{ex:natural-interval-extension}
    Recall the functions from \Cref{ex:monotonicity}. 
    The endpoint extension of {$f_{\textit{Plus}}((x,y), v_1,v_2)=v_1+v_2$} is the function $\hat{f}_{\textit{Plus}}([L,U], [l_1, u_1], [l_2, u_2]) = [f_{\textit{Plus}}(L, l_1, l_2), f_{\textit{Plus}}(L, u_1, u_2)]=[l_1+l_2, u_1+u_2]$ (because the function is
    monotonically increasing and decreasing in $(x,y)$, 
    we can choose either $L$ or $U$ as the first argument).
    The endpoint extension of {$f_{\textit{Minus}}((x,y),v_1,v_2)=v_1-v_2$} is the function {$\hat{f}_{\textit{Minus}}([L,U], [l_1, u_1], [l_2, u_2]) = [f_{\textit{Minus}}(L, l_1, u_2), f_{\textit{Minus}}(L, u_1, l_2)]=[l_1-u_2,u_1-l_2]$} because $f$ is monotonically increasing in its second argument and monotonically decreasing  in the third argument.
    For example, for any $[L,U]$, we have $\hat{f}_{\textit{Minus}}([L,U], [6, 7], [1,2])=[6-2,7-1]=[4,6]$, which is the most-precise interval.
\end{example}

\subsection{Automatically Generating Abstract Semantics}
\label{sec:auto-gen-sem}
Now that we have connected monotone functions to interval abstractions, we have a way to generate abstract semantics for all programs in a grammar for free, as long as the functions involved are monotone! 
For all CHCs defining the concrete semantics of constructs in our grammar, if the function $f$ appearing in the CHC can be proven monotone, we can automatically generate an associated precise interval abstract semantics by defining a new CHC where the function in the premise is simply $\hat{f}$.
When the functions are not monotone, we can conservatively define the semantics of a production to return $\top$.\footnote{
  There exist tools that can synthesize abstract semantics for arbitrary operators but require human input~\cite{amurth}.
}

This observation reduces the problem of constructing the abstract semantics to that of determining the monotonicity of semantic constructs.
If we can automatically prove the monotonicity of these constructs, then we can obtain these abstract semantics in an automated fashion for free!
The following definition tells us how to extract abstract semantics from the original program semantics.
\begin{definition}[CHC Interval Abstraction]
\label{def:chc:compilation}
Consider a CHC $Op$ as in Definition~\ref{def:semantics}:
    \begin{prooftree}
        \AxiomC{$\dennon{M_1}{t_1}({x_1}) = {y_1} \; \cdots \; \dennon{M_n}{t_n}({x_n}) = {y_n} \quad \varphi_=(x, x_1, \ldots, x_n, y_1, \ldots, y_n) \land {y} = f({x}, {y_1}, \ldots, {y_n})$}
        \RightLabel{Op}
        \UnaryInfC{$\den{\rho(t_1, \ldots, t_n)}({x}) = {y}$}
    \end{prooftree}
If $f$ is monotone, let $f^\sharp$ be defined as $\hat{f}$ (as in \Cref{thm:mono-interval-abstraction}), otherwise define it as $f^\sharp(y_1,\ldots,y_n)=\top$.
We define the \textit{interval abstract semantics} CHC $Op^\sharp$ as follows (all types are lifted to intervals):
    \begin{prooftree}
        \AxiomC{$\denanon{M_1}{t_1}({x_1}) = {y_1} \; \cdots \; \denanon{M_n}{t_n}({x_n}) = {y_n} \quad \varphi_=(x, x_1, \ldots, x_n, y_1, \ldots, y_n) \land {y} = f^\sharp({x}, {y_1}, \ldots, {y_n})$}
        \RightLabel{$Op^\sharp$}
        \UnaryInfC{$\dena{\rho(t_1, \ldots, t_n)}({x}) = {y}$}
    \end{prooftree}
\end{definition} 

% If a production has multiple associated CHCs to describe its semantics, we can construct the interval abstraction defined in~\Cref{def:chc:compilation} for each CHC, and then take the join of the resulting intervals.

To use our pruning algorithm, our abstract semantics also needs to assign semantics to holes.
The following condition connects the abstract semantics of individual holes to the conditions required for an abstract semantics to be sound (\Cref{def:abs-sem}).
\begin{definition}[Sound Abstract Semantics for a Hole]
\label{def:hole:abstraction}
Consider the CHC $\textit{Hole}_N$ assigning an abstract semantics to a hole corresponding to a nonterminal $N$ of some grammar $G$:
\begin{prooftree}
        \AxiomC{$[l,u]=f^\sharp([l_1,u_1])$}
        \RightLabel{$\textit{Hole}_N$}
        \UnaryInfC{$\denanon{N}{\hole_N}([l_1,u_1]) = [l,u]$}
    \end{prooftree}
We say that $\textit{Hole}_N$ is a \textit{sound hole abstract semantics} for $\hole_N$ if the following holds:
\[
\forall [l_1,u_1], \forall l_1\preceq_I x \preceq_I u_1, \forall P \in \LL(N), \dennon{N}{P}(x) \in \denanon{N}{\hole_N}([l_1,u_1])
\]
\end{definition}
In the above definition, setting $[l,u]$ to $\top$ is always a safe way to define a sound hole abstract semantics (as done in our abstract semantics in \Cref{fig:imp-sem}\xspace\iflabeldefined{sec:app:regex}{and \Cref{fig:regex-example}}).
This choice is also taken by the tools that our framework subsumes and that we described in Section~\ref{sec:overview}).
In \Cref{sec:precise-abstractions}, we will show an algorithm for computing more precise abstractions for holes than the $\top$ ones.

We are now ready to define an interval abstract semantics that is guaranteed to be sound and therefore safe for pruning.\footnote{As mentioned early, we assume that every production is associated with exactly one CHC. When multiple CHCs are associated with a production, the abstract semantics has to take the join of the intervals computed by all the CHCs to take into account all possible semantics.}

\begin{restatable}[Soundness of Endpoint Interval Semantics]{theorem}{intervalsoundness}
\label{thm:soundness-endpoint-semantics}
Let $G = (\mathcal{N}, \Sigma, S, \delta)$ be a regular-tree grammar with set of non-terminals $\mathcal{N}=\{N_1, \ldots N_k\}$ and $(\{\dennon{N_1}{\cdot},\ldots,\dennon{N_k}{\cdot}\},\sigma)$ a semantics for $G$.

Let $(\{\denanon{N_1}{\cdot},\ldots,\denanon{N_k}{\cdot}\},\sigma^\sharp)$ be the semantics defined as follows:
\begin{itemize}
    \item for every production $p\in \delta$, then $\sigma^\sharp(p)=\{C^\sharp \mid C\in \sigma(p)\}$ (\Cref{def:chc:compilation});
    \item for every nonterminal $N\in \mathcal{N}$, then $\sigma^\sharp(N\leftarrow \hole_N)=\{\textit{Hole}_N\}$ where $\textit{Hole}_N$ is a \textit{sound hole abstract semantics} (\Cref{def:hole:abstraction}).
\end{itemize}
Then the semantics $(\{\denanon{N_1}{\cdot},\ldots,\denanon{N_k}{\cdot}\},\sigma^\sharp)$ is a \emph{sound interval abstract semantics} for $G$.
\end{restatable}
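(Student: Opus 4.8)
\textbf{Proof plan for \Cref{thm:soundness-endpoint-semantics}.}
The plan is to verify the defining inequality of \Cref{def:abs-sem}, namely that for every $P\in\LL(G_{int})$, every input interval $[l,u]$, every concrete input $x$ with $l\preceq x\preceq u$, and every complete $P'$ with $P\mapsto^* P'$, we have $\dennon{S}{P'}(x)\in\denanon{S}{P}([l,u])$. Since the concrete semantics $\den{\cdot}$ is defined as a least fixpoint over the CHCs, and $P\mapsto^* P'$ means $P'$ is obtained by expanding the holes of $P$, it is natural to proceed by structural induction on the partial program $P$ (equivalently, on the derivation tree witnessing $\hole_S\mapsto^* P$), with the inductive hypothesis phrased at every nonterminal $N$ rather than only at $S$ --- i.e.\ the statement ``$\forall P\in\LL(N_{int}),\forall[l,u],\forall l\preceq_I x\preceq_I u,\forall P'\in\LL(N). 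P\mapsto^* P'\Rightarrow \dennon{N}{P'}(x)\in\denanon{N}{P}([l,u])$'' for each $N\in\mathcal N$.

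The first step is to set up the induction correctly: because the concrete semantics is itself a least fixpoint, I would actually do a double induction --- an outer structural induction on $P$, and, in the node case, an inner induction on the derivation of the concrete judgment $\dennon{N}{P'}(x)=y$ in the least-fixpoint characterization. The base case is $P=\hole_N$: here $P'$ is an arbitrary complete program in $\LL(N)$, and soundness is exactly the hypothesis that $\textit{Hole}_N$ is a \emph{sound hole abstract semantics} (\Cref{def:hole:abstraction}), which is given. The inductive case is $P=\rho(P_1,\ldots,P_n)$ where $N\to\rho(N_1,\ldots,N_n)$ is a production; then any $P'$ with $P\mapsto^* P'$ has the form $\rho(P'_1,\ldots,P'_n)$ with $P_i\mapsto^* P'_i$. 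Unfolding the concrete CHC $Op$ for $\rho$, there are intermediate values $x_i,y_i$ with $\dennon{N_i}{P'_i}(x_i)=y_i$, $\varphi_=(x,x_1,\ldots,x_n,y_1,\ldots,y_n)$, and $y=f(x,y_1,\ldots,y_n)$. Unfolding the abstract CHC $Op^\sharp$ for $\rho$ on input interval $[l,u]$, there are intermediate intervals $[l_i,u_i],[m_i,M_i]$ with $\denanon{N_i}{P_i}([l_i,u_i])=[m_i,M_i]$, the same equality constraint $\varphi_=$ lifted to the interval components, and $[m,M]=f^\sharp([l,u],[m_1,M_1],\ldots,[m_n,M_n])$. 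Since $\varphi_=$ is a conjunction of equalities, the lifted version forces $l_i\preceq_{I_i} x_i\preceq_{I_i} u_i$ to follow from $l\preceq_I x\preceq_I u$ (each equality in $\varphi_=$ merely copies a concrete component into one of the $x_i$ slots, so it is copied componentwise into the bounds); applying the induction hypothesis at each $N_i$ gives $m_i\preceq_{O_i} y_i\preceq_{O_i} M_i$. Now invoke \Cref{thm:mono-interval-abstraction}: if $f$ is monotone, $f^\sharp=\hat f$ is a sound interval abstraction, so from $l\preceq x\preceq u$ and $m_i\preceq y_i\preceq M_i$ we conclude $m\preceq f(x,y_1,\ldots,y_n)=y\preceq M$, i.e.\ $y\in[m,M]=\denanon{N}{P}([l,u])$; if $f$ is not monotone then $f^\sharp\equiv\top$ and the conclusion is trivial. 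That closes the induction, and specializing $N=S$ gives the theorem.

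Two technical points deserve care. First, the handling of $\varphi_=$: I must argue precisely that the lifted equality constraint in $Op^\sharp$ propagates the interval-membership hypotheses into the sub-derivations --- this is where one uses that $\varphi_=$ is a conjunction of \emph{equalities} (over $x,x_i,y_i$) and not arbitrary formulas, so that each $x_i$ component is literally one of the components of $x$ or of some earlier $y_j$, and interval membership is preserved componentwise by the product order. Second, the footnote case of multiple CHCs per production: there the abstract semantics takes the join of the intervals produced by all the CHCs, and soundness follows because the concrete value produced by whichever CHC fired is contained in that CHC's abstract interval, hence in the join (this uses monotonicity of membership under $\sqsubseteq$, i.e.\ $y\in[a,b]$ and $[a,b]\sqsubseteq[a',b']$ imply $y\in[a',b']$). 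I would state this as a remark rather than belabor it.

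The main obstacle is not any single deep step but rather the bookkeeping around the interaction of the two fixpoints with the structural recursion on $P$: the concrete semantics $\dennon{N}{P'}$ is itself only defined as a least fixpoint, so ``unfold the CHC $Op$'' must be justified by induction on the concrete derivation, and one must be careful that the structural induction on $P$ (which controls the abstract side) and the fixpoint induction on the concrete derivation (which controls the concrete side) are nested in a compatible order. Concretely, for fixed $P$ I would first fix $P'$ and then do induction on the height of the concrete derivation of $\dennon{S}{P'}(x)=y$, at each step peeling off one application of a concrete CHC and matching it against the corresponding abstract CHC, appealing to the \emph{structural} induction hypothesis for the strictly smaller sub-programs $P_i$. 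Once that nesting is pinned down, every individual step is a routine application of \Cref{thm:mono-interval-abstraction}, \Cref{def:hole:abstraction}, and the definition of the product/interval orders.
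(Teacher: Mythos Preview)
Your approach is correct and is essentially the same as the paper's: both arguments are a compositional/structural induction that invokes \Cref{thm:mono-interval-abstraction} at each production (falling back to $\top$ in the non-monotone case) and \Cref{def:hole:abstraction} at each hole. The paper's proof is a three-sentence sketch that says exactly this and leaves the induction implicit under the word ``compositionally''; your plan fills in the details the paper omits, in particular the careful handling of $\varphi_=$ when the input to one child is the output of an earlier one.

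One minor remark: the ``double induction'' you worry about is more machinery than you need here. Because each CHC in \Cref{def:semantics} has premises only about the immediate subterms $t_1,\ldots,t_n$ and a conclusion about $\rho(t_1,\ldots,t_n)$, the concrete semantics of any \emph{complete} program is determined by a single pass of structural recursion on the term; the least-fixpoint formulation collapses to ordinary tree recursion. So a single structural induction on the partial program $P$ suffices: in the inductive step you unfold both CHCs once, process the children left to right (so that when $\varphi_=$ feeds some $y_j$ into a later $x_i$ you already have $y_j\in[m_j,M_j]$ from the IH on child $j$), and then apply \Cref{thm:mono-interval-abstraction}. Your treatment of the multiple-CHC footnote via the join is also correct and matches the paper's intent.
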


\subsection{Extending to Nearly-Monotonic Semantics}
\label{sec:nearly-monotonic}

Recall that in \Cref{ex:monotonicity}, the function $f_{ITE}((x,y), b, s_1, s_2) = (\text{if } b \text{ then } s_1 \text{ else } s_2)$ was monotonic with respect to all its arguments, except the conditional guard $b$.
%
% There are many situations that naturally arise where a programming operator is monotonic with respect to all but a small number of arguments.
%
This section shows how one can easily modify our framework to generate an abstract semantics that is more precise than $\top$ in situations where a function is not monotonic in some of its arguments.

\paragraph{Partial Hole Filling}
When enumerating children of an if-then-else production, one will have three holes to fill: a Boolean guard $b$, and the two branches.
If the synthesis algorithm deterministically filled holes left-to-right (as done in our implementation and many other synthesizers), then the Boolean guard would be filled first.
When the conditional guard is filled concretely with a value $b_c$, the function $f_{ITE}$ can be partially evaluated to obtain a new function $g_{\textit{ITE}}((x,y), s_1, s_2) = (\textrm{if}~b_c~\textrm{then}~s_1~\textrm{else}~s_2)$ that is now monotone in all its arguments!
So, even without modifying our analysis, at this point our algorithm can compute a precise abstract transformer to decide whether the partial program (where the guard has been filled) can be pruned.

\paragraph{Enumerating Finite Domains}
In the example above, because the variable $b$ can only have two values, $\text{true}$ and $\text{false}$, we can actually attempt to compute a non-trivial abstract semantics even when the Boolean guard has not yet been filled.
Specifically, we can compute an interval overapproximation of $f_{ITE}((x,y), b, s_1, s_2)$ by taking the join of the abstractions of the two partially evaluated functions $f_{ITE}((x,y), \text{true}, s_1, s_2)$ and $f_{ITE}((x,y), \text{false}, s_1, s_2)$.
More formally, $\hat{f}_{ITE}([L,U], [\text{false}, \text{true}], [l_{s_1}, u_{s_1}], [l_{s_2}, u_{s_2}]) = [f_{ITE}(L, \text{false}, l_{s_1}, l_{s_2}), f_{ITE}(U, \text{false}, u_{s_1}, u_{s_2})] \sqcup [f_{ITE}(L, \text{true}, l_{s_1}, l_{s_2}), f_{ITE}(U, \text{true}, u_{s_1}, u_{s_2})]$.
This idea can be generalized to any setting in which a variable can only assume finitely many values by simply taking the join over all finite instantiations of that variable.
We present a complete formalization in~\iflabeldefined{sec:app:enum-domain}{\Cref{sec:app:enum-domain}}{Appendix B}.
\section{Computing Precise Hole Abstractions via Grammar Flow Analysis}
\label{sec:precise-abstractions}

% \loris{I plan to collapse this sec as a subseti of previous one}

% \twr{Actually, more to the point is to reiterate that for this section, you are assuming that monotonicity holds, and thus the ``endpoint recipe'' holds.} \rahul{fixed?}

Theorem~\ref{thm:soundness-endpoint-semantics} gives us a recipe for automatically generating interval abstract semantics for productions in the original grammar.
In this section, we assume access to such an interval abstract semantics.
However, the technique as presented so far does not give us a direct way to compute precise sound hole abstractions even when the semantics is monotonic (\Cref{def:hole:abstraction}).
Assigning the interval $\top$ to the semantics of a hole always gives us a sound hole abstraction. %, but in certain settings, this abstraction can be very imprecise.
However, formulating the problem in the \semgus framework allows us to define a more precise abstraction that can be often automatically determined.

\begin{example}[A Precise Hole Abstraction]
\label{ex:precise-hole}
    Consider a setting in which someone wants to synthesize a program without subtraction, and provides a simplified version $G_{I}^+$ of the grammar $G_I$ where the productions for nonterminal $E$ are as follows (i.e., the production $E-E$ has been removed):    
    \begin{align*}        
        E ::= 0 \mid 1 \mid x \mid y \mid E+E
    \end{align*}
    We assume the semantics is identical to the one in \Cref{fig:imp-sem}.
    This grammar can only produce terms that, when evaluated on non-negative inputs for $x$ and $y$, produce numbers that are greater than or equal to one of $x$, $y$ and 0. 
    The following hole abstraction captures this idea and is more precise than the one outputting $\top=[-\infty,\infty]$:
    \begin{prooftree}
        \AxiomC{$[l,u]=[(\textrm{if}~l_x, l_y \geq 0~\textrm{then}~0~\textrm{else}~ {-\infty}), \infty]$}
        \RightLabel{$\textit{Hole}_E$}
        \UnaryInfC{{$\denanon{E}{\hole_E}([(l_x, l_y), (u_x, u_y)]) = [l,u]$}}
    \end{prooftree}
    For example, for the input $([1,3],[2,5])$ this abstract semantics will produce the interval $[0,\infty]$, which is much more precise than $[-\infty,\infty]$.
\end{example}

\subsubsection*{Solving for One Value at a Time}
While the previous example showed a very precise abstract semantics for the hole (in fact, the most precise), 
it is challenging to come up with complex expressions automatically, like ``$\textrm{if}~l_x, l_y \geq 0~\textrm{then}~0~\textrm{else}~{-\infty}$,'' and to guarantee that they are indeed precise abstractions, for all possible values in $(l_x, l_y), (u_x, u_y)$---i.e., the problem is an expression-synthesis problem~\cite{sygus}.
However, if a specific input value $c=[(1,2),(3,5)]$ is provided, the constraints posed by  \Cref{def:hole:abstraction}, become simpler.
\begin{equation}
\label{eq:constant:hole}
\forall (x,y) \in c, \forall P \in \LL(E), \dennon{E}{P}(x,y) \in \gamma(\denanon{E}{\hole_E}(c))    
\end{equation}

Intuitively, we want $\denanon{E}{\hole_E}(c)$ to overapproximate the set of outputs any program $P \in \LL(E)$  can produce on \textit{the specific input} of $c$.
The problem of finding a solution $[l,u]$ to \Cref{eq:constant:hole} can be phrased in an abstract-interpretation framework called Grammar-Flow Analysis (GFA)~\cite{gfa}.
%\citet{nay} showed that the problem of finding a solution $[l,u]$ to \Cref{eq:constant:hole} can be phrased in an abstract-interpretation framework called Grammar-Flow Analysis (GFA)~\cite{gfa}.
% 
In GFA, this constraint can be stated using the following equations ($\sqcup$ denotes the join/union of two intervals).
%
% \rahul{The equation below uses the abstract semantics we determined using monotonicity. It's actually possible to get tighter bounds, even when a production is not monotonic. For instance, if a production $N \to x^2$, instead of producing the interval $[-\infty, \infty]$ since $x^2$ is not monotonic, my tool is able to reason that it should instead be $[0, \infty]$.}
\begin{equation}
\label{eq:gfa}
\dena{\hole_E}(c) \sqsupseteq \dena{0}(c) \sqcup \dena{1}(c) \sqcup \dena{x}(c) \sqcup \dena{y}(c) \sqcup \dena{\hole_E+\hole_E}(c)
\end{equation}
Intuitively, the constraints state that the abstract semantics of all the possible programs derivable from $E$ should be included in the semantics of $\hole_E$.

Note how the constraints are defined over the abstract semantics of the productions of that nonterminal, so we can directly apply our precise intervals based on our analysis from~\Cref{sec:auto-gen-sem}.
Because we are looking for a solution to $\gamma(\denanon{N}{\hole_N}(c))$ in the form of an interval $[l,u]$, we can rewrite \Cref{eq:gfa} as follows (where all abstract semantics have been rewritten according to their semantic rules):
\begin{equation}
\label{eq:gfa:int}
[l,u] \sqsupseteq [0,0] \sqcup [1,1] \sqcup [1,3] \sqcup [2,5] \sqcup [l+l,u+u]
\end{equation}
By unrolling the definitions, the constraint in \Cref{eq:gfa:int} can be rewritten as follows:
\begin{equation}
\label{eq:gfa:fo}
\forall v, (0\leq v\leq 0) \vee (1\leq v\leq 1) \vee (1\leq v\leq 3) \vee (2 \leq v\leq 5) \vee (l+l\leq v\leq u+u) \Rightarrow (l\leq v\leq u)
\end{equation}
As a constraint on $l$ and $u$, \Cref{eq:gfa:fo} holds for $l \leq 0$ and $u = \infty$.
The \emph{tightest} solution is $l = 0$ and $u = \infty$,
and thus the answer we seek is the interval $[0, \infty]$.
% The constraint in \Cref{eq:gfa:fo} admits every solution where $l\leq 0$ and $u=\infty$.

This procedure for computing precise hole abstractions can be lifted to any synthesis problem in the \semgus format.
Specifically, for every nonterminal $N$ in the grammar $G$, we want to construct a function $\denanon{N}{\hole_N}$ that, on an input $x$, returns a precise hole abstraction in the form of an interval $[l_N, u_N]$.
Because a semantics defined using \semgus is
represented inductively, we are able to reason about each production locally, which results in the following system of constraints:
\begin{definition}[Interval GFA]
\label{def:gfa:eqns}
Given a grammar $G$ with interval abstract semantics $\dena{\cdot}$, as in~\Cref{def:chc:compilation},
\textit{interval grammar flow analysis} is the problem of computing $\denanon{N}{\hole_N}(x)$ for each nonterminal $N \in G$,
which is defined by the following system of constraints:
\begin{equation}
\label{eq:gfa:gen}
    \denanon{N}{\hole_N}(x) \sqsupseteq \bigsqcup \{ \dena{p}(x) \mid (N \to p) \in G\} \qquad \text{for all } N \in G.
\end{equation}
\end{definition}

% Note that this system of equations is \textit{recursive}, because $\dena{p}$ can include references to nonterminal $N$ as well as other nonterminals of $G$.
%
% Because the system of equations is recursive, computing a tighter solution to the abstract hole semantics of one nonterminal can tighten the solution of other nonterminals that it appears on the right-hand side of.
%
% This also means that in general, we cannot obtain solutions to individual equations independently of the others.

Our goal is to solve the GFA constraints to obtain the tightest possible intervals $\denanon{N}{\hole_N}(x) = [l_N, u_N]$.
To avoid dealing with SMT maximization objectives (note that the formulas generated by GFA have quantifiers), our algorithm rewrites~\Cref{eq:gfa:gen} as a first-order logical constraint for each nonterminal $N$, where the bounds $l_N, u_N$ are free variables to be solved for:
\begin{equation}
\label{eq:gfa:gen:fo}
    \forall v. \bigvee_{\{p \mid (N \to p) \in G\}} \big( v \in \dena{p}(x) \big) \Rightarrow v \in [l_N, u_N]
\end{equation}
If we then unroll the interval definitions as we did for~\Cref{eq:gfa} (i.e., $x \in [l,u]$ can be rewritten as $l \preceq x \land x \preceq u$),
the optimal solution can be computed through a linear search minimization (or maximization) procedure---e.g., by iteratively asking an SMT solver to find a tighter solution to our interval bounds $l_N, u_N$ than the one computed in a previous step.
\changed{
The following theorem establishes when this iterative procedure is guaranteed to terminate:
\begin{restatable}[Termination of Iterative GFA]{theorem}{gfatermination}
\label{thm:gfa:termination}
    Suppose that the intervals $[l_N, u_N]$ from \Cref{eq:gfa:gen:fo} belong to an interval domain $\mathcal{D}$ equipped with a partial order $\preceq$.
    If $\mathcal{D}$ contains no infinite descending chains (i.e., $\prec$ is well-founded), then any algorithm that iteratively solves for $l_N, u_N$ such that $[l_N, u_N] \sqsubsetneq \dena{\hole_N}_N$ will terminate in a finite number of steps.
% \twr{Don't we want $\sqsubsetneq$? (See main.tex for idiom to avoid interference with other symbol definitions.)}
    %
    % \rahul{Just a note here: technically, termination from a slightly weaker condition, that the subset of the domain bounding the interval endpoints doesn't have infinite descending chains (e.g. suppose $\dena{\hole_N}_N = [a,b]$ where $a,b \in \mathbb{Z}$. Since $a < l_N \leq u_N < b$, there are only finitely many ways to bound $l_N$ and $u_N$. So, even though $\mathbb{Z}$ has infinite descending chains, we're still guaranteed termination since the subdomain $[a,b] \subset \mathbb{Z}$ has no infinite descending chains). But this seems needlessly complicated for what we're trying to show.}
\end{restatable}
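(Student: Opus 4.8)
The statement is essentially an order-theoretic observation, so the plan is to reduce it to the definition of well-foundedness once we pin down exactly what ``iteratively solves'' produces. First I would make the iterative procedure precise: it maintains a current family of interval abstractions $(\denanon{N}{\hole_N})_{N \in \mathcal{N}}$ --- equivalently, a single element of the product domain $\mathcal{D}^{|\mathcal{N}|}$ --- and at each step it replaces this element by a new one that is pointwise $\sqsubseteq$ the old one and strictly below it ($\sqsubsetneq$) in at least one coordinate; this is exactly the hypothesis $[l_N,u_N] \sqsubsetneq \dena{\hole_N}_N$ (and it matches the intended monotone minimization search, in which a hole abstraction is only ever tightened, never loosened). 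The procedure halts precisely when no such strictly tighter feasible element exists. Hence the sequence of states visited by the algorithm is a strictly descending chain in $\mathcal{D}^{|\mathcal{N}|}$ under the pointwise order, and ``the algorithm terminates'' is equivalent to ``every such chain is finite.''

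The key step is then to lift well-foundedness from $\mathcal{D}$ to the finite product $\mathcal{D}^{|\mathcal{N}|}$. A regular-tree grammar has finitely many nonterminals (\Cref{def:RTG}), and a finite product of well-founded partial orders is again well-founded under the pointwise order. I would prove the product fact in one line: given a strictly descending chain in the product, each coordinate traces a (weakly) descending chain in $\mathcal{D}$; in a well-founded poset a weakly descending chain can strictly decrease only finitely often, hence is eventually constant; taking the maximum of the finitely many stabilization points, all coordinates are simultaneously constant past that point, which contradicts strict descent of the product chain at every step. Combining this with the reduction of the first paragraph, the chain of states produced by the algorithm is finite, i.e., the algorithm halts after finitely many iterations. (The single-nonterminal reading of the theorem is the case $|\mathcal{N}|=1$ and is immediate.)

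I expect the only real subtlety to be bookkeeping around the iterative procedure rather than any deep argument. One must observe that the hypothesis forces consecutive iterates to be comparable --- strictly tighter --- so that the iterates genuinely form a $\sqsubset$-chain, and one must account for the fact that interval GFA solves all nonterminals jointly, which is exactly where the passage to the product domain is needed. No reasoning about the SMT solver, about the GFA constraint system \Cref{eq:gfa:gen:fo}, or about the existence of a least solution is required: the theorem asserts only that a process which keeps producing strictly smaller elements of a well-founded domain cannot run forever. For context one could remark that well-foundedness is also essentially necessary --- an infinite descending chain in $\mathcal{D}$ could be followed forever by an adversarial instance of ``any algorithm'' --- but since the theorem claims only sufficiency I would not belabor this point.
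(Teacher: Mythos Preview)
Your proposal is correct and follows essentially the same approach as the paper: both arguments observe that the iterates form a strictly descending chain and invoke well-foundedness of $\mathcal{D}$ to conclude the chain must be finite. The paper's proof is a terse contradiction argument for a single nonterminal, whereas you additionally and explicitly handle the joint multi-nonterminal procedure by lifting well-foundedness to the finite product $\mathcal{D}^{|\mathcal{N}|}$; this extra care is appropriate given the paper's stated intent to tighten all hole abstractions simultaneously, but it does not change the underlying idea.
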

}

\changed{
Because \Cref{thm:gfa:termination} proves that there are only finitely many possible iterations when the intervals are members of a domain with no infinite descending chains, the following approach is guaranteed to terminate under this condition:
%
% \twrchanged{
we first construct the formula in \Cref{eq:gfa:gen:fo},
% }
starting with the largest possible interval, and iteratively query tighter bounds on each of the interval endpoints (e.g., the free variables $l_N, u_N$).
We repeat this process by updating the bounds of our abstract semantics $\dena{\hole_N}_N$ with our previously solved bounds $l_N, u_N$ until the constraint solver returns unsat.
One advantage of this process is that at termination, when the constraint solver returns unsat,
% \twrchanged{
the interval-hole abstract semantics is guaranteed to be the most precise solution.
%\twr{value?}.
% }
%
Another advantage is that, even if
% \twrchanged{
this process is terminated with an intermediate solution (i.e., the constraint solver has not yet returned unsat, but, e.g., iterative GFA had exceeded some timeout threshold),
% }
this tightened interval may not be most-precise, but is still a sound hole abstraction
% \twrchanged{
\cite{DBLP:journals/entcs/BarrettK10}.
% }
%
Note that solving for hole abstractions individually is not optimal,
% \twrchanged{
because a tighter hole abstraction for one nonterminal can lead to tighter solutions for
% }
other hole abstractions. 
Thus, we define our constraints over all hole abstractions simultaneously, where we aim to tighten at least one of the intervals
% \twrchanged{
on each iteration.
% }
% Note that it is possible that tightening one hole abstraction can lead to the tightening of another.
%
% Thus, we solve for solutions to tighter interval hole abstractions of all holes simultaneously. 
}
Our solver takes this approach.\footnote{
  Abstract interpretation over intervals often involves using a widening operator.
  With the method described above, no widening is needed because
\changed{
  it starts from $\top$---an over-approximation---and makes a sequence of calls to a logic solver, rather than starting from $\bot$---an under-approximation---and performing successive approximation.
}
  Yet another approach would be to use algorithms for computing precise least solutions to systems of (certain classes of) interval equations
  \cite{DBLP:journals/tcs/SuW05,DBLP:conf/birthday/GawlitzaLRSSW09}.
}

The following theorem establishes that our computed precise hole abstractions are sound according to \Cref{def:hole:abstraction}:
% We refer the reader to another paper to the work by~\citet{nay} for a complete formalization of GFA and its connection to this problem.
% The following corollary follows from their result.
\begin{restatable}[Precise Hole Abstractions]{theorem}{preciseholeabstractions}
\label{thm:gfa:hole:sound}
    For every nonterminal $N$ in a grammar $G$, the following rule is a sound hole abstraction if {$GFASol(\denanon{N}{\hole_N}(x), G)$} is a valid solution to interval grammar flow analysis for the value of $x$:
    {
    \begin{prooftree}
        \AxiomC{$[l,u]=\textit{GFASol}(\denanon{N}{\hole_N}(x), G)$}
        \RightLabel{$\textit{Hole}_N$}
        \UnaryInfC{$\denanon{N}{\hole_N}(x) = [l,u]$}
    \end{prooftree}
    }
\end{restatable}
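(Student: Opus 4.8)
The plan is to show that the rule $\textit{Hole}_N$ defined using a valid GFA solution satisfies the soundness condition of \Cref{def:hole:abstraction}---namely, that for all $[l_1,u_1]$, all $l_1 \preceq_I x \preceq_I u_1$, all $P \in \LL(N)$, we have $\dennon{N}{P}(x) \in \denanon{N}{\hole_N}([l_1,u_1])$. Since the GFA constraints are stated per fixed input value $x$, I first reduce to the per-input statement: it suffices to prove that for each concrete input $x$, whenever $[l,u] = \textit{GFASol}(\denanon{N}{\hole_N}(x), G)$ is a valid solution to the interval GFA constraints of \Cref{def:gfa:eqns}, then for every $P \in \LL(N)$, $\dennon{N}{P}(x) \in [l,u]$. (Strictly, $\denanon{N}{\hole_N}$ takes an interval argument, but evaluating on the point interval $[x,x]$, or threading the GFA solution through as in \Cref{ex:precise-hole}, recovers the per-input formulation; I would make this bookkeeping explicit at the start.)

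The main argument is then a structural induction on the derivation $\hole_N \mapsto^* P$, i.e., on the structure of the complete program $P \in \LL(N)$. For the base case and inductive step uniformly: $P = \rho(t_1,\ldots,t_m)$ for some production $N \to \rho(N_1,\ldots,N_m)$ with each $t_j \in \LL(N_j)$. By the induction hypothesis applied to each $t_j$ (using that $\textit{GFASol}$ is a valid GFA solution \emph{simultaneously} for all nonterminals, so $\denanon{N_j}{\hole_{N_j}}(x_j)$ over-approximates $\dennon{N_j}{t_j}$ at the relevant sub-inputs $x_j$), we get $\dennon{N_j}{t_j}(x_j) \in \denanon{N_j}{\hole_{N_j}}(x_j)$. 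The abstract semantics CHC $\rho^\sharp$ is, by \Cref{def:chc:compilation}, either the endpoint extension $\hat f$ of the concrete $f$ (when $f$ is monotone) or the constant $\top$; in either case \Cref{thm:mono-interval-abstraction} (together with \Cref{thm:soundness-endpoint-semantics} for threading soundness through the sub-derivations and the equalities $\varphi_=$) gives $\dennon{N}{P}(x) \in \dena{\rho(t_1,\ldots,t_m)}(\ldots) = \dena{p}(x)$ where $p$ is the production applied. Finally, the GFA constraint \Cref{eq:gfa:gen}, $\denanon{N}{\hole_N}(x) \sqsupseteq \bigsqcup\{\dena{p}(x) \mid (N \to p) \in G\}$, together with the fact that $A \sqsubseteq B$ implies $\gamma(A) \subseteq \gamma(B)$ (monotonicity of concretization w.r.t.\ $\sqsubseteq$, immediate from the definition of the interval order and $\preceq$), yields $\dennon{N}{P}(x) \in \dena{p}(x) \subseteq \denanon{N}{\hole_N}(x) = [l,u]$, closing the induction.

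I anticipate two points needing care. First, the ``simultaneity'' issue: the GFA system is mutually recursive across nonterminals, and a single valid solution $\textit{GFASol}$ assigns an interval to every $(N,x)$ pair consistently; the induction must be set up so that the IH is available for \emph{all} nonterminals at \emph{all} sub-inputs encountered, not just for $N$ itself. The cleanest way is to prove the stronger statement ``for all $N$, all $x$, all $P \in \LL(N)$: $\dennon{N}{P}(x) \in \denanon{N}{\hole_N}(x)$'' by induction on the size of $P$, so the IH ranges over all smaller programs regardless of which nonterminal generates them. Second, I should be careful that the concrete semantics $\dennon{N}{P}$ is itself well-defined (a least fixed point, per \Cref{def:semantics}) and that the sub-input values $x_j$ fed to the children $t_j$ are exactly those forced by $\varphi_=$ and $f$ in the CHC; this is where I lean on the fact that $\den{\cdot}$ is deterministic (the restricted CHC fragment of \Cref{def:semantics}), so there is a unique derivation tree to induct on. With these in place the proof is essentially an unwinding of the definitions plus one application of \Cref{thm:mono-interval-abstraction} at each node; the main obstacle is purely organizational---phrasing the induction invariant strongly enough to handle the mutual recursion across nonterminals and the per-input (rather than per-interval) nature of the GFA constraints.
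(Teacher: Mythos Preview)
Your proposal is correct and follows the same high-level strategy as the paper: combine the GFA inequality $\denanon{N}{\hole_N}(x) \sqsupseteq \dena{p}(x)$ with the soundness of the abstract production semantics to conclude $\dennon{N}{P}(x) \in \denanon{N}{\hole_N}(x)$. The difference is that the paper's proof is a short chain that invokes \Cref{def:abs-sem} directly on the one-step partial program $\rho(\hole_{N_1},\ldots,\hole_{N_n})$, whereas you unfold this into an explicit structural induction on $P$. Your version is more careful: the paper's appeal to \Cref{def:abs-sem} for that partial program already presupposes that the hole abstractions for the child nonterminals $N_i$ are sound, which is the very thing being proved, so an induction (exactly the one you set up, on program size and ranging over all nonterminals simultaneously) is what actually closes the loop. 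The two ``points needing care'' you flag---simultaneity across nonterminals and the per-input nature of the GFA constraints---are precisely where the paper's terse proof is silent.
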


% \loris{if we have space, we can maybe add above another example about the Regex domain and how to solve for the CSV example since that's our killer app} \rahul{I'll check space once we remove the existing comments}
%
We are able to solve for precise hole abstractions in problems beyond integer arithmetic.
For instance, we can derive precise hole abstractions for the CSV-schema  example from~\Cref{sec:csv:precise}.

In practice, implementing an abstract semantics requires one to solve {$\textit{GFASol}(\denanon{N}{\hole_N}(x), G)$} for every possible input for which the semantics of {$\denanon{N}{\hole_N}$} needs to be evaluated during our enumeration algorithm. 
We will discuss this aspect in~\Cref{sec:evaluation}.
%
% \loris{do we discuss?}\rahul{the last few sentences before sec7.1?}

\section{Synthesis of Order Relations}
\label{sec:order-synth}

We have shown that abstract semantics can be easily extracted from a concrete semantics that is monotonic with respect to some (partial) order relations over the inputs and outputs.
So far, we have assumed that these order relations are given to us.
In this section, we show how to ``choose'' a best set of orders from a set of possible orderings.
Between two sets of orders, we prefer the one for which the most productions in the grammar exhibit a  monotone semantics.
\begin{definition}[Comparison of Orders]
    \label{def:best-orders}
    Given a grammar $G$, a semantics $(\den{\cdot},\sigma)$ describing variables with types in the set $\{T_1,\ldots, T_N\}$, and a set of orders $\omega=\{\preceq_1,\ldots, \preceq_n\}$, we define $\textit{Mon}_G(\omega)$ as the set of productions in $G$ for which the semantics are monotone with respect to $\omega$.

    We say that a set of orders $\omega_1$ is \textit{better} than a set of orders $\omega_2$ if 
    $|\textit{Mon}_G(\omega_1)| \geq |Mon_G(\omega_2)|$. 
    %\rahul{I changed this from $>$ to $\geq$ to for my changes to Def 6.3}
\end{definition}
The following example illustrates this definition.
\begin{example}[Bitvectors]\label{ex:bitvector-orders}
    Consider the following grammar $G_{\textit{bv}}$ for expressions
    over bitvectors.
    \[
        B ::= x \mid \texttt{bvand}~B~B\mid \texttt{bvor}~B~B \mid \texttt{bvadd}~B~B
    \]
    Assume that all variables are associated with one type: unsigned bitvectors of size 8.
    Consider a saturating semantics of \texttt{bvadd} in which addition saturates at $2^8-1$, and the following two possible orders over bitvectors:
    \begin{itemize}
        \item $v_1 \preceq_{\textit{bw}} v_2$ if every bit in $v_1$ is less than the corresponding bit in $v_2$; in this case $\textit{Mon}_G(\{\preceq_{\textit{bw}}\})$ contains the productions for \texttt{bvand} and \texttt{bvor};
        \item $v_1 \preceq_{\textit{bvleq}} v_2$ if the integer value of $v_1$ is smaller or equal than the integer value of $v_2$; in this case $\textit{Mon}_G(\{\preceq_{\textit{bvleq}}\})$ only contains the production for \texttt{bvadd}.
    \end{itemize}
    The set of orders $\{\preceq_{\textit{bw}}\}$ is better than the set of orders $\{\preceq_{\textit{bvleq}}\}$ because it causes the semantics of two productions to be monotonic instead of just one.
\end{example}

While our framework can accommodate multiple possible orders and prune the search space using multiple abstract semantics all evaluated in parallel, this process can become expensive. 
Moreover, if too few productions are monotonic, the resulting abstract semantics will often just yield $\top$
(\Cref{def:chc:compilation}) and not be able to prune any programs.
The following definition captures the problem of synthesizing a best set of orders that maximizes monotonicity.
\begin{definition}
  \label{def:order-synthesis-problem}
    Given a grammar $G$, a semantics $(\den{\cdot},\sigma)$ describing variables with types in the set $\{T_1,\ldots, T_N\}$, and a search space of orders $\Omega$, the \textit{order-synthesis problem} is to find a best set of orders $\omega=\{\preceq_1,\ldots, \preceq_n\}$ such that every order $\preceq_i\in \Omega$.
    Here, a \textit{best set of orders} is a set of orders $\omega$ such that $\omega$ is better than any set of orders in $\Omega$, i.e. $\forall \omega' \in \Omega. |\textit{Mon}_G(\omega)| \geq |Mon_G(\omega')|$.
    (Note that there can be multiple best orders.)
% \twr{
%     This definition has not said what ``best'' means. \Cref{ex:bitvector-orders} says that a ``better'' order (singular) is one that covers a greater number of productions, but we haven't said what a ``best set of orders'' (plural) means.
%     Why isn't $\omega$ the set of all orders in $\Omega$ that are monotonic, in which case there would be a single best $\omega$?
% } \rahul{we define a best set of orders in \Cref{def:best-orders}, right? the set just represents the orders for each nonterminal}
% \twr{No. \Cref{def:best-orders} defined ``better than;'' it does not define ``best.''  Just because you have a better-than relation does not mean that you have a best element.} \rahul{Ahh, ok fixed}
    
\end{definition}

Our implementation only considers sets of orders $\Omega$ that are finite (and typically small), and solves the order-synthesis problem by enumerating all orders in $\Omega$, computing how many productions are monotonic for each order, and returning a best one.
We focus on enumerating orders that are conjunctions of smaller ``atomic'' orders from a pre-defined base set \changed{that can easily be augmented as needed}.
These atomic orders are required to match the type of each argument (e.g., $\leq$ for integers; $\to$ for Boolean; $\preceq_{bw}$ and $\preceq_{bvleq}$ comparison for bitvectors, etc.).
This enumeration allow us to consider orders over complex semantic objects with mixed types.
\changed{
We found that initializing $\Omega$ with conjunctions over these atomic orders, while simple, performs well in practice and captures many reasonable orders for our monotonicity analysis.
}
\paragraph{A note on order optimality:}
There exist more advanced order-synthesis techniques that could improve the overall  synthesis procedure. 
Currently, we locally maximize the number of productions at each non-terminal, but do not reason about global optimizations.
It is possible that there is a stronger relation between the grammar and order, where it is more important to maximize productions closer to the starting nonterminal of the grammar, rather than the total number of productions.
Additionally, expanding the \changed{types of orders} considered in the order-synthesis problem \changed{(e.g., beyond conjunctive orders)} could also lead to improvements in generating
% \twrchanged{
the
% }
abstract semantics.
\changed{
However, continuing to perform naive enumeration after nontrivially expanding the set of orders could lead to an intractable order-synthesis problem.
}
Designing new algorithms for efficiently synthesizing best orders is an interesting direction beyond the scope of this paper.

\section{Evaluation}
\label{sec:evaluation}

We implemented our techniques in a tool called \toolname, which takes problems in the \semgus format as input~\cite{kim2021semantics} and consists of four components:
\rone A monotonicity checker, built over SMT solvers Z3 \cite{z3} and CVC5 \cite{cvc5}, that can determine what productions are monotonic with respect to a given partial order.
\rtwo An order synthesizer, which enumerates a set of partial orders and applies the monotonicity checker to find a solution to an order-synthesis problem. 
These orders are conjunctions over smaller predefined atomic orders from a defined base set.
\rthree A solver for grammar-flow analysis, which automatically constructs the system of equations from Section~\ref{sec:precise-abstractions}, and performs a fixed-point computation by generating an SMT formula and querying Z3 to iteratively tighten interval bounds. 
%\rahul{(this might be where we add gfa regex thing we discussed? (agg the symbols, take the star?)}
%
\rfour A top-down enumerative synthesizer for \semgus problems that takes the output from the previous steps, and generates the corresponding interval semantics, and uses it to prune partial programs during program synthesis.

% \twrchanged{
Components (i), (ii), and (iii) are independent of (iv), in the sense that their output is reusable by other synthesis tools.
The output of running (i), (ii), and (iii) is emitted as a JSON artifact,
% }
which describes what productions are monotonic and in what directions, as well as the computed hole abstractions on any provided input-output examples.
The artifact can be used by \emph{any} top-down enumerative solver to enable abstraction-based synthesis of \semgus problems.
In cases where we know a semantics is monotonic, but proving it is beyond the capabilities of current SMT solvers, the JSON format allows us to provide to a solver the information necessary to enable abstraction-based pruning.
Our synthesizer (component (iv)) currently supports example-based synthesis problems, and invokes component (iii) to compute the precise hole abstractions for all nonterminals and examples \textit{before} starting the enumeration (instead of dynamically calling them when evaluating the semantics and caching solutions, which can be be prohibitively expensive).

% \loris{in the eval we need to note that these tools [alpharegex and simpl] are for now still faster because domain specific and report their results. We don't need to run them since they use a diff format}

\subsection{Research Questions and Benchmarks}
In our evaluation, we consider as baseline a naive enumeration algorithm that does not perform any pruning---this baseline is the same one used to evaluate the \semgus unrealizability prover Messy~\cite{kim2021semantics}.
We do not compare against Messy because Messy can check whether a \semgus problem can be solved, but it cannot compute a solution.
Our limited set of baselines is due to the fact that \toolname is a domain-agnostic synthesizer, and we are unaware of any other synthesizers that support programs in the \semgus format or that can run all of our benchmarks.
%
% %
% \rahul{The original Semgus paper had a table showing what problems were (un/)solvable with certain solvers. Such a table be useful here too to show our generality compared to everything but MESSY}

Our experimental evaluation is designed to answer the following questions:
\begin{description}
    \item[RQ1] How effective is monotonicity-based pruning when compared to naive enumeration?
    
    \item[RQ2] How effective is our construction of precise hole abstractions from \Cref{sec:precise-abstractions}?

    \item[RQ3] How effective is \toolname at identifying interval-abstract semantics?      

    % \item[RQ4] How does \toolname compare with domain-specific synthesizers?
\end{description}
% \kjcj{Remove RQ4 and below discussion of MESSY if we don't have time.}

All experiments were run on a cluster \cite{chtc}, with each
node having an AMD EPYC 7763 64-Core Processor, of which we requested two cores and
12 GiB of RAM. 
We set a timeout value of 2000s and memory limit of 8 GiB.\footnote{Due to the nature of the computing cluster, there can be large variance
in run time between trials; however, this variance was not enough to change whether or not a problem was solvable under the time and memory constraints.}
We ran each experiment $5$ times, and report the median of these runs.

\subsubsection*{Benchmarks}
We conducted our evaluation on  \totalbenchmarks\  \semgus benchmarks from multiple domains\changed{, building on top of the \semgus Toolkit benchmark set \cite{semgus-toolkit}}.
% \twr{I don't understand where 265 comes from.
% Below you describe categories with $84 + 10 + 100 + 238 +5 = 437 \neq 265$.  What is going on?} \rahul{numbers should now be fixed: 82 + 10 + 100 + 238 + 5 = 435} \kjcj{This matches my independent calculation. But we only report synthesis on 440 - call that out here? Other 5 are from Mell et al. and only for RQ3.} \rahul{I mention it below, but if you think it would be better to do so here}
% 
% We specifically aimed at identifying benchmarks from the domain for which we knew, in addition to benchmarks that are not supported by any domain-specific synthesizer and exemplify the generality of our approach.

The first benchmark category includes 82 regular-expression synthesis problems encoded in \semgus (many of these benchmarks are part of the public \semgus benchmark set).
There are two ways to encode the semantics of regular expressions that each have pros and cons: \rone a \textit{shallow} semantics that maps a program to the corresponding term in the SMT theory of regular expressions, \rtwo a deep semantics that maps a program $r$ and a string $s$ to a Boolean \textit{matrix} that tells us what substrings of $s$ the expression $r$ accepts (see \iflabeldefined{sec:app:regex}{\Cref{sec:app:regex}}{Appendix A}).
The benchmark problems include the  AlphaRegex benchmarks~\cite{lee2016alpharegex} {(25 shallow, {25} matrix)}, regular expressions with complex operators such as negation and character classes ({1 shallow, 14 matrix}), and CSV-format synthesis problems like in \Cref{sec:csv:precise} ({15 shallow, 2 matrix}) from the CSV schema language \cite{csv-schema}.
The shallow semantics is fast to execute using corresponding regex libraries, but it has limited support in constraint solvers (e.g., no solver for the theory for of quantified formulas).
For this reason, for this semantics we must provide the JSON artifact of what productions are monotonic manually (note that effectively we only need to do so once, because all benchmarks share the same operators).
The GFA intervals for these shallow semantics can be computed by considering only intervals of the form $[\emptyset, S^*]$, where $S$ is the set of characters appearing in each grammar production.
The matrix semantics is slower to execute, but enables constraint solving.

The second benchmark category consists of 10 problems over imperative programming languages with semantics over integer variables.
% 
%10 of t
These problems are simple imperative problems that were taken from the \semgus imperative benchmark dataset.
These imperative benchmarks mostly perform loop-free operations (e.g., swap) with three benchmarks involving loops.
%
%The other 10 problems implement the task of synthesizing a score card---i.e., a sequence of if-then operations that each increments an integer by a constant---used to build a binary classifier---i.e., if the score exceeds a given amount.
%
%The thresholds and increments are randomly generated numbers with an increasing range between each benchmark, and the semantics and constraints were generated accordingly \loris{how?}.

The third benchmark category
consists of {100} problems involving
loop-free programs over bitvector variables. 
The {100} bitvector benchmarks are actually 25 concrete synthesis problems by \citet{brahma} expressed with different semantics (note that the ability to modify the program semantics is one of the key features of \semgus).
The four semantics are: \rone the traditional bitvector semantics encoded in SMTLib, \rtwo a \textit{saturated semantics}, where in the case of possible underflowing or overflowing, the result will instead take on the minimum or maximum value, respectively \cite{saturatedbv}, \rthree a semantics that introduces intermediate variables for all sub-expressions being synthesized (similar to three-address code in compilers), and \rfour a combination of the second and third semantics.
Unlike the other categories, the synthesis constraints are specified as logical formulas, requiring our enumeration algorithm to perform a Counterexample Guided Inductive Synthesis (CEGIS) loop that restarts the algorithm with new examples at each iteration.
Because of this dynamic aspect of the algorithm, we do not run the precise hole abstraction from~\Cref{sec:precise-abstractions} on this set of benchmarks; it would require solving SMT constraints while performing enumeration. 
%\rahul{what if we provided a set of I/O examples in addition to the logical constraint to run GFA here?}
%\loris{we do this for charlie paper, good idea would it help though? And worried about time}

The fourth category of benchmarks contains 238 problems where the goal is to synthesize Boolean formulas with restricted syntaxes---e.g., cubes (84 benchmarks), CNF (77 benchmarks), and DNF (77 benchmarks). 
Because a general-purpose solver cannot compete with specialized state-of-the-art synthesizers for Boolean formulas, we created a simple dataset of Boolean-function synthesis problems by randomly generating formulas of the different syntax styles with varying lengths (3-11 for cube, 2-12 for CNF/DNF) and number of variables (4-15 for cube, 4-10 for CNF/DNF).

The fifth category of benchmarks contains {5} problems adapted from \citet{video-trajectories}, a new approach for synthesizing data-classification programs over a quantitative objective function---e.g., accuracy or $F_1$ score---that manually exploits monotonicity.
%
% \loris{5 benchmarks 2 DSL?}
% \twr{clarify?  Is a "problem" a "benchmark"?
% Is it $5 \times 2$ because of the two DSLs?}
These benchmarks were created using the two DSLs presented by \citeauthor{video-trajectories} that use folding/map operators and Kleene algebra with tests, both to synthesize video trajectories.
%
% \rahulchanged{
One of the benchmarks is an encoding of their simple motivating example. The semantics of each DSL is encoded twice to create four other benchmarks with differing input sizes (because \semgus requires one to define the number of inputs).
% }
%
In contrast to the other categories, we exclusively use these benchmarks to evaluate RQ3 (i.e., whether our tool can synthesize abstract semantics), because traditional top-down enumeration is not sufficient to solve synthesis problems with quantitative objectives.
These five benchmarks are not included in our total of \totalbenchmarks.
%
% The focus for these benchmarks is to demonstrate the ability to automatically infer abstract semantics in a completely new domain.
%
% We picked these benchmarks in particular as \citet{video-trajectories} also focus on generating abstract semantics from the monotonicity of their operators.

\subsection{Effectiveness of Monotonicity-based Pruning}

% \paragraph{Quantitative Analysis}
We evaluated the effectiveness of \toolname on our benchmarks when an abstract semantics is provided;
we also measured the time taken to compute the abstract semantics (see \Cref{sec:eval:cost:sem}).
We present data for baseline top-down enumeration (\baseline),
 top-down enumeration with interval-based pruning without precise hole abstractions (\toolnamenogfa), and
 top-down enumeration with interval-based pruning with precise hole abstractions (\toolnamegfa).
In what follows, we use \toolname to denote the virtual best version of our tool that runs \toolnamenogfa and \toolnamegfa in parallel and reports the result of the first terminating instance, and \virtualbest to denote the virtual best version solver that runs \toolnamenogfa, \toolnamegfa, and \baseline in parallel.

\changed{
\Cref{tab:performance_bins} provides an overview comparing the performance between \toolname and \baseline.
\toolname ({241/\totalbenchmarks} solved) can solve {24} more benchmarks than \baseline ({217/\totalbenchmarks} solved). 
The benchmarks that \toolname can solve but \baseline cannot, fall into the following categories: regular-expressions/CSV (10), imperative (2), bit vectors (6), and Boolean (6).
% \twr{Follow the same order as they are listed in the table.}
% \twrchanged{
% bit vectors (6), Boolean (6), regular-expressions/CSV
% }
% (10), and imperative (2).
% \twr{Why are you not breaking out CSV separately?} \rahul{these are the 4 major categories described in the previous section.}
%
\toolname also significantly outperforms \baseline on 58 additional benchmarks that both could solve.
These 82 benchmarks where \toolname markedly outperforms \baseline are typically the larger and more complex benchmarks that were solvable.
On the other hand, there are 48 benchmarks where \toolname is considerably slower than \baseline.
There were no benchmarks that only \baseline could solve that \toolname could not.
The fact that \toolname is not always faster is due to a known issue in program synthesis: computing an abstract semantics and checking whether \emph{every} partial program can be pruned can be more expensive than simply exploring the search space, especially if few programs are pruned \cite{absynthe}.
}

% \begin{table}
%     \centering
%     \begin{tabular}{c|c|ccc|c}
%         \toprule[.1em]
%          Domain & \toolname only & >15\% faster & +/-15\% & >15\% slower & \baseline only \\
%         \midrule[.1em]
%         Regex Matrix & 1 & 5 & 11 & 12 & - \\
%         Regex Shallow & 2 & 7 & 6 & 5 & - \\
%         CSV & 7 & 5 & 3 & - & - \\
%         Imperative & 2 & 4 & 4 & - & - \\
%         Bitvector & 6 & 6 & 28 & 4 & - \\
%         Boolean & 6 & 31 & 60 & 27 & - \\
%         \midrule[.1em]
%         Total & 24 & 58 & 112 & 48 & - \\
%         \bottomrule[.1em]
%     \end{tabular}
%     \caption{Benchmark performance of \toolname over \baseline, broken out by benchmark category}
%     \label{tab:performance_bins}
% \end{table}

\begin{table}
    \centering
    \changed{
    \begin{tabular}{c|c|ccc|c}
        \toprule[.1em]
         \multicolumn{1}{c|}{Domain} & \toolname only & >15\% faster & +/-15\% & >15\% slower & \baseline only \\
        \midrule[.1em]
        \multirow{3}{*}{
          \begin{tabular}{r|l}
             \multirow{3}{*}{Regex}  & Matrix \\
                                     & Shallow \\
                                     & CSV
          \end{tabular}
        }
                      & 1 & 5 & 11 & 12 & - \\
                      & 2 & 7 & 6 & 5 & - \\
                      & 7 & 5 & 3 & - & - \\
        Imperative & 2 & 4 & 4 & - & - \\
        Bitvector & 6 & 6 & 28 & 4 & - \\
        Boolean & 6 & 31 & 60 & 27 & - \\
        \midrule[.1em]
        Total & 24 & 58 & 112 & 48 & - \\
        \bottomrule[.1em]      
        \bottomrule[.1em]
    \end{tabular}
    }
    % \captionsetup{labelfont={color=blue},font={color=blue}}
    \caption{Benchmark performance of \toolname over \baseline, broken out by benchmark category}
    \label{tab:performance_bins}
\end{table}

Figure~\ref{fig:overall-time} shows a cactus plot of the cumulative number of benchmarks solved by each configuration after a certain amount of time.
\toolname (and both \toolnamenogfa and \toolnamegfa) can solve more benchmarks than \baseline in cumulatively less time.
Similar trends are observed for memory usage in~\Cref{fig:overall-memory}.
% 
% For the benchmarks that both \toolname and \baseline can solve, there is no clear winner in terms of running time: \baseline is faster for {124} benchmarks, \toolname is faster for {88} benchmarks, and 5 are identical. But, {\toolname is {1.2x} faster than \baseline} if considering a geometric mean.
% 
% The fact that \toolname is not always faster is due to a known issue in program synthesis: computing an abstract semantics and checking whether \emph{every} partial program can be pruned can be more expensive than simply exploring the search space, especially if few programs are pruned \rahulchanged{\cite{absynthe}}.
%
Figure~\ref{fig:overall-concrete} shows the number of concrete programs enumerated for each benchmark before  returning a solution, 
which reveals how many programs \toolname can prune: for {75\%} of the benchmarks, \toolname explores fewer than {50\%} of the programs explored by \baseline.
% This plot shows pruning at work; there is a distinct number (what number?) of
The trends shown in Figure~\ref{fig:overall-concrete} hint that this difference would grow if we were to consider longer timeouts.

The benchmarks that none of the tools can solve
fall into the following categories: regular-expressions/CSV {(19)}, bit vectors {(56)}, and Boolean {(114)}.
% \twrchanged{
% bit vectors {(56)}, Boolean {(114)}, and regular expressions/CSV {(19)}.
% }
% \twr{Correct? Why are you not breaking out CSV separately?}
%
All 10 imperative benchmarks could be solved.
The vast majority of unsolved benchmarks require finding large solutions that are far beyond the search space explored in the given time: 
\changed{
for instance, the Boolean benchmarks contain up to twelve clauses or 15 variables, and the bitvector benchmarks contain up to 15 imperative variables or a solution AST of size up to 59.
Moreover, because there are four instances of each bitvector problem with different semantics, and the four instances are of similar difficulty, \toolname is more likely to solve all four or none of the instances for a particular problem.
}
% Two bit vector benchmarks fail during CEGIS due to the underlying SMT solver returning unknown, and six additional bit vector benchmarks fail due to using SMT-LIB features not supported by
% the synthesizer.
%
Figure~\ref{fig:unsolved-max-size} shows how deep in the search space (i.e., what program sizes) each tool got for the benchmarks that each tool could run but not solve.\footnote{Note that the $x$-axis of Figure~\ref{fig:unsolved-max-size} represents all benchmarks that timed out for some tool, and thus is a different set than the $x$-axes of Figures~\ref{fig:overall-time}, \ref{fig:overall-memory}, and \ref{fig:overall-concrete}.}
% }
% 
In general, \toolname enumerated much larger programs than \baseline (e.g., size 40 vs.\ 27) before timing out, thus showing its ability to reach deeper into the search space.

% \paragraph{Qualitative Analysis}

\textbf{To answer RQ1:} The monotonicity-based pruning approach allows \toolname to solve more benchmarks than the enumeration \baseline ({241 vs.\ 217}), with \toolnamenogfa solving 236 and \toolnamegfa all 241. 
%\rahul{Are these 24 that only we can solve over the baseline just bc of mono, or also bc of gfa?}\kjcj{This includes mono + GFA.}
% 
For the benchmarks that both \toolname and \baseline can solve, \toolname shows modest improvements in the time taken by \toolname to solve some of the benchmarks, as well as less memory being required.
Even when the benchmarks do not finish within the time limit or memory limit, \toolname reaches larger program sizes in the search space than those explored by \baseline.
%
% \kjcj{Not sure about including the below.}
% \toolname can automatically prove monotonicity for a large variety of domains, which enables speedups in these domains.
%
% Similarly, our method easily generalizes to different types of problems in these domains.
% %

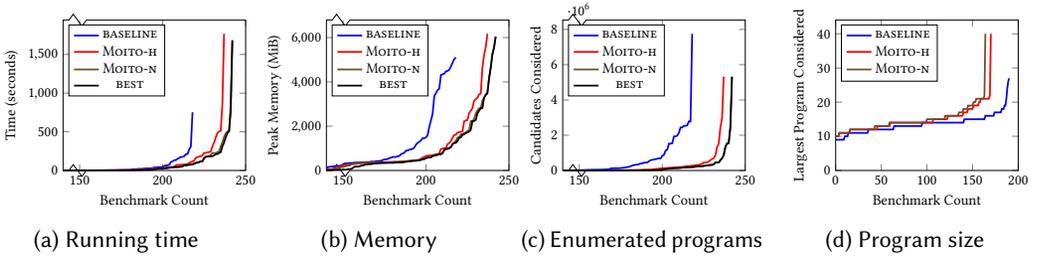
\begin{figure}
\setlength{\belowcaptionskip}{-10pt}
\hfill
\begin{subfigure}[b]{0.23\textwidth}
    \centering
     \scalebox{0.55}{
    \begin{tikzpicture}
    \begin{axis}[        
        ylabel={Time (seconds)},
        xlabel={Benchmark Count},
        xmin=140,
        xmax=250,
        ymin=0,
        width=6cm,
        legend pos=north west,
        no markers,
        axis x discontinuity=crunch
        ]
       \addplot+[sharp plot,mark=none,very thick] table [x index=0,y index=1] {figs/eval/data/cactus-time.dat};
       \addplot+[sharp plot,mark=none,very thick] table [x index=0,y index=3] {figs/eval/data/cactus-time.dat};
       \addplot+[sharp plot,mark=none,very thick] table [x index=0,y index=2] {figs/eval/data/cactus-time.dat};
       \addplot+[sharp plot,mark=none,very thick] table [x index=0,y index=4] {figs/eval/data/cactus-time.dat};
       \legend{\baseline,\toolnamenogfa,\toolnamegfa,\virtualbest}
    \end{axis}
\end{tikzpicture}
    }
    \caption{Running time}
    \label{fig:overall-time}
\end{subfigure}
\hfill
\begin{subfigure}[b]{0.23\textwidth}
    \centering
    \scalebox{0.55}{
    \begin{tikzpicture}
    \begin{axis}[
        ylabel={Peak Memory (MiB)},
        xlabel={Benchmark Count},
        xmin=140,
        xmax=250,  
        ymin=0,
        width=6cm,
        legend pos=north west,
        no markers,
        axis x discontinuity=crunch]
       \addplot+[sharp plot,mark=none,very thick] table [x index=0,y index=1] {figs/eval/data/cactus-memory.dat};
       \addplot+[sharp plot,mark=none,very thick] table [x index=0,y index=3] {figs/eval/data/cactus-memory.dat};
       \addplot+[sharp plot,mark=none,very thick] table [x index=0,y index=2] {figs/eval/data/cactus-memory.dat};
       \addplot+[sharp plot,mark=none,very thick] table [x index=0,y index=4] {figs/eval/data/cactus-memory.dat};
       \legend{\baseline,\toolnamenogfa,\toolnamegfa,\virtualbest}
    \end{axis}
\end{tikzpicture}
    }
    \caption{Memory}
    \label{fig:overall-memory}
\end{subfigure}
\hfill
\begin{subfigure}[b]{0.23\textwidth}
    \centering
    \scalebox{0.55}{
    \begin{tikzpicture}
    \begin{axis}[
        ylabel={Candidates Considered},
        xlabel={Benchmark Count},
        xmin=140,
        xmax=250,
        ymin=0,
        width=6cm,
        legend pos=north west,
        no markers,
        axis x discontinuity=crunch]
       \addplot+[sharp plot,mark=none,very thick] table [x index=0,y index=1] {figs/eval/data/cactus-concrete.dat};
       \addplot+[sharp plot,mark=none,very thick] table [x index=0,y index=3] {figs/eval/data/cactus-concrete.dat};
       \addplot+[sharp plot,mark=none,very thick] table [x index=0,y index=2] {figs/eval/data/cactus-concrete.dat};
       \addplot+[sharp plot,mark=none,very thick] table [x index=0,y index=4] {figs/eval/data/cactus-concrete.dat};
       \legend{\baseline,\toolnamenogfa,\toolnamegfa,\virtualbest}
    \end{axis}
\end{tikzpicture}
    }
    \caption{Enumerated programs}
    \label{fig:overall-concrete}
\end{subfigure}
\hfill
\begin{subfigure}[b]{0.23\textwidth}
    \centering
    \scalebox{0.55}{
    \begin{tikzpicture}
    \begin{axis}[
        ylabel={Largest Program Considered},
        xlabel={Benchmark Count},
        xmin=0,
        xmax=200,
        ymin=0,
        width=6cm,
        legend pos=north west,
        no markers]
       \addplot+[sharp plot,mark=none,very thick] table [x index=0,y index=1] {figs/eval/data/cactus-max-size.dat};
       \addplot+[sharp plot,mark=none,very thick] table [x index=0,y index=3] {figs/eval/data/cactus-max-size.dat};
       \addplot+[sharp plot,mark=none,very thick] table [x index=0,y index=2] {figs/eval/data/cactus-max-size.dat};
       \legend{\baseline,\toolnamenogfa,\toolnamegfa}
    \end{axis}
\end{tikzpicture}
    }
    \caption{Program size}
    \label{fig:unsolved-max-size}
\end{subfigure}
\caption{
The first three plots compare \toolnamenogfa, \toolnamegfa, and \baseline across time, memory, and number of enumerated programs (we start the $x$-axis at {150} to better illustrate the interesting behavior---i.e., we do not show the behavior on the {150} easiest benchmarks). 
The last plot shows the maximum size reached before timing out on problems that could not be solved. Note that a line that is lower and to the right represents better performance in the first three plots, and higher and to the left in the final plot.}
\end{figure}

\changed{
\paragraph{A note on comparisons to state-of-the-art solvers:}
\toolname can solve {19/25} AlphaRegex~\cite{lee2016alpharegex} benchmarks within our timeout of 2000s.
% \twr{Say here what the timeout threshold was.}
(when using the {shallow} semantics), whereas AlphaRegex reports solving 25/25 in less than a minute each.
AlphaRegex implements many other domain-specific optimizations (e.g., a simple form of equality saturation) that our tool cannot automate because the input problem is given as an arbitrary \semgus file.
For Simpl~\cite{so2017synthesizing}, a direct comparison is more difficult, because their benchmark set focuses on loops. 
\toolname cannot prove loops monotonic, and so our technique would not produce pruning benefits on these benchmarks.
On the other hand, many of our non-Boolean benchmarks (97/192) are beyond the reach of existing customized synthesizers. 
Specifically, 5/39 matrix regular expressions are not expressible in AlphaRegex (due to negation operators). None of the CSV benchmarks (17/17) are supported by AlphaRegex (due to character classes beyond “0”/“1” and CSV-format-specific grammars).
75/100 of the bitvector benchmarks are not solvable using Brahma~\cite{brahma}, because the benchmarks use an alternative semantics (imperative and/or saturating semantics). 
%
% \twrchanged{
% For all of the 238 Boolean benchmarks in the \semgus format (which each use a set of I/O examples as the input specification), we are not aware of any customized solver that we can compare \toolname against.
% }
% \twr{Correct?}
%
This generality is an advantage of a parameterized framework like SemGuS, where users can apply a solver that supports all instantiations of the framework, rather than a dozen domain-specific tools with their own restrictive DSLs.
%
% Furthermore, while these benchmarks are not supported by their original tools, \toolname is general enough to generate interval abstractions for these problems completely for free, without any additional human engineering effort.
}

\subsection{Effectiveness of Precise Hole Abstractions}
\label{sec:eval:gfa}
% \rahul{We observe speedups on certain benchmarks (CSV, hopefully bitvectors), little effect on others}
%\note{Things to measure:
%\begin{itemize}
%    \item Time and \#ASTs between baseline enumeration, with monotonicity no gfa, and with monotonicity + gfa
%    \item Time to perform GFA before synthesis (and average time per example)
%    \item Number of iterations to converge (and average num iterations)
%\end{itemize}
%}
\Cref{fig:gfa-time-compare} compares \toolnamegfa and \toolnamenogfa using a scatter plot. Both variants seem to be beneficial
\begin{wrapfigure}{r}{0.33\textwidth}
\vspace{-3mm}
    \centering
    \scalebox{0.75}{
    \begin{tikzpicture}
    \tikzset{mark options={mark size=1, opacity=0.5}}
    \begin{axis}[
        ylabel={\toolnamegfa Runtime (s)},
        xlabel={\toolnamenogfa Runtime (s)},
        ymode=log,
        xmode=log,
        xtick={1,10,100}, ytick={1,10,100},
        width=6cm]
       \addplot+[only marks] table [x index=1,y index=2] {figs/eval/data/gfa-compare.dat};
       \addplot+[mark=none,solid, color=black] coordinates {(0.1,0.1) (2000,2000)};
    \end{axis}
\end{tikzpicture}
    }
    \vspace{-2mm}
    \caption{\toolnamegfa vs \toolnamenogfa (log,log)}
    \label{fig:gfa-time-compare}
    \vspace{-4mm}
\end{wrapfigure}
in different settings, although on average, \toolnamegfa is 6\% faster than \toolnamenogfa (geomean, variance 1.06), {and \toolnamenogfa explores
twice as many programs as \toolnamegfa.} Additionally, there
were 5 CSV benchmarks that were only solved by \toolnamegfa.
We conjecture that \toolnamegfa is sometimes slower than \toolnamenogfa because
while \toolnamegfa can compute more precise hole abstractions, propagating the semantics of intervals different than $\top$ is generally more expensive.
Therefore, in cases where the increased precision does not prune more programs, \toolnamegfa is slower.

As expected, \toolnamegfa performs much better than \toolnamenogfa on benchmarks where it can prune many more programs\changed{, and about the same where it cannot.}
\changed{
This bimodality is evident in Figure~\ref{fig:gfa-time-compare}; most benchmarks are on the 1x (no improvement) diagonal line, but there is a subset of benchmarks below the 1x diagonal line, showing substantial improvement. A frequency analysis shows the largest cluster of benchmarks within 0.95x - 1.05x improvement (172 of 236), with a second cluster of 18 benchmarks above 1.20x improvement (max of 6x improvement).
Similar trends are seen for the number of concrete candidate programs considered for each benchmark: most benchmarks (154 of 236) do not check any fewer programs with \toolnamegfa, with a long tail of benchmarks showing improvement (10 benchmarks check over 100x fewer programs).
} 
%\kjcj{If room (or appendix), we could throw in a histogram plot.}
% 
For example, for the {10/17} benchmarks in the CSV category solved by both tools, \toolnamegfa explores on average only 0.1\% of the programs explored by \toolnamenogfa, and is on average 2.6x faster.
% In fact, \rahulchanged{on} many of the CSV benchmarks ({7/10})\rahulchanged{, \toolnamegfa only had to evaluate} one or two concrete programs, compared to between 52 and 16288 with \toolnamenogfa. 

Benchmarks in other categories also took advantage of \toolnamegfa.
% \twrchanged{
For example, for the imperative benchmark ``max3-impv,'' (which computes the maximum of 3 integers), \toolnamegfa \rone proved that there were non-terminals that returned only values in the interval $[0, \infty]$, 
\rtwo ran 24\% faster, and
\rthree checked 47\% fewer concrete programs.
% }
For other cases where \toolnamegfa could not compute better intervals than $\top$, both \toolnamegfa and \toolnamenogfa checked the same number of concrete programs and had similar running times,
% \twrchanged{
as expected.
% }
% 
% One is \textit{swap2-impv}, an imperative benchmark to synthesize a program that swaps two numbers.
% In this case, we found that all output variables would only
% take on positive integers, so any partial programs that produced a negative lower bound
% could be discarded. Another is \textit{csv\_03-shallow}, a regular-expression benchmark.
% This benchmark has a grammar that is split into terms that match letters and numbers,
% so we could construct intervals for, e.g., numbers, with \texttt{(re.none)} as the bottom 
% and \texttt{(re.* (re.range "0" "9"))} as the top.

% In this section, we evaluate the effectiveness of generating precise hole abstractions in \toolname. 
%
% Figures~\ref{fig:gfa-time-compare} and \ref{fig:gfa-ast-compare} show the percentage difference in runtime and
% concrete candidate programs checked with and without precise hole abstractions.

\textbf{To answer RQ2:} %\toolnamegfa and \toolnamenogfa have incomparable performances and are both beneficial.
\toolnamegfa is very beneficial for instances when the structure of the grammar restricts the possible values of certain nonterminals, and therefore the more precise hole abstractions can prune many programs.

\subsection{Effectiveness of Computing Abstract Semantics}
\label{sec:eval:cost:sem}
% Table~\ref{tab:pipeline-timing} reports the time taken by \toolname to compute the abstract semantics for all operators in the grammar and to compute precise hole abstractions for all the input examples.
% 
% \begin{table}
% \caption{Timing \twr{Time?} to Compute Interval Abstract Semantics}
%     \label{tab:pipeline-timing}
%     \centering
%     \begin{tabular}{c|cc|cc}
% \toprule[.1em]    
%         \ & \multicolumn{2}{|c|}{Interval Abstract Semantics} & \multicolumn{2}{|c|}{Precise Hole Abstractions}\\
%         Domain & Time Range & Average & Total Time & Time per Example \\
%         \midrule[.1em]
%         Boolean          & [0.45, 0.71]  & 0.59  & 11 & 0.13 \\
%         Bit Vectors      & [0.99, 130] & 11  & \multicolumn{2}{l}{Unsupported}\\
%         Matrix Regex     & [5.4, 1000+]  & 150 & 11 & 1.2 \\
%         Other Imperative & [1.2, 1.4]  & 1.3 & 0.82 & 0.91 \\
%   \bottomrule[.1em]
% \bottomrule[.1em]      
%     \end{tabular}    
% \end{table}

\toolname can compute the interval abstract semantics for {373}/430\ 
 of the original benchmarks.
\toolname can also prove monotonicity on all five benchmarks adapted from \citet{video-trajectories}, where \toolname was able to automatically generate the abstract semantics that were manually defined by \citeauthor{video-trajectories}.
% \twr{This sentence makes no sense because you said that the 5 benchmarks from Mell are not counted in the 430.  Reword it.}

Computing the abstract semantics timed out on 12 of the imperative benchmarks over bitvectors and 4 regular-expression with matrix-semantics benchmarks. 
These benchmarks involved semantics in which functions take 10-100 variables as input, thus causing the order-synthesis algorithm to consider many possible order combinations.
As we mentioned, there is currently no solver that supports the quantified theory of regular expressions, which is necessary for computing an abstract semantics of the {41} regular-expression benchmarks with shallow semantics.
% \twr{
% Again, I don't understand your counts of the numbers of benchmarks.  You talk about computing the interval semantics for 249 out of 265 benchmarks.
% Thus, you couldn't do it for 16 benchmarks, which are the $12 + 4$.  So what are the 29 regular-expression benchmarks with shallow semantics?
% They can't be part of the 265, so you have 265 + 29 benchmarks?  How am I supposed to understand that number compared to the $84 + 10 + 100 + 238 +5 = 437$ benchmarks discussed earlier.
% }

The time to compute the abstract semantics varied across domains.
All the Boolean benchmarks could terminate in less than a second (avg 0.57s) and the imperative benchmarks took around 0.8 to 5 seconds each (avg 2.1s).
The variance was larger for regular-expressions and bitvector benchmarks, and we show a detailed analysis of these categories in Figure~\ref{fig:pipeline-time-compare}.
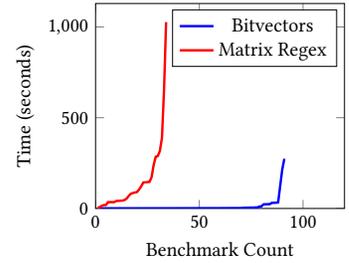
\begin{wrapfigure}{r}{0.33\textwidth}
\vspace{-3mm}
    \centering
    \scalebox{0.75}{
    \begin{tikzpicture}
    \begin{axis}[        
        ylabel={Time (seconds)},
        xlabel={Benchmark Count},
        xmin=0,
        xmax=120,
        ymin=0,
        width=6cm,
        legend pos=north east,
        no markers]
       \addplot+[sharp plot,mark=none,very thick] table [x index=0,y index=1] {figs/eval/data/pipeline-cactus.dat};
       \addplot+[sharp plot,mark=none,very thick] table [x index=0,y index=2] {figs/eval/data/pipeline-cactus.dat};
       \legend{Bitvectors,Matrix Regex}
    \end{axis}
\end{tikzpicture}
    }
    \vspace{-4mm}
    \caption{Time to compute interval abstract semantics}
    \label{fig:pipeline-time-compare}
    \vspace{-4mm}
\end{wrapfigure}
On average, it took 150s to compute the semantics for regular-expression benchmarks, and the time scales with the size of the matrices used in the semantics (i.e., the length of the input examples).
Similarly, computing the semantics of bitvector benchmarks took on average 11s with the time scaling
%\twr{What is meant by ``scaling''?  Do you mean ``increasing linearly''? ``increasing roughly linearly''?
%``Scaling'' used in the way you used it is a waffling word that should not be used.
%} \rahul{exponentially?}
exponentially with the number of variables in the considered programming language, because the size of the order search space grows exponentially.

We note that in practice, an abstract semantics does not need to be recomputed every time the specification of the input problem changes, as long as the language remains the same.
% 
% Furthermore, our current strategy for synthesizing orders is fairly naive and this aspect of our framework could be improved if one were to identify better order-synthesis techniques.
% % 

While \toolname always outputs an abstract semantics, the individual CHCs are only precise abstractions if the original semantics was monotonic.
\toolname can identify orders under which the semantics is monotonic for all the productions in the grammars for regular expressions, Boolean, and
imperative programs.
In the case of {bitvectors}, \toolname can find at least one order for either the traditional or saturating semantics for ${\sim}{87}\%$ of productions.
%\kjcj{Not sure where this number came from:} \twrchanged{
%${\sim}{83}\%$ of the productions have a monotonic semantics (at the same time).
%} \rahul{it wasn't at the same time, right? @Keith do you remember what you were trying to say here?}
% 
We observed variability in these benchmarks;
in one case only 6 out of 11 productions could be proven monotonic.
An example showing why a bitvector semantics is not monotonic for all productions
was illustrated in \Cref{sec:order-synth}.

% %
% However, due to the overhead of constructing these intervals in practice \rahul{(I assume we report this data somewhere)}, we only use a single order to generate all of our intervals.

Computing precise hole abstractions takes on average {0.73}s per input example, although the time can vary across different applications (regular expressions {1.3}s, imperative {9.9}s, Boolean {0.02}s).
Although this step can be costly, we note that \toolnamegfa can sometimes {provide large performance gains on certain benchmarks that \toolnamenogfa cannot} (\Cref{sec:eval:gfa}).
{As previously mentioned, we omit the bitvector benchmarks from this analysis because they use logical specifications.}

% \loris{did you discuss mell data?}

\textbf{To answer RQ3:} \toolname can discover precise abstract semantics (and precise hole abstractions) for most benchmarks.
This result confirms that our framework can \emph{automatically} discover many of the domain-specific techniques used in existing tools, and generalizes them to new domains (e.g., Boolean formulas, bitvector programs, and DSLs for video trajectories).

\section{Related Work}
\label{sec:related-work}

\subsubsection*{Top-Down Enumeration and Pruning}
A number of papers have addressed the problem of program synthesis by applying top-down enumeration in specific domains, such as regular expressions \cite{lee2016alpharegex}, imperative programs \cite{so2017synthesizing}, SQL queries \cite{wang2017scythe}, Datalog programs \cite{si2018datalog}, and functional programs \cite{synquid}.
These tools differ from our work in two key ways.
First, our approach applies to arbitrary synthesis problems defined in the \semgus framework, whereas these tools each implement a solution to \emph{one fixed} domain-specific synthesis problem.
Because of this specificity, these tools outperform \toolname on their respective tasks, but their implementations are monolithic and tailored to such tasks. 
Second, while these tools implement hard-coded pruning strategies, \toolname automatically discovers pruning opportunities by extracting an abstract semantics for the given \semgus problem.
In summary, our work can automatically discover ways to prune in top-down enumeration for problems defined in the \semgus framework, while these tools use manually-defined pruning strategies that target specific domains.

Besides subsuming several of the pruning strategies used by AlphaRegex~\cite{lee2016alpharegex} and SIMPL~\cite{so2017synthesizing},
our interval-based framework also captures some of the pruning approaches used when synthesizing SQL queries \cite{wang2017scythe}, Datalog programs \cite{si2018datalog}, and data-processing tasks \cite{video-trajectories}.
Scythe~\cite{wang2017scythe} (indirectly) uses an interval $[T_l,T_u]$ (where $T_l$ and $T_u$ are tables) to represent what possible output tables could be the result of evaluating the completion of a partial SQL query, and uses the fact that most queries are monotonic with respect to the predicates appearing in a where-clause---i.e., a more permissive clause yields a bigger table.
\citet{si2018datalog} use a similar insight to construct an interval (akin to a version-space algebra)  over the set of Datalog programs that are consistent with a set of input-output examples.
%
% \rahulchanged{
% These intervals are constructed over meta-rules with relation symbols to be filled in, with monotonic refinement operators to construct further precise intervals.}
% \loris
% 
We do not evaluate our approach on these applications because the \semgus format currently lacks some features that are necessary to model these applications (e.g., the theory of bags for SQL and fixed-point logics for Datalog), and the sizes of the inputs used in these domains (e.g., tables) are only within the reach of domain-specific tools.

\citet{video-trajectories} use interval abstractions when productions are monotonic to guide their search for optimal synthesis, and their specific monotonicity property can be viewed as a specific instantiation of ours. 
Their work focuses on two specific domains, that of numbers and Booleans under their standard orders.
While \citeauthor{video-trajectories} had to \textit{manually} prove monotonicity \textit{a priori}, \toolname was able to \textit{automatically} infer that the semantics constructs used in their experiments are monotonic (\Cref{sec:evaluation}).
Although their work uses monotonicity in an additional way---i.e., to maximize an objective function---their tool could be another client of our automated monotonicity analysis.

\subsubsection*{Abstraction-Guided Synthesis}
Complex forms of abstraction-based pruning have been applied in many other program-synthesis tools \cite{so2017synthesizing, vechev2010, 2017wangcegar}.
However, these tools are also domain-specific and cannot tackle \semgus problems.

\textsc{Simba} \cite{simba} combines forward abstract interpretation (for soundly approximating the set of possible outputs obtainable from inputs of partial programs) with backward abstract interpretation (for approximating the set of possible inputs, starting from the outputs).
\textsc{Simba} only supports \sygus problems (i.e., expression-synthesis problems) and requires the user to manually provide highly-precise abstract semantics.
Despite this limitation, an interesting research direction is whether \textsc{Simba}'s approach can be automatically generalized to \semgus in the same way that our framework automatically generalizes interval-based pruning---i.e., we hope our work will be the first of many in this spirit. 
\textsc{Absynthe} \cite{absynthe} is a general-purpose framework for synthesis with abstraction-based pruning that allows users to manually supply abstract semantics for the language over which synthesis is being performed.
In contrast, \toolname can automatically discover a precise abstract interval semantics from the user-provided concrete semantics.

\subsubsection*{Generating Abstract Semantics} 
Amurth~\cite{amurth} synthesizes abstract semantics for arbitrary languages by asking a user to provide a grammar of possible abstract functions to choose from (i.e., our function $f^\sharp$ in \Cref{def:chc:compilation}).
While Amurth is very general, it is not fully automated and requires the user of the tool to provide a specialized grammar for each abstract domain---and in many cases for each abstract function.
(These grammars often contain complex insights on what a particular abstract function should look like.)
Instead, our work is based on monotonicity conditions under which abstract semantics can be generated \textit{automatically}.
Combining Amurth with our tool to generate abstract semantics beyond the automatically generated ones discussed in this paper is an interesting research direction, although it would require ways to identify what grammars one should provide to Amurth.

Atlas \cite{atlas} also learns abstractions for pruning in synthesis. 
%
% While Atlas and \toolname serve to generate abstract transformers for program synthesis, the goals are different. 
%
Atlas considers abstract domains consisting of linear equalities, which work better than intervals in certain settings---e.g., reasoning about string lengths---but are limited to numerical domains---e.g., they cannot capture the conjunctions of Booleans used in many of our benchmarks. 
%
% Also, the time it takes \toolname to determine abstract transformers over a new DSL is significantly smaller than that for Atlas, which has to solve second-order equations within a refinement loop. 
%
Most importantly, Atlas requires a set of training problems to synthesize an abstract domain and the transformers (i.e., the tool needs to be trained for every new domain), while \toolname is domain-agnostic, does not require a training phase, and it synthesizes an abstract semantics directly from the provided concrete semantics alone. 
%
% An interesting research direction would be to see how Atlas could be adapted to interface with \semgus solvers in conjunction with \toolname.

\subsubsection*{Other Forms of Enumeration}
% Besides top-down enumeration, there are other approaches to solving synthesis problems.
% 
Bottom-up enumeration enumerates \textit{subprograms} of increasing size derivable from each nonterminal in the grammar, and prunes the search space by only maintaining programs that are observationally inequivalent on the examples~\cite{eusolver}.
Observational equivalence can be automated for expression-synthesis problems, but not, for example, for synthesizing imperative programs.
This limitation is due to the fact that in an imperative programming language, different subprograms are evaluated on different states (i.e., the programs are stateful).
For the same reason, there is currently no way to implement bottom-up enumeration (with pruning) for \semgus problems, hence our focus on top-down enumeration.
Hybrid versions of top-down and bottom-up enumeration share the same limitation~\cite{duet}.

\subsubsection*{Symbolic Solvers}
\textsc{Messy} \cite{kim2021semantics} is currently the only published \semgus solver, and therefore the only solver that is designed to solve the same range of tasks as \toolname.
\textsc{Messy} employs a constraint-based approach for solving \semgus problems using a Constrained Horn Clause solver with the dual goals of being able to synthesize a program or to prove that the synthesis problem is unrealizable.
While \textsc{Messy} performs well for proving unrealizability of \semgus problems, it has effectively no synthesis capabilities (i.e., it cannot produce an output program when a problem is realizable); therefore we do not compare \toolname against it in our evaluation.

The baseline implementation of top-down enumeration used in our evaluation is the same as the baseline called \textsc{MessyEnum} used to evaluate \textsc{Messy}.
To the best of our knowledge, \toolname is the first enumeration technique for \semgus problems that moves beyond naive enumeration.

\section{Conclusion}
This paper presents a unified framework for determining precise interval abstract semantics that can speed up program synthesis via enumeration for problems written in the \semgus framework.
Unlike existing works on top-down enumeration, our framework is domain-agnostic (i.e., it does not know \textit{a priori} the semantics of programs appearing in the search space).

\changed{
Recall that the solvers in the initial SyGuS competition were unable to solve a majority of the original SyGuS benchmarks.
The difficulty of the benchmark set then spurred the development of second- and third-generation solvers that could solve most of the competition benchmarks \cite{sygus-comp}.
}
\semgus and the existing \semgus solvers are currently in their infancy---i.e., they have limited 
\changed{
scalability---but we hope that \toolname is the first in a series of improved \semgus solvers.
In particular,
}
our work opens the door for generalizing and automating many other domain-specific synthesis techniques, so that they can be lifted to a general framework like \semgus.
For example, the Amurth~\cite{amurth} tool for synthesizing abstract transformers or the Atlas~\cite{atlas} tool for constructing linear abstractions could be combined with our theory to specialize our work to more complex abstract domains.
In the same way domain-specific insights have caused tremendous speedups in \sygus solvers (which initially could only solve trivial problems), we are hopeful that future efforts similar to the one described in this paper will result in \semgus solvers that---despite their generality---can solve complex problems.

% \rahulchanged{
\section*{Data-Availability Statement}
The software that supports~\Cref{sec:evaluation}
is available on Zenodo~\cite{zenodo}.

\begin{acks}
This work was supported, in part,
by a gift from
\grantsponsor{00001}{Rajiv and Ritu Batra}{}, and by
\grantsponsor{00003}{NSF}{https://www.nsf.gov/}
under grants
\grantnum{00003}{CCF-2211968}
and
\grantnum{00003}{CCF-2212558}.
Any opinions, findings, and conclusions or recommendations
expressed in this publication are those of the authors,
and do not necessarily reflect the views of the sponsoring
entities.
\end{acks}

\bibliography{bibliography.bib}

\ifthenelse{\boolean{showappendices}}{
  % Code to include appendices
  \appendix
  \section{Regular Expression Case Study}
\label{sec:app:regex}
This section illustrates how our framework can be instantiated for the problem of synthesizing regular expressions.

\subsection{Grammar and Semantics}

The grammar $G_R$ in \Cref{fig:regex-grammar} defines a language of regular expressions and allows regular expressions to contain complement operations.
A typical formulation for the semantics of regular expressions maps a string to a Boolean value that denotes whether or not the string is accepted.
However, this semantics introduces nondeterminism and does not lend itself well to, e.g., efficiently checking if a synthesized regular expression accepts a string.

We instead present a deterministic semantics that, due to its efficiency, is commonly used in hardware accelerators~\cite{loris-micro} (Figure~\ref{fig:regex-sem}).
For a regular expression $r$ and string $s=a_1,\ldots, a_{k-1}$, it outputs a \textit{Boolean matrix} $A$ of size $k \times k$, such that the entry $A_{i,j}$ is set to true (represented as 1) if and only if the substring $a_i,\ldots, a_{j-1}$ is accepted by $r$ (if $i=j$ the substring is the empty string and if $i>j$ the entry $A_{i,j}$ is always 0).
The semantics of the concatenation (union) of two regular expressions is then simply the matrix multiplication (sum) of the corresponding matrices.
Using these two primitives and the identity matrix \textbf{I}, we can define the semantics of $r^*$ as the semantics of the union of up to $k$ concatenations of $r$ with itself.

Using $\textbf{1}$, the matrix in which all elements in the upper triangle---including the diagonal---are 1, and all elements in the lower triangle are 0, we can define the semantics of $\neg r$ as $\textbf{1}$ minus the semantics of $r$ (i.e., the complement of the matrix).

\begin{figure}
    % \label{fig:regex-imp}
    \begin{subfigure}{0.3\textwidth}
        \small{
            \begin{align*}
                S &::=\textit{accepts}(R) \\
                R &::= c \mid \epsilon \mid \emptyset \mid (R \mid R) \mid (R\cdot R) \mid R^* \mid \neg R
            \end{align*}
        \caption{Grammar $G_R$.}
        \label{fig:regex-grammar}
        }
    \end{subfigure}
    \hfill
    \begin{subfigure}{0.3\textwidth}
        \footnotesize{
            \begin{align*}
                \{ (1,\textit{true}), (10,\textit{true}), (111,\textit{true}), \\
                (0,\textit{false}), (00,\textit{false}), (100,\textit{false}) \}
            \end{align*}
        \caption{Examples $\examples^R_1$}
        \label{fig:regex-constraint}
        }
    \end{subfigure}
    \hfill
    \begin{subfigure}{0.3\textwidth}
        \footnotesize{
            \centering
            $(1\cdot(0\mid 1))^*\cdot 1$
            \vspace{4pt}
        \caption{Solution to $\examples^R_1$}
        \label{fig:regex-solution}
        }
    \end{subfigure}
    \vspace{14pt}

    \begin{subfigure}{0.35\textwidth}
        \tiny{
        % \begin{tikzpicture}
        % \node[text width=7cm,draw,inner sep=0.6em](mybox){
            \centering
            \begin{prooftree}
                \AxiomC{$\den{r}(s) = A_1$}
                \AxiomC{$y = (A_{0,l} == 1)$}
                \RightLabel{Accepts}
                \BinaryInfC{$\den{accepts(r)}(s) = y$}
            \end{prooftree}
            \vspace{0.5em}
            \begin{minipage}{0.3\textwidth}
                \begin{prooftree}
                    \AxiomC{$A = \mi$}
                    \RightLabel{Eps}
                    \UnaryInfC{$\dens{\epsilon}(s) = A$}
                \end{prooftree}
            \end{minipage}
            \hspace{0.8cm}
            \begin{minipage}{0.3\textwidth}
                \begin{prooftree}
                    \AxiomC{$A = \mz$}
                    \RightLabel{Empty}
                    \UnaryInfC{$\dens{\emptyset}(s) = A$}
                \end{prooftree}
            \end{minipage}
            \vspace{0.1em}
            % \begin{minipage}
                \begin{prooftree}
                    \AxiomC{$\forall 0 \leq i,j \leq l+1.~ \big(A_{i,j} = (i+1 = j \land s_i = c)\big)$}
                    \RightLabel{Character}
                    \UnaryInfC{$\dens{c}(s) = A$}
                \end{prooftree}
            % \end{minipage}
            \vspace{0.3em}
            \begin{prooftree}
                \AxiomC{$\den{r_1}(s) = A_1 \quad \den{r_2}(s) = A_2 \quad A = A_1 + A_2$}
                \RightLabel{Union}
                \UnaryInfC{$\dens{r_1 \mid r_2}(s) = A$}
            \end{prooftree}
            \vspace{0.2em}
            \begin{prooftree}
                \AxiomC{$\den{r_1}(s) = A_1 \quad \den{r_2}(s) = A_2 \quad A = A_1 \times A_2$}
                \RightLabel{Concat}
                \UnaryInfC{$\dens{r_1 \cdot r_2}(s) = A$}
            \end{prooftree}
            \vspace{0.1em}
            % \centering
            % \begin{minipage}{0.45\textwidth}
            \begin{prooftree}
                \AxiomC{$\den{r}(s) = A_1$}
                \AxiomC{$A = \mi + \sum_{i=1}^k A_1^k$}
                \RightLabel{Star}
                \BinaryInfC{$\dens{r^*}(s) = A$}
            \end{prooftree}
            % \end{minipage}
            \vspace{-0.1em}
            \vspace{\proofspacing}
            % \begin{minipage}{0.45\textwidth}
            \begin{prooftree}
                \AxiomC{$\den{r}(s) = A_1$}
                \AxiomC{$A =\mo - A_1$}
                \RightLabel{Neg}
                \BinaryInfC{$\dens{\neg r}(s) = A$}
            \end{prooftree}
            \vspace{0.4em}
            % \end{minipage}
        %     };
        %     \node[text=gray,anchor=west,fill=white,xshift=0.5em] at (mybox.north west)
        %      {\textsc{Regular Expression Semantics}};
        % \end{tikzpicture}
        \caption{CHC-based semantics}
        \label{fig:regex-sem}
    }
    \end{subfigure}
    \hfill
    \begin{subfigure}{0.6\textwidth}
        \tiny{
            \centering
            \begin{minipage}{0.4\textwidth}
                \begin{prooftree}
                    \AxiomC{$[L,U] = [\bot, \top]$}
                    \RightLabel{Hole$^\sharp_S$}
                    \UnaryInfC{$\dena{\hole_S}([\mathcal{S}]) = [L,U]$}
                \end{prooftree}
            \end{minipage}
            \hspace{0.8cm}
            \begin{minipage}{0.4\textwidth}
                \begin{prooftree}
                    \AxiomC{$[L,U] = [\emptyset, (0 \mid 1)^*]$}
                    \RightLabel{Hole$^\sharp_R$}
                    \UnaryInfC{$\dena{\hole_R}([\mathcal{S}]) = [L,U]$}
                \end{prooftree}
            \end{minipage}
            \begin{prooftree}
                \AxiomC{$\dena{r}([\mathcal{S}]) = [L', U']$}
                \AxiomC{$[L,U] = [(L_{0,l}' == 1), (U_{0,l}' == 1)]$}
                \RightLabel{Accepts$^\sharp$}
                \BinaryInfC{$\dena{\textit{accepts}(r)}([\mathcal{S}]) = [L,U]$}
            \end{prooftree}
            
            \begin{minipage}{0.35\textwidth}
                \begin{prooftree}
                    \AxiomC{$[L,U] = [\mi, \mi]$}
                    \RightLabel{Eps$^\sharp$}
                    \UnaryInfC{$\dena{\epsilon}([\mathcal{S}]) = [L,U]$}
                \end{prooftree}
            \end{minipage}
            \hspace{0.8cm}
            \begin{minipage}{0.35\textwidth}
                \begin{prooftree}
                    \AxiomC{$[L,U] = [\mz, \mz]$}
                    \RightLabel{Empty$^\sharp$}
                    \UnaryInfC{$\dena{\emptyset}([\mathcal{S}]) = [L,U]$}
                \end{prooftree}
            \end{minipage}

            \begin{prooftree}
                \AxiomC{$\forall 0 \leq i,j \leq l+1.~ \big(L_{i,j} = U_{i,j} = (i+1 = j \land s_i = c)\big)$}
                \RightLabel{Character$^\sharp$}
                \UnaryInfC{$\dena{c}([\mathcal{S}]) = [L, U]$}
            \end{prooftree}
            
            \begin{prooftree}    
                \AxiomC{$\dena{r_1}([\mathcal{S}]) = [L_1,U_1] \quad \dena{r_2}([\mathcal{S}]) = [L_2,U_2] \quad [L,U] = [L_1 + L_2,\ U_1 + U_2]$}
                \RightLabel{Union$^\sharp$}
                \UnaryInfC{$\dena{r_1 \mid r_2}([\mathcal{S}])=[L,U]$}
            \end{prooftree}
            
            \begin{prooftree}    
                \AxiomC{$\dena{r_1}([\mathcal{S}]) = [L_1,U_1]$ \quad $\dena{r_2}([\mathcal{S}]) = [L_2,U_2] \quad [L,U] = [L_1 \times L_2,\ U_1 \times U_2]$}\RightLabel{Concat$^\sharp$}
                \UnaryInfC{$\dena{r_1\cdot r_2}([\mathcal{S}])=[L,U]$}
            \end{prooftree}
        
            \begin{prooftree}    
                \AxiomC{$\dena{r}([\mathcal{S}]) = [L_1,U_1]$}
                \AxiomC{$[L,U] = [\mi+\Sigma_{k=1}^{\ell} L_1^k,\ \mi+\Sigma_{k=1}^{\ell} U_1^k]$}
                \RightLabel{Star$^\sharp$}
                \BinaryInfC{$\dena{r^*}([\mathcal{S}])=[L,U]$}
            \end{prooftree}
            \begin{prooftree}
                \AxiomC{$\dena{r}([\mathcal{S}]) = [L_1, U_1]$}
                \AxiomC{$[L, U] = [\mo - U_1, \mo - L_1]$}
                \RightLabel{Neg$^\sharp$}
                \BinaryInfC{$\dena{\neg r}([\mathcal{S}]) = [L,U]$}
            \end{prooftree}
            
        \caption{Abstract semantics}
        \label{fig:regex-sem-abs}
        }
    \end{subfigure}
    \caption{An example-based \semgus problem for regular expressions (\Cref{fig:regex-grammar,fig:regex-sem,fig:regex-constraint}), and an sound abstract semantics for the grammar $G_R$ (\Cref{fig:regex-sem-abs}). We denote $[\mathcal{S}]$ to denote the input interval over strings $s$.}
    \label{fig:regex-example}
\end{figure}

The semantics $\den{\cdot}_R$ has type $M^{(l+1) \times (l+1)} \times M^{(l+1) \times (l+1)} \to M^{(l+1) \times (l+1)}$, which is a map over matrices of size $(l+1) \times (l+1)$, which will encode the set of accepted substrings.
The semantics $\den{\cdot}_S$ has type $M^{(l+1) \times (l+1)} \rightarrow \textit{Boolean}$, a map from a matrix to Boolean type (true/false).
The one production associated with nonterminal $S$, \textit{accepts(R)}, takes in a regular expression represented by nonterminal $R$ and evaluates if that regular expression accepts an input string $s$.
We take this approach to standardize the format, where all examples are constraints that evaluate to true or false.

\subsection{Interval Abstract Semantics}
The semantics defined in \Cref{fig:regex-sem-abs} is a sound interval abstract semantics for $G_R$.
The intervals are defined over matrices, ordered by the relation $A \preceq_M B =_\textit{df} (\bigwedge_{i,j} (A_{ij} \implies B_{ij}))$, as well as the intervals of strings lifted from the input string corresponding to $s$.
The matrix relation captures language inclusion for strings of length $l$: $r_1 \preceq r_2$ iff $\forall s \in \Sigma^l.\ s \in r_1 \implies s \in r_2$.
The partial order over strings is a \textit{substring} order, where $s_1 \preceq_\mathcal{S} s_2$ if and only if $s_2$ has $s_1$ as a prefix, i.e. $s_2 = s_1 + s_i$ for some $s_i$. 
We also have special $\bot, \top$ elements such that $\bot \preceq_\mathcal{S} s \preceq_\mathcal{S} \top$ for any string $s$.

Notice how all of the semantic rules in Figure~\ref{fig:regex-sem} have corresponding abstract semantic rules in Figure~\ref{fig:regex-sem-abs} that are just the rules lifted to intervals.
To preserve the sound-approximation property established in Theorem~\ref{thm:approx}, the abstract semantics of holes $\hole_S$ and $\hole_R$ must cover the set of all states that any derivation from the hole can take on a given input.
Trivially, we do so by setting $\dena{\hole_S}([L,U]) = [\bot, \top]$ and $\dena{\hole_R}([L,U]) = (0\mid1)^*$ (i.e., we assign each hole the widest possible interval abstract semantics), which is what we used in Figure~\ref{fig:regex-sem-abs}.

\begin{example}[Pruning]
    Consider the example partial regular expression from \Cref{sec:overview}.
    We first show how to compute the interval abstraction for $\dena{accepts(0 \cdot \hole_R)}$ on the example $(1, \textit{true})$ using the abstract semantics from Figure~\ref{fig:regex-sem-abs} for $l = 2$.
    For notational simplicity, we write the matrices in our intervals simply as the regular expressions themselves to which these matrices correspond, and include the corresponding matrices below.
    We first start with $\dena{0}(1) = [0, 0]$, an interval over regular expressions.
    We also compute $\dena{\hole_R}(1) = [\emptyset, (0 \mid 1)^*]$.
    Compositionally, we apply these two rules and the associated CHC for the abstract semantics of Concat to get $\dena{0 \cdot \hole_R}(1) = [0 \cdot \emptyset, 0 \cdot (0 \mid 1)^*] = [\emptyset, 0 \cdot (0 \mid 1)^*]$.
    Finally, we apply the abstract semantics of Accepts to get $\dena{accepts(0 \cdot \hole_R)}(1) = [\bot, \bot]$, as
    the matrix for $\emptyset$ is the all-zero matrix $\mz$, and the matrix for $0 \cdot (0 \mid 1)^*$ on input string $1$ has a zero in its upper-right entry.
    Thus, because the required output $\textit{true}$ for positive example $(1, \textit{true})$ is outside of the interval $[\bot, \bot]$ obtained from the abstract semantics, we can prune away the partial program $0 \cdot \hole_R$.
    The abstract semantics, including the $3 \times 3$ upper-triangular matrices computed for a string length of $l = 2$, are explicitly shown below:
    \begin{align*}
        \dena{0}(1) = [\mz, \mz] &= \left[\begin{bmatrix}0 & 0 & 0\\ & 0 & 0 \\ && 0 \end{bmatrix}, \begin{bmatrix}0 & 0 & 0\\ & 0 & 0 \\ && 0 \end{bmatrix} \right] \\
        \dena{\hole_R}(1) = [\mz, \mo] &= \left[\begin{bmatrix}0 & 0 & 0\\ & 0 & 0 \\ && 0 \end{bmatrix}, \begin{bmatrix}1 & 1 & 1\\ & 1 & 1 \\ && 1 \end{bmatrix} \right] \\
        \dena{0 \cdot \hole_R}(1) = [\mz \times \mz, \mz \times \mo] &= \left[\begin{bmatrix}0 & 0 & 0\\ & 0 & 0 \\ && 0 \end{bmatrix}, \begin{bmatrix}0 & 0 & 0\\ & 0 & 0 \\ && 0 \end{bmatrix} \right] \\
        \dena{accepts(0 \cdot \hole_R)}(1) &= [\bot, \bot]
    \end{align*}
\end{example}
  \section{Generalizing Enumeration of Finite Domains}
\label{sec:app:enum-domain}
In~\Cref{sec:nearly-monotonic}, we discussed how to extend our analysis of monotonicity to situations such as if-then-else, where the function may not be totally monotonic but where we can still compute abstract semantics.
We generalize this idea in the following way:

\begin{definition}[Joined Interval]
\label{def:chc:nearlymono}
Suppose that $f(x, y_1, \ldots, y_n)$ is monotonic with respect to all arguments $x, y_1, \ldots, y_n$ except $y_k$.
For this $y_k$, denote the domain that it belongs to as $D_k$.
The \textit{joined interval over $y_k$} is the interval generated by taking the join across all possible instantiations of $y_k \in D_k$:
\begin{equation}
\label{eq:nearlymono}
    \tilde{f}(I_0, \ldots, I_n) = \bigsqcup_{v_k \in I_k \subseteq D_k} \hat{f}_{v_k}(I_0, \ldots, I_{k-1}, I_{k+1}, \ldots I_n)
\end{equation}
where $\hat{f}_{v_k}$ is the endpoint extension from~\Cref{thm:mono-interval-abstraction} applied on the induced function resulting from evaluating $y_k$ concretely as $v_k$. 
\end{definition}

The key is that if $y_k$ belongs to a finite domain $D_{k}$, then the join in~\Cref{eq:nearlymono} is finite and can be easily calculated!
We can now update our abstract semantics defined in~\Cref{def:chc:compilation} by redefining $f^\sharp$.
As before, if $f$ is monotone, let $f^\sharp$ be defined as $\hat{f}$ (as in \Cref{thm:mono-interval-abstraction}).
If $f$ is monotonic with respect to all arguments except $y_{k_1}, \ldots, y_{k_i}$, and $D_{k_1}, \ldots, D_{k_i}$ are finite, then we define $f^\sharp$ as taking the joined interval $\tilde{f}$ (as in \Cref{def:chc:nearlymono}) on each of $y_{k_1}, \ldots, y_{k_i}$.
If neither of the above conditions hold, define it as $f^\sharp(y_1,\ldots,y_n)=\top$.
This new abstract semantics still remains a sound interval abstract semantics, as the following theorem establishes.

\begin{restatable}[Soundness of New Endpoint Interval Semantics]{theorem}{newintervalsoundness}
\label{thm:new-soundness-endpoint-semantics}
Consider redefining the semantics from~\Cref{thm:soundness-endpoint-semantics}, where we update~\Cref{def:chc:compilation} with our new $f^\sharp$ from above.
This new redefined semantics is still a \textit{sound interval abstract semantics}.
\end{restatable}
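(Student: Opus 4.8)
The plan is to reduce the claim to Theorem~\ref{thm:soundness-endpoint-semantics}, since the only ingredient of that theorem's proof which changes is the fact that the per-production abstract transformer $f^\sharp$ soundly abstracts the concrete function $f$. I would first recall the shape of the proof of Theorem~\ref{thm:soundness-endpoint-semantics}: it is a structural induction on the derivation $P \mapsto^* P'$ (equivalently, on the syntax of $P\in\LL(G_{int})$), where the base cases are holes---discharged by soundness of the supplied hole abstractions (Definition~\ref{def:hole:abstraction})---and the inductive step for a production $N_0\to\rho(N_1,\ldots,N_n)$ with CHC $Op$ combines two facts: (i) the inductive hypothesis that each $\denanon{N_j}{\cdot}$ soundly abstracts $\dennon{N_j}{\cdot}$, and (ii) that $f^\sharp$, applied to intervals that contain the concrete sub-results, returns an interval containing $f$ of those sub-results. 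Fact (i), the least-fixpoint argument for the CHC system, the handling of $\varphi_=$ (which only renames/copies components), and the hole cases are all unaffected by the redefinition of $f^\sharp$, so the entire task reduces to re-establishing (ii) for the new $f^\sharp=\tilde f$.

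The key lemma is therefore: if $f:Y_0\times\cdots\times Y_n\to Y$ is monotone in every argument except those indexed by $K=\{k_1,\ldots,k_i\}$, each $D_{k_t}$ is finite, and $\tilde f$ is the transformer obtained by iterating the joined-interval construction of Definition~\ref{def:chc:nearlymono} over $y_{k_1},\ldots,y_{k_i}$, then for all intervals $[l_j,u_j]$ and all $x_j$ with $l_j\preceq_j x_j\preceq_j u_j$ we have $f(x_0,\ldots,x_n)\in\tilde f([l_0,u_0],\ldots,[l_n,u_n])$. To prove it, fix the concrete witnesses $v_{k_t}:=x_{k_t}$; each lies in its own input interval, hence in the range over which the corresponding join in \Cref{eq:nearlymono} is taken. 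Freeze those arguments of $f$ to $v_{k_1},\ldots,v_{k_i}$ to obtain the induced function $g$. By Definition~\ref{def:monotonicity}, freezing a subset of arguments to constants leaves the remaining freezing functions of $g$ equal to freezing functions of $f$, so $g$ is monotone in each of its remaining arguments; Theorem~\ref{thm:mono-interval-abstraction} then says the endpoint extension $\hat g=\hat f_{v_{k_1},\ldots,v_{k_i}}$ is a sound interval abstraction of $g$. Hence $f(x_0,\ldots,x_n)=g\big((x_j)_{j\notin K}\big)\in\hat f_{v_{k_1},\ldots,v_{k_i}}\big(([l_j,u_j])_{j\notin K}\big)$, and this interval is precisely one of the terms occurring in the (possibly nested) join that defines $\tilde f$. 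Because a binary join is an upper bound in the interval order---$[a,b]\sqsubseteq[a,b]\sqcup[c,d]$, so every concrete element of $[a,b]$ is a concrete element of $[a,b]\sqcup[c,d]$---applying this once per level of the iterated join gives $f(x_0,\ldots,x_n)\in\tilde f([l_0,u_0],\ldots,[l_n,u_n])$. Associativity and commutativity of $\sqcup$ make the iteration order immaterial, and finiteness of the $D_{k_t}$ ensures each join is finite and well-defined.

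With the lemma in hand, I would re-run the induction of Theorem~\ref{thm:soundness-endpoint-semantics} verbatim: in the production case the inductive hypotheses put the concrete sub-results inside the abstract sub-results, the lemma puts $f$ of the concrete sub-results inside $f^\sharp$ of the abstract ones, and the cases $f^\sharp=\hat f$ (fully monotone) and $f^\sharp=\top$ (neither condition holds) are handled exactly as before; the resulting semantics then satisfies Definition~\ref{def:abs-sem}. The main obstacle is the bookkeeping for more than one non-monotone argument---checking that the concrete witness $(x_{k_1},\ldots,x_{k_i})$ really survives to the innermost term of the nested join and that each outer join step only enlarges the set of concretized values. This is a routine consequence of $\sqcup$ being an upper bound together with the transfer of monotonicity under freezing, but it should be spelled out carefully; everything else is a direct reuse of Theorems~\ref{thm:mono-interval-abstraction} and~\ref{thm:soundness-endpoint-semantics}.
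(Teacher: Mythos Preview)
Your proposal is correct and follows essentially the same approach as the paper: establish that $\tilde f$ is a sound interval abstraction of $f$ by freezing the non-monotone argument(s) to the concrete witness, applying Theorem~\ref{thm:mono-interval-abstraction} to the induced monotone function, and then using that the join only enlarges the interval, after which the rest of the proof of Theorem~\ref{thm:soundness-endpoint-semantics} carries over unchanged. The paper's proof simplifies to a single non-monotone argument $y_k$ and is terser, whereas you spell out the multi-argument bookkeeping and the structural induction more explicitly, but the strategy is the same.
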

  \section{Theorems and Proofs}
This section repeats the theorems used in the paper and formally proves their results.

\pruning*
\begin{proof}
    Suppose $\dena{\cdot}$ is a sound interval abstract semantics for $G$, $P$ is a partial program such that $\textsc{Prune}(P, \examples)$ return \textsc{True}, and $P'$ is a complete program derived from $P$, meaning $P \mapsto^* P'$.
    From line~\ref{alg:validprune}, this happens if there is some $(i_k, o_k) \in \examples$ such that $o_k \not \in \dena{P}(i_k)$.
    Also, since $\dena{\cdot}$ is a sound interval abstract semantics, $\den{P'}(x) \in \dena{P}([l,u])$ (Definition~\ref{def:abs-sem}).
    Thus, $\den{P'}(i_k) \in \dena{P}([i_k, i_k])$.
    However, we showed earlier that $o_k \not \in \dena{P}(i_k)$, meaning $\den{P'}(i_k) \neq o_k$.
    Therefore, there doesn't exist any $P'$ derived from $P$ where $P \vdash \examples$.
\end{proof}

\monotonicity*
\begin{proof}
    We first prove soundness. 
    Consider a particular argument $i \in [0..n]$, an interval $[l_i, u_i]$, and element $x_i \in [l_i, u_i]$.
    First, suppose $f$ was increasing in its $i$-th argument, meaning $m_i = \monup$. 
    Fixing all the other arguments, we know that for constants $c_1, \ldots, c_n$, $f(c_1, \ldots, c_{i-1}, l_i, c_{i+1}, \ldots, c_n) \preceq f(c_1, \ldots, c_{i-1}, x_i, c_{i+1}, \ldots, c_n) \preceq f(c_1, \ldots, c_{i-1}, u_i, c_{i+1}, \ldots, c_n)$ (Definition~\ref{def:monotonicity}).
    For notational convenience and for clarity, we write this situation as $f(\ldots, l_i, \ldots) \preceq f(\ldots, x_i, \ldots) \preceq f(\ldots, u_i, \ldots)$, and elide the constants.
    Additionally, since $m_i = \monup$, $a_{m_i}(l_i, u_i) = a_{\monup}(l_i, u_i) = l_i$.
    Thus, $l = f(\ldots, a_\monup(l_i, u_i), \ldots) \preceq f(\ldots, x_i, \ldots)$.
    Similarly, if we suppose $f$ is decreasing in its $i$-th arguments, then $l = f(\ldots, a_{\mondown}(l_i, u_i), \ldots) = f(\ldots, u_i, \ldots) \preceq f(\ldots, x_i, \ldots)$.
    A symmetric argument can be made to show that $f(\ldots, x_i, \ldots) \preceq f(\ldots, b_{m_i}(l_i, u_i), \ldots) = u$.
    Thus, for any $i \in [0..n]$:
    \begin{equation}
        \label{eq:argumentwise-monotonicity}
        f(\ldots, l_i, \ldots) \preceq f(\ldots, x_i, \ldots) \preceq f(\ldots, u_i, \ldots)
    \end{equation}

    We can then proceed to prove soundness as follows:
    Since $f$ is monotone, it is monotonic on every argument $i \in [0..n]$.
    First, we fix some set of constants $c_1 \in Y_1, \cdots, c_n \in Y_n$.
    Applying Equation~\ref{eq:argumentwise-monotonicity} for $i=0$ and fixing the remaining arguments as our constants $c_i$ gives us that $f(l_0, c_1, c_2, \ldots, c_n) \preceq f(x_0, c_1, c_2, \ldots, c_n) \preceq f(u_0, c_1, c_2, \ldots, c_n)$.
    We then iteratively replace constants with our result in Equation~\ref{eq:argumentwise-monotonicity}:
    \begin{gather*}
        f(l_0, c_1, c_2, \ldots, c_n) \preceq f(x_0, c_1, c_2, \ldots, c_n) \preceq f(u_0, c_1, c_2, \ldots, c_n) \\
        f(l_0, l_1, c_2, \ldots, c_n) \preceq f(x_0, x_1, c_2, \ldots, c_n) \preceq f(u_0, u_1, c_2, \ldots, c_n) \\
        \vdots \\
        f(l_0, l_1, l_2, \ldots, l_n) \preceq f(x_0, x_1, x_2, \ldots, x_n) \preceq f(u_0, u_1, u_2, \ldots, u_n)
    \end{gather*}
    The last line proves our soundness result.

    We now prove precision by showing the existence of $x_0^l, \ldots, x_n^l$ and $x_0^u, \ldots, x_n^l$ constructively.
    Consider some argument $i \in [0..n]$. 
    Suppose $f$ is monotonically increasing on its $i$-th argument, meaning $\hat f(\ldots, [l_i, u_i], \ldots) = [f(\ldots, l_i, \ldots), f(\ldots, u_i, \ldots)]$ (where the constants are elided).
    The specific $x_i^l$ and $x_i^u$ are precisely the lower and upper endpoints of the recipe.
    Specifically, let $x_i^l = l_i$ and $x_i^u = l_u$.
    And in the case of $f$ monotonically decreasing on its $i$-th argument, then let $x_i^l = u_i$ and $x_i^u = l_i$.
    If we take this corresponding $x_i^l$ and $x_i^u$ for each of arguments, then $f(x_0^l, \ldots, x_n^l) = l$ and $f(x_0^u, \ldots, x_n^u) = u$.
    
\end{proof}

\intervalsoundness*
\begin{proof}
    From Definition~\ref{def:chc:compilation} and Theorem~\ref{thm:mono-interval-abstraction}, we know that $f^\sharp$ is a sound interval abstraction for $f$.
    This proves that $\dena{\cdot}$ is a sound abstract semantics for $\den{\cdot}$ on the subset $L(G)$ of concrete terms in $L(G_{int})$.
    Definition~\ref{def:hole:abstraction} completes the proof by providing $Hole_N$ a sound hole semantics, as the combination gives a sound interval abstraction to all productions in $G_{int}$ and therefore compositionally to all terms $t \in G_{int}$.
\end{proof}

\preciseholeabstractions*
\begin{proof}
    By~\Cref{def:gfa:eqns}, interval grammar flow analysis is defined as $\denanon{N}{\hole_N}(x) \sqsupseteq \bigsqcup \{ \dena{p}(x) \mid (N \to p) \in G\} \text{ for all } N \in G.$
    By~\Cref{def:abs-sem}, we know that for any partial program $P \in \LL(G_{int})$ that $\forall [l,u], \forall l\preceq x\preceq u, \forall P' \in \LL(G), P \mapsto^* P' \Rightarrow \dennon{S}{P'}(x) \in \denanon{S}{P}([l,u])$.
    Since the semantics $\dena{p}$ of production $N \to p$ is of the (partial) program $P$ with all of its entries as holes, $\forall P' \in \LL(N)$ such that $P \mapsto^* P'$, $\den{P'}_N(x) \in \dena{P}_N([x,x])$ by~\Cref{def:abs-sem}.
    $\dena{P}_N([x,x]) \sqsubseteq \bigsqcup \{ \dena{p}(x) \mid (N \to p) \in G\}$, since $\dena{P}_N([x,x]) = \dena{p}(x)$. 
    Finally, $\bigsqcup \{ \dena{p}(x) \mid (N \to p) \in G\} \sqsubseteq \denanon{N}{\hole_N}(x)$ by~\Cref{def:gfa:eqns}.
    Combining these three results in a chain proves our result.
\end{proof}

\newintervalsoundness*
\begin{proof}
    For simplicity, we consider the case where only a single variable $y_k$ is non-monotonic, as in~\Cref{def:chc:nearlymono}.
    Recall from~\Cref{thm:mono-interval-abstraction} that $\hat{f}$ is a sound interval abstraction for $f$.
    Then we know that each $\hat{f}_{v_k}$ satisfies $\forall x_i \in [l_i, u_i]. f(x_0, \ldots, x_{k-1}, v_k, x_{k+1}, \ldots, x_n) \in \hat{f}_{v_k}([l_0, u_0], \ldots, [l_{k-1}, u_{k-1}], [l_{k+1}, u_{k+1}], \ldots, [l_n, u_n])$ since $f$ was concretely evaluated on $y = v_k$.
    Since each $v_k \in [l_k, u_k]$, then $\forall x_i \in [l_i, u_i]. f(x_0, \ldots, x_n) \in \bigsqcup_{v_k \in [l_k, u_k]} \hat{f}_{v_k}([l_0, u_0], \ldots, [l_{k-1}, u_{k-1}], [l_{k+1}, u_{k+1}], \ldots, [l_n, u_n])$.
    Therefore, $\forall x_i \in [l_i, u_i]. f(x_0, \ldots, x_n) \in \tilde{f}([l_0, u_0, \ldots, [l_n, u_n])$.
    The remainder of the proof exactly follows from the same steps as~\Cref{thm:soundness-endpoint-semantics}.
\end{proof}

\gfatermination*
\begin{proof}
    % \rahul{need to fix this proof now that domain is over intervals, not bounds}
    Suppose the interval bounds $[l_N, u_N]$ belong to a domain $\mathcal{D}$ with no infinite descending chains.
    Assume, by contradiction, that we have an algorithm $\mathcal{A}$ that iteratively computes tighter bounds to some $l_N, u_N$ but performs an infinite number of iterations.
    This means that it will solve for $[l_N, u_N] \sqsubsetneq \dena{\hole_N}_N$.
    Since the semantics is an interval, define $\dena{\hole_N}_N = [l', u']$.
    For $[l', u'] \sqsubset [l_N, u_N]$, this means that $l' \preceq l_N \preceq u_N \preceq u'$ and $l' \prec l_N$ or $u_N \prec u'$ (where the order on domain elements $\prec$ is that induced by the interval order $\sqsubset$).
    At every iteration, it takes the old bounds $l'$ and finds an $l_N$ such that $l' \prec l_N$.
    Since it performs an infinite number of iterations, the sequence of these solved bounds is infinite: $l' \preceq l_1 \preceq l_2 \preceq \cdots$ and $\cdots \preceq u_n \preceq u_2 \preceq u_1 \preceq u'$.
    However, this is a contradiction, as then an infinite descending chain can be created as $[l',u'] \sqsupsetneq [l_1, u_1] \sqsupsetneq [l_1, u_2],\sqsupsetneq \cdots$.
\end{proof}
  \section{Additional Evaluation}

A summary of benchmark results by solver and domain are shown in Table~\ref{tab:summary-by-domain}.

\begin{table}[H]
    \centering
    \begin{tabular}{c|c|ccc|c}
\toprule[.1em]    
        Domain & Total \# & \baseline & \toolnamenogfa & \toolnamegfa & \virtualbest \\
        \midrule[.1em]
        Boolean          & 238  & 118 & 124 & 124 & 124 \\
        Bit Vectors      & 100 & 38 & 44 & 44 & 44 \\
        Matrix Regex     & 39  & 28 & 29 & 29 & 29 \\
        Shallow Regex    & 26  & 18 & 20 & 20 & 20 \\
        CSV              & 17  &  8 & 10 & 15 & 15 \\
        Other Imperative & 10  & 8 & 10 & 10 & 10 \\
        \midrule[.1em]
        Total            & 430 & 217 & 236 & 241 & 241 \\
  \bottomrule[.1em]
\bottomrule[.1em]      
    \end{tabular}    
    \caption{Solved benchmarks for each category}
    \label{tab:summary-by-domain}
\end{table}

  % ... other appendix chapters
}{
  % Code to replace references with some text
  \renewcommand{\refname}{see supplemental material}
}

\end{document}